\documentclass{article}

\usepackage[T1]{fontenc}
\usepackage{palatino}
\usepackage{fullpage}
\usepackage{graphicx}
\usepackage{amsmath}
\usepackage{amsfonts}
\usepackage{amssymb}
\usepackage[dvipsnames,svgnames,x11names,hyperref]{xcolor}
\definecolor{lkcol}{RGB}{130,70,200}
\usepackage[hidelinks,colorlinks=true,linkcolor=lkcol,citecolor=blue,linktocpage,pdfusetitle]{hyperref}
\usepackage{cite}
\usepackage{csquotes}
\usepackage{caption}
\usepackage{subcaption}
\usepackage{nicematrix}
\usepackage{mathtools}
\usepackage{tabularray}
\usepackage{float}
\usepackage{amsthm}
\usepackage{dsfont}
\usepackage{calc}
\usepackage{fancyhdr}
\usepackage{titling}
\usepackage{centernot}
\usepackage{orcidlink}
\usepackage[boxsize=5pt]{ytableau}
\usepackage{comment}
\usepackage{bbold}
\usepackage{soul}
\usepackage{titling}
\posttitle{\par\end{center}\vskip 2cm}

\numberwithin{equation}{section}

\newcommand{\IC}{\mathbb{C}}
\newcommand{\C}{\mathbb{C}}

\newcommand{\IQ}{\mathbb{Q}}
\newcommand{\Q}{\mathbb{Q}}
\newcommand{\IR}{\mathbb{R}}
\newcommand{\IZ}{\mathbb{Z}}

\newcommand{\Sk}{\mathrm{Sk}}

\newcommand\sfW{\mathsf{W}}
\newcommand\sfP{\mathsf{P}}
\newcommand\sfPsi{\mathsf{\Psi}}

\def\be{\begin{equation}}
\def\ee{\end{equation}}

\def\IR{{\mathbb{R}}}
\def\R{{\mathbb{R}}}

\def\IZ{{\mathbb{Z}}}
\def\IP{{\mathbb{P}}}
\def\IC{{\mathbb{C}}}
\def\C{{\mathbb{C}}}

\def\CL{\mathcal{L}}

\def\CO{\mathcal{O}}

\def\CN{{\mathcal{N}}}

\def\Li{\mathrm{Li}}

\def\disk{{\mathrm{disk}}}

\def\disk{{\rm{disk}}}

\def\bd{\boldsymbol{d}}

\def\gcd{{\mathrm{gcd}}}

\def\Trop{\Sigma^{\mathrm{Trop}}}

\newtheorem{theorem}{Theorem}
\newtheorem{remark}[theorem]{Remark}

\newtheorem{corollary}[theorem]{Corollary}

\newtheorem{proposition}[theorem]{Proposition}

\numberwithin{theorem}{section}

\usepackage{authblk}

\title{
Open strings on knot complements
}
\author[1,2,3]{Sachin Chauhan\orcidlink{0000-0003-1045-3428}\thanks{sachin.chauhan@math.uu.se}}
\author[1,2,4]{Tobias Ekholm\orcidlink{0000-0003-2060-8462}\thanks{tobias.ekholm@math.uu.se}}
\author[1,2,3]{Pietro Longhi\orcidlink{0000-0002-5558-0386}\thanks{pietro.longhi@physics.uu.se}}
\affil[1]{Centre for Geometry and Physics, Uppsala University, Box 516, 751 20 Uppsala, Sweden}
\affil[2]{ Department of Mathematics, Uppsala University, Box 480, 751 06 Uppsala, Sweden}
\affil[3]{Department of Physics and Astronomy, Uppsala University, Box 516, 751 20 Uppsala, Sweden}
\affil[4]{Institut Mittag-Leffler, Aurav 17, 182 60 Djursholm, Sweden}
\date{}                     
\fancypagestyle{firstpage}
{
    \fancyhead[L]{}    
    \fancyhead[R]{UUITP-32/25}
}

\begin{document}

\maketitle
\thispagestyle{firstpage}

\begin{abstract}
Using skein valued holomorphic curve counting techniques, we give a flow loop formula for the skein valued partition function of the Lagrangian knot complement of a fibered knot (of the $A$-model open topological strings with Lagrangian $A$-branes wrapping the complement) in the cotangent bundle of the three-sphere and in the resolved conifold. For torus knots we show that the partition function in the cotangent bundle localizes on two or three holomorphic annuli and give a corresponding generalized quiver structure for the partition function in the resolved conifold.

We connect the formula to the augmentation curve, the representation variety of the knot contact homology algebra of the knot,  generated by Reeb chords of its Legendrian conormal and with differential given by holomorphic disks interpolating between words of Reeb chords. The curve admits a quantization as a $q$-difference equation for the generating function of symmetrically colored HOMFLYPT-polynomials of the knot or, geometrically, for the $U(1)$-partition function of the knot conormal. For $(2,2p+1)$-torus knots we show that, after a change of variables, the partition function of the knot complement also satisfies this $q$-difference equation. This gives another geometrically defined coordinate chart for the $D$-module defined by the quantized augmentation polynomial. 

\end{abstract}

\newpage

\tableofcontents

\section{Introduction}


The topological $A$-model is a BPS sector of string theory that corresponds to worldsheets mapping to holomorphic curves in some K\"ahler target space $X$. The model can be enriched by introducing $A$-brane boundary conditions where worldsheets can end, so that one counts maps from some Riemann surface with boundary into $X$ with the boundary mapping to a Lagrangian $A$-brane. 

This subject has a, by now, long history in physics, where explicit computations were carried out using combinations of dualities (large $N$, 3d-3d, etc.) and mirror symmetry, see e.g., \cite{Aganagic:2000gs, Aganagic:2001nx, Aganagic:2003db, Ooguri:1999bv}. However, a general formulation of deformation invariant \emph{open Gromov-Witten theory} was missing until recently when \cite{Ekholm:2019yqp} showed that if holomorphic curves in a symplectic Calabi-Yau 3-fold with Maslov zero Lagrangian boundary are counted by their values in the HOMFLYPT skein module of the Lagrangian boundary condition then the curve count is deformation invariant.  
Skein valued curve counts recover the open string partition function and explain the connection to augmentation varieties from knot contact homology \cite{Aganagic:2013jpa} and their quantization (corresponding quantum curves) as well as corresponding phenomena for higher rank local systems on $A$-branes, see \cite{Ekholm:2018iso, Ekholm:2021osm, Ekholm:2020csl, Ekholm:2024ceb}.

In this paper we consider Lagrangian knot complements $M_K = S^3\setminus K$ of fibered knots $K$ in $T^\ast S^3$ and after transition to the resolved conifold in $X$, the total space of the bundle $\CO(-1)\oplus \CO(-1)\to\C P^1$. In physics language this means we consider the open string $A$-model in $X$ with an $A$-brane supported, at weak string coupling, on $M_K$.

We show that the open string partition function can be evaluated by a localization argument. Let $\beta$ be a generic  closed nowhere vanishing $1$-form on $M_K$. Let $\mathcal{F}$ be the set of closed flow loops of $\beta$. If $\gamma\in\mathcal{F}$ define the action of $\gamma$ to be $\mathfrak{a}(\gamma)=\int_\gamma\beta$. Then for any $\mathfrak{a}_0>0$ the subset $\{\gamma\in\mathcal{F}\colon \mathfrak{a}(\gamma)<\mathfrak{a_0}\}$ is finite. Shifting $M_K$ by $\beta$ in $T^\ast S^3$ gives a Lagrangian in $T^\ast S^3\setminus S^3$ and to each flow loop $\gamma\in \mathcal{F}$ we associate an (approximately) holomorphic annulus $A(\gamma)$ stretching between $S^3$ and $M_K$, lying above $\gamma$. We write $\partial A(\gamma)$ for the two boundary components of $A(\gamma)$ which are copies of $\gamma$ in $M_K$ and $S^3$ oriented as the boundary of $A(\gamma)$, i.e., in opposite directions

Fix orientations and spin structures on $S^3$ and on $M_K$. The orientations and spin structures together with the linearized return map of a flow along $\gamma$ determine the type of holomorphic annulus $A(\gamma)$. There are four types, see Section \ref{ssec:fibered and flow loops}, and associated to each type there is a skein valued partition function 
\[
\sfPsi_{\mathrm{ann}}^{(\sigma,\epsilon)} \in \Sk(S^1\times D^2)^{\otimes2},\quad (\sigma,\epsilon)=(\pm 1,\pm 1),
\]
see \eqref{eq:annulus partition function}. 

If $\gamma$ is a flow loop in $M_K$, and $\eta\in \Sk(S^1\times D^2)^{\otimes2}$ we write $\iota_{\partial A(\gamma)}(\eta)$ for the skein element in $M_K\sqcup S^3$ obtained by inserting $\eta$ in a tubular neighborhood around $\partial A(\gamma)$.  

\begin{theorem}\label{t:skeinvalued knot complement}
Let $M_K$ be the knot complement of a fibered knot $K$ with fixed orientation and spins structure. Pick a generic nowhere vansishing 1-form on $M_K$ and let $\mathcal{F}$ denote the set of flow loops.
   The skein valued partition function of $M_K$ is given by 
   \begin{equation}\label{eq:localization-formula-skein}
   \mathsf{Z}_{(T^\ast S^3,M_K)}=\prod_{\gamma\in\mathcal{F}} \iota_{\partial A(\gamma)}(\sfPsi^{(\sigma(\gamma),\epsilon(\gamma))}_{\mathrm{ann}}),
   \end{equation}
   where the product simply means, consider the union of all insertions in the skein of $M_K\sqcup S^3$. Note that $\mathsf{Z}_{(T^\ast S^3,M_K)}$ is naturally filtered by action and that below any given action, $\mathsf{Z}_{(X,M_K)}$ is a skein element which is represented by a finite linear combination of links.

   Furthermore, the skein valued partition function of $M_K$ in $X$ is obtained by taking the HOMFLYPT polynomials of the skein elements in $S^3$, writing $H_{S^3}$ for the HOMFLYPT polynomial:
   \[
   \mathsf{Z}_{(X,M_K)}=(H_{S^3}\otimes 1)\circ \mathsf{Z}_{(T^\ast S^3,M_K)}.
   \]
\end{theorem}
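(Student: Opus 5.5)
The plan is to compute $\mathsf{Z}_{(T^\ast S^3,M_K)}$ by deforming $M_K$ in its Lagrangian isotopy class and then invoking the deformation invariance of skein valued curve counts \cite{Ekholm:2019yqp}. Since $K$ is fibered, $M_K$ carries a closed nowhere vanishing $1$-form $\beta$; we may take the fibration form, perturbed to be generic. Any two such generic choices are connected through closed nowhere vanishing forms, so the resulting count will be independent of the choice.

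\textbf{Step 1 (neck stretching).} Replace $M_K$ by its graph $M_K^{\beta}=\Gamma_{\beta}\subset T^\ast S^3$. Because $\beta$ has no zeros, $\Gamma_{\beta}$ is disjoint from the zero section. Rescale $\beta\to t\beta$ and look at holomorphic curves with boundary on $S^3\sqcup M_K^{t\beta}$ as $t\to 0$. Using Morse--Bott/Fukaya--Oh type adiabatic limits, or equivalently stretching along the unit conormal boundary, curves of bounded energy converge to gradient flow trees of the pair $(0,\beta)$. The energy of such a limit is the action $\mathfrak{a}(\gamma)=\int_\gamma\beta$. Because $\beta$ has no zeros there are no vertices or constant pieces, so the only rigid limits are closed flow loops $\gamma\in\mathcal F$ together with their multiple covers.

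\textbf{Step 2 (local model).} Near each $\gamma$, the pair $(S^3,M_K^{t\beta})$ is modelled on a neighborhood of a pair of Lagrangian solid tori in $T^\ast(S^1\times D^2)$, glued via the linearized return map. The four possible local models are indexed by the return-map type, which determines the sign $\sigma$ (orientation/Lefschetz sign), and by the spin structure twist $\epsilon$. By definition, \eqref{eq:annulus partition function} is the full skein valued count, including all multiple covers and higher genus and bare-curve corrections, of the basic annulus $A(\gamma)$ in that model. This count is inserted along $\partial A(\gamma)$.

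\textbf{Step 3 (gluing and finiteness).} Below any fixed action $\mathfrak a_0$, only finitely many $\gamma$ contribute. For $t$ small, the holomorphic curves localize in disjoint tubular neighborhoods of these $\gamma$. Curves meeting several neighborhoods would have to follow flow lines connecting them, which do not exist for closed loops. Hence the total count factorizes, and the disjoint union of insertions gives \eqref{eq:localization-formula-skein}, filtered by action.

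The \emph{main obstacle} is the compactness and transversality statement. We must show that disconnected and multiply covered configurations near $\gamma$ are captured exactly by the local model. We also need the bounding-chain and $4$-chain choices, which fix framings, to be compatible with the local insertions. I would handle this by choosing $4$-chains supported near $\Gamma_{t\beta}$, so that linking is computed locally, and by appealing to the invariance of skein counts under Lagrangian isotopy for the local model.

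\textbf{Step 4 (passage to $X$).} Perform the conifold transition. Stretch around $S^3$ in $T^\ast S^3$ and, in $X$, around the $\C P^1$, using the same neck region $S^\ast S^3$ and keeping $M_K$ away from the neck. The SFT-stretching argument of \cite{Ekholm:2019yqp} identifies curves in $X$ with the curves above glued to the contribution of the $S^3$ side. In that argument, the skein of $S^3$ evaluates to HOMFLYPT polynomials (with $a=e^{t}$ related to the $\C P^1$ area). This yields $\mathsf{Z}_{(X,M_K)}=(H_{S^3}\otimes 1)\circ\mathsf{Z}_{(T^\ast S^3,M_K)}$.
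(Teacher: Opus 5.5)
Your proposal is correct and follows essentially the same route as the paper: you shrink the nowhere vanishing form and use that bounded-action curves converge to flow graphs, which for a single-sheeted graph disjoint from $S^3$ can only be flow loops, so every curve is a multiple cover of a basic annulus with known total contribution $\mathsf{\Psi}^{(\sigma,\epsilon)}_{\mathrm{ann}}$, and then you pass to $X$ by the stretching/conifold-transition argument. The paper handles the compactness and transversality step you flag simply by citing the flow-graph convergence lemma and the correspondence between transverse flow loops and transversely cut out annuli; note also that by \eqref{eq:flowloopsandannuli} the return-map type affects both signs (elliptic versus hyperbolic flips $\sigma$, positive versus negative hyperbolic flips $\epsilon$), not only $\sigma$.
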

Physically, the skein valued partition function of $M_K$ corresponds to the path integral of Chern-Simons theory with unitary gauge group on $M_K$ of arbitrary rank, with Wilson line insertions arising from boundaries of open strings. In this regard, the fact that the entire partition function organizes into contributions from (generalized) annuli, implies that the open string partition function on $M_K$ admits a lift to M-theory with stacks of M5 branes on $M_K$ and $S^3$ and with M2 branes wrapping the annuli. Integrality structures of the open string free energy and corresponding quiver structures can be deduced from this picture, see Remark \ref{rmk:integrality}.

If we restrict to the $\mathfrak{gl}_1$ skein on $M_K$, the partition function takes a more concrete form. 
The positive $\mathfrak{gl}_1$ skein of $M_K$ is the power series ring generated by a meridian $\mu$ of $K\subset S^3$. 
Only Wilson lines in symmetric representations contribute, and $\mu$ corresponds to the expectation value for the fundamental representation wrapped along the meridian.
We write $\psi_{(X,M_K)}$ for the $\mathfrak{gl}_1$ partition function of $M_K$ in $X$. For $\boldsymbol{\gamma}=\{\gamma_1,\dots,\gamma_m\}\subset\mathcal{F}$, let $d_{\boldsymbol{\gamma}}=(d_1,\dots,d_m)$ denote a vector of multiplicities, multiplicity $d_j$ of the flow loop $\gamma_j$. For $\beta,\gamma\in\mathcal{F}$, let $\Gamma_{\beta\gamma}$ denote the linking matrix with entries the linking number between $\beta$ and $\gamma$ in $M_K$. (Here we define the linking number as the intersection number $\Gamma_{\beta\gamma}=\sigma\cdot\beta$, where $\sigma$ is an oriented $2$-chain with $\partial\sigma=\beta - (k\cdot\mu)$, where $[\beta]=k[\mu]\in H_1(M_K)$ and $k\cdot \mu$ denotes $k$ parallel copies of $\mu$ in $\partial M_K$.)

\begin{corollary}\label{c:U(1) knot complement}
If $X$ is the cotangent bundle of $S^3$ or the resolved conifold and if $M_K$ is the complement of a fibered knot $K$ then 
\be\label{eq:localization-formula-U(1)}
	\psi_{(X,M_K)}(a,q,\mu) 
	= \sum_{d_1,\dots, d_m\geq 0}
	H_{(d_1)^\vee,\dots, (d_m)^{\vee}}(a,q)
	\,
	q^{\bd^t \cdot\Gamma\cdot \bd}
	\,
	\prod_{j=1}^{m}(\sigma_j \epsilon_j q^{\beta_j} \mu^{m_j})^{d_j}
\ee
where $H_{(d_1)^\vee,\dots, (d_m)^{\vee}}(a,q)$ is the HOMFLYPT polynomial (unnormalized, including denominators) of the link with components $\gamma\subset S^3$ for all non-zero entries of $d_\gamma=(d_1,\dots, d_m)$ in representation $(d_j)^{\vee}$. 
Here $((d_j)^{\vee}, \sigma_j, \epsilon_j)$ are $((d_j)^t, -1, +1)$ for elliptic flow loops, $((d_j), +1, +1)$ for hyperbolic flow loops, and $((d_j), +1, -1)$ for negative hyperbolic flow loops. See \eqref{eq:flowloopsandannuli} and \eqref{eq:d-gamma-vee}. 
The powers $\beta_j$ correspond to the self linking of the flow loop $\gamma_j$ and $m_j$ is its homology class.

\end{corollary}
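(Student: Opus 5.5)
The corollary will follow by applying the $\mathfrak{gl}_1$ specialization on the $M_K$ factor to the localization formula \eqref{eq:localization-formula-skein} of Theorem~\ref{t:skeinvalued knot complement}, followed by $H_{S^3}$ on the $S^3$ factor. The plan has three steps.

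\textbf{Step 1 (a single annulus).} First expand each annulus partition function $\sfPsi_{\mathrm{ann}}^{(\sigma,\epsilon)}\in \Sk(S^1\times D^2)^{\otimes 2}$ in the basis of closures of idempotents. By \eqref{eq:annulus partition function}, it has the form $\sum_\lambda (\pm)^{|\lambda|}\, W_{\lambda^{\vee}}\otimes W_{\lambda}$, with possible framing factors. The sign and the possible transpose are governed by $(\sigma,\epsilon)$. Under the $\mathfrak{gl}_1$ specialization, $W_\lambda$ on the $M_K$ side vanishes unless $\lambda$ is a single row $(d)$ (or a single column, depending on conventions). On that component it evaluates to $x^d$ times a power of $q$ coming from the framing, where $x$ is the $\mathfrak{gl}_1$ holonomy along the loop. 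Consequently only one partition $(d)^\vee$ survives on the $S^3$ side. For elliptic loops this partition is $(d)^t$ with sign $\sigma=-1$, and for the hyperbolic types it is $(d)$ with the sign $\epsilon$. This recovers the table $((d_j)^\vee,\sigma_j,\epsilon_j)$ of \eqref{eq:flowloopsandannuli}, \eqref{eq:d-gamma-vee}.

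\textbf{Step 2 (evaluating the $\mathfrak{gl}_1$ skein of $M_K$).} Next evaluate the $\mathfrak{gl}_1$ skein element obtained by inserting $d_j$ parallel copies of each $\gamma_j\subset M_K$. The $\mathfrak{gl}_1$ skein relations are: a crossing change multiplies by $q^{\pm 1}$ (with the $q^{1/2}$ normalization chosen compatibly), and a framing change multiplies by a power of $q$. Using them, I unlink the collection of loops from one another and isotope each to its homology class $m_j\mu$ on $\partial M_K$. The chain $\sigma$ with $\partial\sigma=\beta-k\mu$ used to define $\Gamma$ is exactly what records the crossing changes needed for this isotopy. Cross terms therefore contribute $q^{d_i d_j\Gamma_{ij}}$ (counted twice, giving the symmetric quadratic form $\bd^t\Gamma\bd$). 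Diagonal terms contribute $q^{d_j^2\Gamma_{jj}}$ together with the framing/self-linking factor $q^{\beta_j d_j}$. The result is $\mu^{\sum m_j d_j}$ times $q^{\bd^t\Gamma\bd+\sum\beta_j d_j}$.

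\textbf{Step 3 (the $S^3$ factor).} Finally apply $H_{S^3}\otimes 1$. This turns the $S^3$ copies of the $\gamma_j$, colored by $(d_j)^\vee$, into the unnormalized colored HOMFLYPT polynomial $H_{(d_1)^\vee,\dots,(d_m)^\vee}$ of the link $\bigcup\gamma_j\subset S^3$. Expanding the product over $\gamma\in\mathcal F$ in \eqref{eq:localization-formula-skein} as a sum over multiplicity vectors then gives \eqref{eq:localization-formula-U(1)}. Convergence holds in the action filtration, since only finitely many loops lie below any given action. For $X=T^*S^3$ the same computation applies, with the $S^3$ skein evaluated the same way, or equivalently as the specialization prescribed by the theorem.

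The main obstacle is the bookkeeping of signs and $q$-powers in Step 2. One must match the framing and self-linking conventions of the annulus insertions with the definition of $\Gamma$ through $\sigma$, and check that the diagonal contributions split cleanly into $d_j^2\Gamma_{jj}$ and the linear term $\beta_j d_j$. I would first verify this on a single loop, comparing directly with \eqref{eq:annulus partition function}, and then treat pairs of loops to fix the off-diagonal normalization.
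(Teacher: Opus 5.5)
Your proposal is correct and follows essentially the paper's route: specialize the localization formula of Theorem~\ref{t:skeinvalued knot complement} to the $\mathfrak{gl}_1$ skein of $M_K$, where only the symmetric insertions survive. This fixes $(d_j)^\vee$ and the sign $(\sigma_j\epsilon_j)^{d_j}$ via the annulus dictionary with matched orientations and spin structures. You then account for mutual linking by $q^{\bd^t\Gamma\bd}$ and for each loop by $(q^{\beta_j}\mu^{m_j})^{d_j}$, and evaluate the $S^3$ side as colored HOMFLYPT polynomials.

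There are two minor differences. First, the paper takes the linking twist from the cobordism $\CL_\alpha$ and earlier quiver-linking results rather than from your explicit crossing changes. Second, it treats $\beta_j$ as input data fixed by the framing and the 4-chain, as computed in the proof of Proposition~\ref{prop:2-torus-formula}. So the diagonal split you flag as the main obstacle is a convention, not something you need to derive.
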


For general fibered knots both \eqref{eq:localization-formula-skein} and \eqref{eq:localization-formula-U(1)} are very complicated expressions. In fact, \cite{ghrist} shows that for 2-bridge knots that are not torus knots, every link type appears as a closed flow loop in the knot complement and that for other fibered knots the set of isotopy classes of flow loops is similarly infinite. 
This is interesting from the point of view of augmentation polynomials and their quantizations, that give polynomial recursion relations for the partition function in~\eqref{eq:localization-formula-U(1)} even though there are contributions from the HOMFLYPT polynomial of any knot, which indicates that there are substantial cancellations.

\begin{figure}[h!]
\begin{center}
\includegraphics[width=0.55\textwidth]{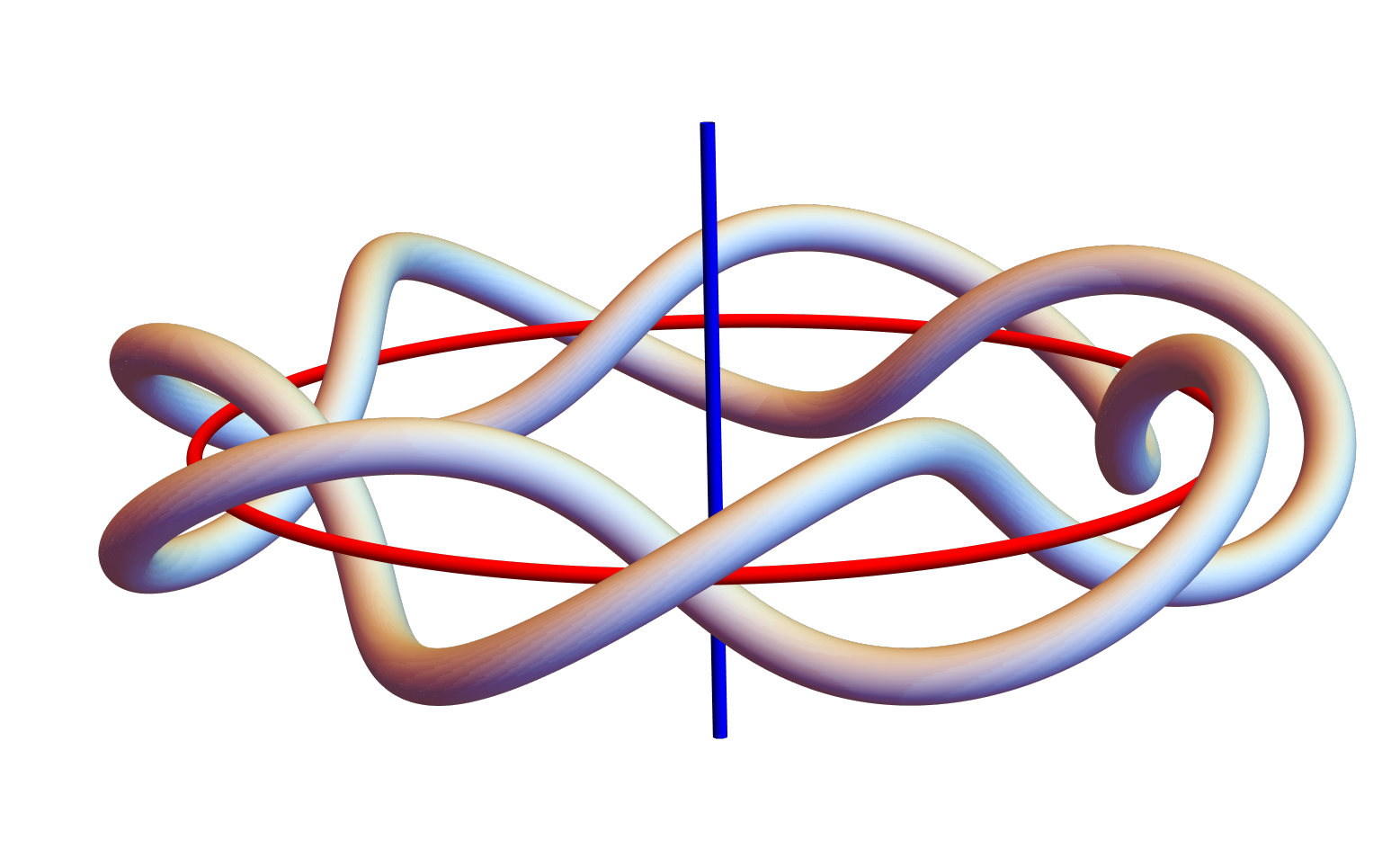}
\caption{The complement of a torus knot $T_{2,2p+1}$ with $p=3$, shown with opposite orientation (negative crossings), and the Hopf link formed by its flow loops.
}
\label{fig:Hopf-torus-intro}
\end{center}
\end{figure}

In this paper we focus on the simpler case of negative torus knots, where the partition functions localize on a finite number of 
flow loops. This gives explicit partition function formulas for Lagrangian knot complements. In the case of negative $(2,2p+1)$-torus knots, see Proposition \ref{prp:2,2p+1torusknot}, there are further cancellations among the flow loops and the remaining loops form a Hopf link, see Figure \ref{fig:Hopf-torus-intro}. 
This leads to an explicit formula \eqref{eq:gl1-formula-2-2p+1-knots} for the knot complement partition function. 
We further show how the formula leads to a generalized quiver form of the partition function \eqref{eq:ZMK-final-torus-knots}, compare \cite{Kucharski:2017ogk, Ekholm:2018eee}.
In Section \ref{sec:aug-curves}, we use the explicit formula to give new derivations of the augmentation polynomials of $(2,2p+1)$-torus knots and we also verify that the resulting $U(1)$-partition function satisfies (essentially, up to a certain shift) the same recursion relation as for the open topological string partition function for the knot conormal. 
This  gives a direct geometric interpretation to the wave function on the complement branch of the augmentation variety which is related to the $\hat Z$-invariant, \cite{Gukov:2017kmk, gukov2021two}, see Section~\ref{eq:relation-to-Zhat}.

 \subsubsection*{Organization of the paper}

Section \ref{sec:background} collects useful background material on skein-valued curve counting and open topological string theory. 
In Section \ref{sec:knot-complements} we prove Theorem \ref{t:skeinvalued knot complement} and provide key data for its application, namely an explicit description of flow loops on the complements of generic torus knots.
Section \ref{sec:evaluation} contains details on the explicit evaluation of knot complement partition functions, illustrating an application of Theorem \ref{t:skeinvalued knot complement} to the case of $(2,2p+1)$-torus knots in full detail.
In Section \ref{eq:quantum-branches} we discuss quantization schemes of augmentation varieties defined by curve counts for geometric branches.
Sections \ref{sec:unknot} and \ref{sec:trefoil} contain detailed discussions for the two simplest $T_{2,2p+1}$ knots, corresponding to the unknot ($p=0$) and the trefoil ($p=1$).
Finally, Section~\ref{eq:relation-to-Zhat} discusses the implications of our work for $\widehat Z$ invariants of knot complements.

\section{Background -- curve counts and topological strings}\label{sec:background}
In this section we briefly recall the relation between physical open topological string amplitudes and geometric counts of holomorphic curves. We start with geometry and then give the physical perspective. For this general discussion let $X$ denote a symplectic Calabi-Yau 3-fold and let $L\subset X$ be a Maslov zero Lagrangian. If $X$ is non-compact we assume that it has asymptotic contact boundary $Y=\partial X$, and if $L$ is non-compact we take it to have asymptotic Legendrian boundary $\Lambda=\partial L$, $\Lambda\subset Y$. 

\subsection{Skein valued curve counts}\label{ssec : skein valued counts}
From the geometric point of view we consider the moduli space $\mathcal{M}$ of holomorphic curves. A holomorphic curve is a map of a Riemann surface $S$ with boundary $\partial S$ into $(X,L)$, with everywhere $J$-complex linear differential for an almost complex structure $J$ on $X$:
\[
u\colon (S,\partial S)\to (X,L), \quad \bar\partial_J u = \tfrac12(du + J\circ du\circ j)=0.
\]
The moduli space $\mathcal{M}$ is obtained by identifying solutions that differ by pre-composition of holomorphic automorphisms of the domain. The Calabi-Yau and Maslov zero conditions imply that the formal dimension of $\mathcal{M}$ equals zero. If $L=\emptyset$ one defines the Gromov-Witten partition function
\begin{equation}\label{eq : closed GW count}
Z_{\mathrm{GW}}(X) = \sum_{(u,S)\in \mathcal{M}} w(u,S) g_s^{-\chi(S)} t^{[u(S)]},
\end{equation}
where $w(u,S)$ is the rational (orbifold) weight of $(u,S)$, $\chi(S)$ is the Euler characteristic of $S$, and $t^{[u(S)]}$ is the element represented by $u(S)$ in the group ring $\Q[H_2(X)]$. In order that the formal dimension of $\mathcal{M}$ equals its actual dimension, the Cauchy-Riemann equations must be perturbed coherently over the Deligne-Mumford space of all complex curves, compactified by adding nodal curves. Constructing such a perturbation scheme is an elaborate task and several different approaches have been developed, see e.g., \cite{hofer, fukaya, pardon}. To see that the count is independent of the choice of perturbations one observes that solution spaces over a generic 1-parameter family of perturbations give an oriented 1-dimensional cobordism between the zero-manifolds of the counts that constitute its boundary. The key observation here is that Gromov compactness for holomorphic curves implies that the parameterized moduli space is a 1-manifold up to bubbling where a node forms in a curve. The local model for nodal formation is the family of curves in $\C^2$ given by the equation $xy=\epsilon$, $\epsilon\in \C$. Here the node corresponds to $\epsilon=0$, hence has real codimension two in moduli and therefore does not appear in generic 1-parameter families.

Consider now instead the case of open curves. The moduli space of holomorphic curves  $\mathcal{M}$ still has formal dimension equal to zero and therefore the direct analogue of \eqref{eq : closed GW count} with relative homology replacing homology makes sense. However, its behavior under deformation is different in the open case. Here, except for the interior nodes discussed above, there are also boundary nodes. Boundary nodes are of two types: hyperbolic modeled on $xy=\epsilon$, and elliptic modeled on $x^2+y^2=\epsilon$, $\epsilon\in\R$. Note in particular that boundary nodes are codimension one phenomena and therefore cannot be avoided in generic 1-parameter families. In \cite{Ekholm:2019yqp} it was observed that the wall crossing phenomena can be identified with the HOMFLYPT skein relations in the skein module of the Lagrangian boundary condition $L$, $\Sk(L)$. This means that if curves are counted by the values of their boundaries in the skein module of $L$ the count remains invariant under deformations. In order that the wall crossing models apply, the perturbation needs to be small compared to local holomorphic area, and this is achieved by keeping area zero curves unperturbed. 
To still have an invariant count, zero area bubbles have to be excluded without the usual codimension two argument, see \cite{ekholm2022ghost}, for the relevant mechanism called ghost bubble censorship. 
The outcome is a deformation invariant count that is non-perturbative over area zero curves and that takes the form
\begin{equation}
    \mathsf{Z}_{(X,L)} = \sum_{(u,S)\in \mathcal{M}} w(u,S)\, t^{[u(S)]}\, z^{-\chi(S)}\, a^{\mathrm{lk}(u,L)}\,\langle u(\partial S)\rangle \ \in \ \Sk(L),
\end{equation}
where the skein module of $L$ is
\[
\Sk(L) = \Q[a^\pm,z^\pm]\{\text{framed links in $L$}\} / \text{skein relations},
\]
see \cite{Ekholm:2019yqp}.
In case $L=\emptyset$ the Gromov-Witten count is recovered as
\[
Z_{\mathrm{GW}}(X;g_s)=Z(X,\emptyset;z=e^{g_s/2}-e^{-g_s/2}),
\]
see \cite{ekholm2025countingbarecurves}.

The skein counting perspective also gives a geometric mechanism for conifold transition as follows. If $L\subset T^\ast S^3$ is a Lagrangian such that $L\cap S^3=\emptyset$ then we can consider the skein valued curve count $Z(T^\ast S^3, L\cup S^3)$.
We then deform the complex structure by so called SFT-stretching along the boundary of a tubular neighborhood of $S^3$ which is very small compared to the distance between $L$ and $S^3$. 
In general, in the limit of infinite SFT stretching, holomorphic curves converge to several level holomorphic buildings. 
These consist of holomorphic pieces in $\R\times \partial (T^\ast S^3)$, joined at Reeb orbits at positive and negative infinity, and a bottom level in $T^\ast S^3$ with Reeb orbits at only the positive end. 
In the case of $S^3$, any Reeb orbit at a negative end carries dimension $\le-2$, therefore there cannot be any bottom level and curves stretching between $L$ and $S^3$ leaves $S^3$, and eventually lie entirely in $\R\times\partial (T^\ast S^3)$. 
Thus, for sufficiently large stretching, there are then no curves with boundary components on $S^3$, and all contributions to $a$-powers in the invariant curve count come from the homology class of the fiber sphere. 
It can be shown \cite{Ekholm:2019yqp} that when a resolved conifold with $\C P^1$ of sufficiently small area is glued in at the negative end obtained by removing the zero section, the open curve count does not change. 
Then, by deformation invariance, it gives the curve count at any area. 
This demonstrates the relation between open curve counts in the conifold and HOMFLYPT polynomials predicted by Ooguri and Vafa \cite{Ooguri:1999bv}, see Section \ref{ssec:topstringandknots}.

\subsection{Skein-valued basic annuli}\label{ssec:annuli} 
In what follows it will be important to count curves of genus zero (and their all-genus multiple coverings) with boundaries on two different Lagrangians. The underlying curve is then an annulus and we will need the skein contributions of all its multiple covers. This contribution was determined in \cite{Ekholm:2021osm} by finding a skein valued recursion for the contribution and solving it. As mentioned there, the exact form of the partition function depends on choices of orientations and spin structures. Here it will be important to have the exact dependence.

We consider the model used in \cite{Ekholm:2021osm} and refer to there for details of the holomorphic curves discussed below. The ambient symplectic manifold is $(T^\ast S^1)\times\C^2$. The two Lagrangians are the zero section $L_0=S^1\times\R^2$ and small shift and rotation of $L_0$, $L_1=(S^1\times\{\xi\})\times e^{i\delta}\R^2$. Pick a spin structure and an orientation on $L_0$ and transport it to $L_1$ by the small rotation. The recursion relation is then determined by holomorphic curves of a $1$-dimensional Legendrian Hopf link in $\partial \C^2\approx S^3$. 
For both $L_0$ and $L_1$ there are two disks with only one positive puncture and one disk with one positive and two negative punctures, and we refer to the latter as triangles. 

The recursion relation is obtained by finding all curves with two positive punctures that can be attached to the triangles. The resulting skein recursion is, see \cite[Theorem 6.2]{Ekholm:2024ceb} 
\be\label{eq : recursion local model}
	\sfP^{(L_0)}_{1,0} - \sfP^{(L_0)}_{0,0} 
	+z\sum r(\mathsf{C}_k)\otimes \mathsf{C}_{-k} = 0,
\ee
where $\mathsf{C}_k$ is the curve in the solid torus of homology class $k$ and with $k-1$ positive crossings.  

We also consider other types of skein-valued recursion relations obtained by changes of orientations and of spin structures on $L_0$ or on $L_1$. Changing the spin structure on one of the Lagrangians alters the last term in the recursion by $z\sum r(\mathsf{C}_k)\otimes \mathsf{C}_{-k}\mapsto z\sum (-1)^k r(\mathsf{C}_k)\otimes \mathsf{C}_{-k}$, then changing the orientation of the second Lagrangian changes the last term by an overall sign and by $r(\mathsf{C}_k)\mapsto r(\mathsf{C}_{-k})$. This gives four recursions with the following solutions  
with values in the skein module of two disjoint solid tori $S^1\times D^2\sqcup S^1\times D^2$, $\Sk(S^1\times D^2)^{\otimes2}$:  
\be\label{eq:annulus partition function}
	\sfPsi_{\mathrm{ann}}^{(\sigma,\epsilon)}=\exp\left(\sigma\sum_{d\geq 1}  \frac{\epsilon^d}{d}\sfP_{d}\otimes \sfP_{-d}\right),
\ee
where $\sfP_{d}$ are power-sum generators of $\Sk(S^1\times D^2)^\pm$, in the positive or negative half of the HOMFLYPT skein of $S^1\times D^2$ for $d>0$ respectively $d<0$, see \cite{2014arXiv1410.0859M}, and where $\sigma,\epsilon \in \{-1,+1\}$ are signs.
The four types of annuli can be also expressed as follows, see  \cite{nakamura2024recursion}
\be\label{eq:annuli-explicit-skein}
\begin{array}{|c|c|c|} 
\hline
(\sigma,\epsilon) & \text{terminology} & \text{skein-valued partition function} \\
\hline\hline
    (1,+1) &\text{bosonic annulus}& \sum_\lambda  \sfW_{\lambda,\emptyset} \otimes \sfW_{\emptyset,\lambda}\\
    \hline
    (1,-1) &\text{fermionic annulus} & \sum_\lambda (-1)^{|\lambda|} \sfW_{\lambda,\emptyset} \otimes \sfW_{\emptyset, \lambda}\\
    \hline
    (-1,+1) &\text{bosonic anti-annulus} & \sum_\lambda (-1)^{|\lambda|} \sfW_{\lambda,\emptyset} \otimes \sfW_{\emptyset, \lambda^t} \\
    \hline
    (-1,-1) &\text{fermionic anti-annulus}& \sum_\lambda \sfW_{\lambda,\emptyset} \otimes \sfW_{\emptyset, \lambda^t} \\
    \hline
\end{array}
\ee
The solution to \eqref{eq : recursion local model} is the bosonic anti-annulus. 
Reversing orientation on one Lagrangian corresponds to changing $\sigma$, while changing spin structure corresponds to changing $\epsilon$.

In later sections, holomorphic annuli will be associated to flow loops, i.e., periodic solutions to vector fields dual to closed $1$-forms in a 3-manifold. To connect to the model annulus studied above, take $L_0$ as the manifold of the flow loop and $L_1$ as the graph of small multiple of the closed $1$-form in $T^\ast L_0$. The holomorphic annulus is then well approximated by the annulus which is obtained by shifting the flow loop in $L_0$ along the closed $1$-form. It has oriented boundary components along flow loop in both $L_0$ and its negative in $L_1$. 

Such flow loops and associated annuli were studied in \cite{dioagoekholm}. There are three distinct types of generic flow loops distinguished by the behavior of $1-\psi$, where $\psi$ is the linearization of the return map along the loop: elliptic, if both eigenvalues are either inside or outside the unit circle, and hyperbolic or negative hyperbolic otherwise. 
The difference between positive and negative hyperbolic is the sign of $\det(1-\psi)$ or, more geometrically, whether the return map does not (respectively does) permute the two stable and the two unstable directions, and to connect one to the other a change of trivialization corresponding to a change of spin structure is required.
For our purposes it will be essential to know which types of annuli correspond to which flow loops. 
To this end, note that the front of the model annulus Lagrangian shows that it corresponds to an elliptic flow loop. As mentioned above, a direct calculation of the recursion relation shows that the corresponding partition function is that of the bosonic anti-annulus, see \eqref{eq : recursion local model}. 
As shown in \cite{dioagoekholm}, reversing the orientation of a Lagrangian changes between elliptic and hyperbolic flow loops.
We thus have the following dictionary, where $o$ and $s$ denotes orientations and spin structures on the Lagrangians compared using the projection to the base.
\be\label{eq:flowloopsandannuli}
\begin{array}{|c||c|c|c|c|} 
\hline
\text{Flow loop} & (o,s)=(o,s) & (o,s)=(-o,s) & (o,s)=(o,-s) & (o,s)=(-o,-s)\\
\hline\hline
\text{elliptic} & \text{bosonic anti} & \text{bosonic} & \text{fermionic anti} & \text{fermionic}\\
\hline
\text{hyperbolic} & \text{bosonic} & \text{bosonic anti} & \text{fermionic} & \text{fermionic anti}\\
\hline
\text{negative hyperbolic} & \text{fermionic} & \text{fermionic anti} & \text{bosonic} & \text{bosonic anti}\\
\hline
\end{array}
\ee

\subsection{Topological string amplitudes and knot invariants}\label{ssec:topstringandknots}
Quantum knot invariants such as the colored HOMFLYPT polynomial coincide with expectation values of knotted Wilson lines in $SU(N)_k$ Chern-Simons theory on $S^3$ \cite{WittenJones},
and Chern-Simons theory provides an archetypal example of gauge/string duality, where the vacuum partition function on a 3-manifold $M$ and Wilson loop observables admit dual interpretations as $A$-model open topological string amplitudes in $T^*M$ \cite{Witten:1992fb}. 

This duality leads to the following: since the topological $A$-model is insensitive to variations of the complex structure, its amplitudes do not depend on the size of $M$, and in \cite{Gopakumar:1998ki} it was argued that the open $A$-model on $T^*S^3$ is dual to closed strings on the resolved conifold, by shrinking the $S^3$ to zero size and taking the small resolution of the resulting conifold singularity, with a central $\C P^1$ corresponding to the fiber $S^2$ linking $S^3$ before transition. The resulting geometry has a $B$-field flux $t = i N g_s = \frac{2\pi i\, N}{N+k}$ that coincides with the `t Hooft coupling of Chern-Simons theory on $S^3$. Duals of Wilson loop observables were found in \cite{Ooguri:1999bv}: the colored HOMFLYPT polynomials of links $K\subset S^3$ correspond to open string amplitudes in the resolved conifold $X$ of strings with boundary condition on the Lagrangian knot conormal $L_K$ with topology $S^1\times D^2$. 

The mathematical formulation of this latter duality was described in Section \ref{ssec : skein valued counts}. The connection to open string amplitudes is as follows. 
The open string partition function in the resolved conifold $X$ obtained by geometric transition (large $N$ limit) of a stack of $N$ branes on $S^3$, with $M$ branes on $L_K$ has the form
\begin{equation}\label{eq : open string part function}
\psi_{(X,L_K)} = \sum_{\lambda} c_\lambda(a,g_s) s_\lambda(x_1,\dots,x_M),  
\end{equation}
and is obtained from the skein valued partition function as follows.

The topology of $L_K$ is $S^1\times D^2$, and its HOMFLYPT skein, $\Sk(S^1\times D^2)$, has a basis $\{\sfW_{\lambda,\mu}\}$ labeled by pairs of partitions $\lambda$ and $\mu$ that are eigenvectors of the meridian operator \cite{2014arXiv1410.0859M}. 
Holomorphic curves ending on $L_K\subset X$ obtained by transition from $T^\ast S^3$ wind around $L_K$ only in the negative direction. Therefore their boundaries take values in the negative subspace $\Sk(S^1\times D^2)^-\subset\Sk(S^1\times D^2)$ generated by curves in non-positive homology class with respect to the positively oriented $S^1\times\{0\}$. A basis for the submodule $\Sk(S^1\times D^2)^-$ is given by $\{\overline \sfW_\lambda\}$, where $\overline  \sfW_\lambda:=\sfW_{\emptyset, \lambda^t}$.
Thus,
\be\label{eq:HOMFLYPT-generating-series}
	\mathsf{Z}_{L_K} = \sum_\lambda c_\lambda(a,z) \overline  \sfW_\lambda,
\ee
where it is a consequence of the skein proof of Ooguri-Vafa's conjecture that $c_\lambda(a,z)=H_\lambda(K)$, the  $\lambda$ colored HOMFLYPT polynomial of $K$ (with unreduced normalization). To get the open string partition function we change variables: $z=q-q^{-1}$ are related to open string parameters by $q=e^{g_s/2}$, $a=e^t = q^N$, $a_{L_K} = q^M$, and 
$\overline \sfW_{\lambda}=s_{\lambda^t}(x_1,\dots,x_M)$. Here the Schur functions $s_\lambda(x_1,\dots,x_M)$ give coordinates of boundary conditions for a stack of $M$ branes wrapping $L_K$.

\begin{remark}[Annuli from elliptic flow loops]\label{rmk:conormal-annuli} 
The coordinate conventions for solid tori we use here differ slightly from those used previously. We use coordinates adapted to identifications of nearby Lagrangians by small shifts. In previous works, the standard convention is that the annulus partition function of an annulus in the intersection $L_K\cap S^3$ is $\sum_\lambda \sfW_{\lambda,\emptyset}\otimes \sfW_{\lambda,\emptyset}$. 
As explained in Section \ref{ssec:annuli}, in adapted coordinates we get instead the `bosonic anti-annulus' 
\be\label{eq:twisted-anti-annulus}
	\sfPsi^{(-1,+1)}_{\mathrm{ann}}
	= \sum_\lambda (-1)^{|\lambda|} \sfW^{(S^3)}_{\lambda,\emptyset}\otimes \sfW^{(L_K)}_{\emptyset,\lambda^t} \,,
\ee
and we see that the standard convention is obtained from adapted coordinates by reversing the orientation of the central $S^1$ and reversing the orientation of $L_K$.  
\end{remark}

\subsection{Augmentation varieties as $A$-brane moduli spaces}
In this section, we consider the augmentation variety and its relation to physical moduli spaces of $A$-branes.
Certain asymptotic regions of the augmentation variety corresponding to `large volume' branches in open string moduli are understood to encode counts of holomorphic disks with boundaries on a Lagrangian submanifold $L$ of $T^*S^3$ (such interpretations are not generally known for all asymptotic regions). Here we will discuss compatibility of quantizations of distinct branches.

We recall the definition of the augmentation variety.
The Legendrian conormal $\Lambda_K\subset ST^\ast S^3$ of a knot $K\subset S^3$ is the ideal boundary of its Lagrangian conormal $L_K\subset T^\ast S^3$. Knot contact homology is the Chekanov-Eliashberg dg-algebra $CE^\ast(\Lambda_K)$, generated by Reeb chords and with differential that counts $\R$-families of punctured holomorphic disks in $\R\times ST^\ast S^3$ with boundary on $\R\times\Lambda_K$ \cite{2011arXiv1109.1542E}. 
The Chekanov Eliashberg dg-algebra has coefficients in $\C[x^\pm,y^\pm]$, identified with the group ring of the first homology of $\Lambda_K$. The augmentation variety of $K$ is the representation variety of $CE^\ast(\Lambda_K)$. It is cut out by the augmentation polynomial, $A_K(x,y)=0$. If $L$ is a Lagrangian filling of $\Lambda_K$ in the resolved conifold then the disk potential of $L$ gives a local branch of the augmentation curve, e.g., the conormal $L_K$ topologically fills the $y$-cycle and if $W_K(x)$ is the disk potential of $L_K$ then
\[
A_K(x, e^{\frac{\partial W_K}{\partial \log x}})=0.
\]
The disk potential is the semi-classical limit of the $U(1)$-skein valued partition function and from this perspective one expects that there is an operator valued lift $\hat A_K(\hat x,\hat y)$ that annihilates the $U(1)$-partition function 
\[
Z_{L_K}^{U(1)} = \sum_{d} H_{(d)}(K;(a,q)) x^d,
\]
where the coefficients are the symmetrically colored HOMFLYPT polynomials of $K$. It was shown in \cite{Garoufalidis:2016zhf} that such an operator polynomial exists for every knot.

From the physical perspective it is natural to think of the augmentation variety as a moduli space of $A$-branes as follows.  
In the limit of large volume for open string moduli, i.e. in a situation in which all open string instantons have a large area,
Lagrangian $A$-branes can be modeled by local systems over Lagrangian submanifolds \cite{Witten:1992fb}.
$A$-branes over conormal Lagrangians $L_K$ belong to this class. 
When the local system is Abelian (rank 1), such branes come in complex 1-parameter families, locally controlled by a real-valued modulus for $L_K$ and a periodic modulus for the local system (recall $b_1(L_K)=1$).

Varying the geometric modulus changes the area of curves, while varying the periodic modulus changes their boundary holonomies. Small perturbations within a large volume region do not change the overall partition function due to topological invariance under deformations. However finite changes in brane moduli can lead to probing regions of moduli space where worldsheet instantons have small area, and where the description of branes in terms of local systems over Lagrangian cycles is less accurate.

\begin{figure}[h!]
\begin{center}
\includegraphics[width=0.6\textwidth]{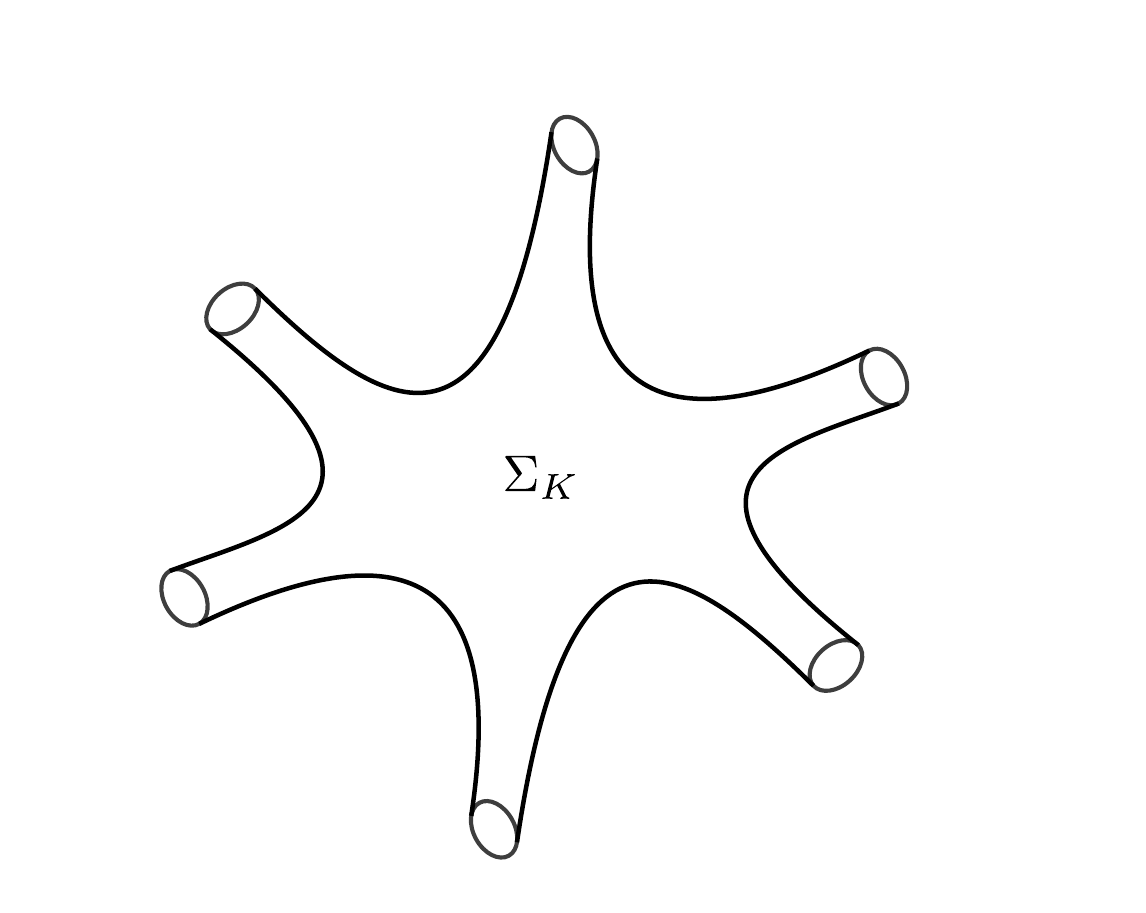}
\caption{Augmentation curves feature several large volume regions in correspondence of punctures.}
\label{fig:aug-curve}
\end{center}
\end{figure}

A global description of the moduli space of rank-1 conormal $A$-branes was proposed in \cite{Aganagic:2013jpa}, where it was identified with the augmentation variety of knot contact homology \cite{2011arXiv1109.1542E}. 
The augmentation curve has several asymptotic regions corresponding to punctures in $\IC^*\times \IC^*$, see Figure \ref{fig:aug-curve}. 
In suitably defined open string moduli, as defined below in Section \ref{sec:local-moduli}, each of these asymptotic regions corresponds to a large volume phase.
While it is not known whether every asymptotic region admits an interpretation as a patch of moduli space for a geometric $A$-brane (i.e. some Lagrangian $L_\alpha$ with a local system), it is nevertheless possible to define the formal counterpart of curve counts associated with each asymptotic region \cite{Ekholm:2021irc, Ekholm:2024ceb}.

For rank-1 local systems, all-genus curve counts in a large volume phase labeled by $\alpha$ define a certain quantization of $A_K$ to a $q$-difference operator $\hat A_{K,\alpha} \in \IZ[\hat x,\hat y,a]$, characterized by the open string partition function $\psi_\alpha$ via
\be\label{eq:branch-curve-counts}
	\hat A_{K,\alpha} \cdot \psi_\alpha=0\,.
\ee
The quantization of $A_K$ may depend on a choice of branch $\alpha$ as shown in \cite{Ekholm:2024ceb}. The dependence is mild and correspond to changes in auxiliary geometric data. We will see counterparts of this below when we study the branch of the augmentaion variety corresponding to the knot complement.

\section{The skein valued partition function on the complement of a fibered knot}\label{sec:knot-complements}

In this section we first prove Theorem \ref{t:skeinvalued knot complement} which expresses the skein valued partition function of the complement of a fibered knot in terms of flow loops. Then we determine flow loops for torus knot complements.

\subsection{Fibered knots and flow loops}\label{ssec:fibered and flow loops}
Let $K\subset S^3$ be a fibered knot. This means that that there is a fibration $\pi\colon S^3\setminus K\to S^1$. Let $\theta$ be a coordinate on $S^1$ and let $\alpha=\pi^\ast d\theta$, then $\alpha$ is a nowhere vanishing $1$-form on $S^3\setminus K$. After a small perturbation $\alpha$ has a discrete set of flow loops that is finite below any given action. Furthermore, the flow loops are generic in the following sense. Recall the correspondence between flow loops and holomorphic annuli, see \ref{eq:flowloopsandannuli}.

\begin{proof}[Proof of Theorem \ref{t:skeinvalued knot complement}]
    Let $\alpha$ denote the generic 1-form used to shift the complement of $K\subset S^3$ in $T^\ast S^3$ and consider the scaled 1-form $\epsilon\alpha$. Consider holomorphic curves in $T^\ast S^3$ with boundary in $S^3\cup M_K$ with boundary $\gamma$ in $M_K$ for which $\int_\gamma\alpha <\mathfrak{a}$ for some $\mathfrak{a}>0$. 

    We use the correspondence between flow graphs and holomorphic curves. First, \cite[Lemma A.3]{Ekholm:2025anq} shows that any sequence of holomorphic curves with action $\le \epsilon\mathfrak{a}$ converges to a flow graph. Second, since $M_K\subset T^\ast S^3$ is graphical over $S^3$ (and has only one sheet), the only finite energy flow graphs are flow loops. Consequently, all holomorphic curves are multiple covers of the basic annuli. Furthermore, by \cite[Lemma 3.5]{dioagoekholm} transverse flow loops correspond to transverse annuli. 
The contribution from all multiple covers of a rigid annulus in the skein of the solid tori that are tubular neighborhoods of its boundaries was computed in \cite{Ekholm:2021osm}, see also \cite[Section 6.2]{Ekholm:2024ceb}. The theorem follows.      
\end{proof}

Since the skein module of $S^3$ equals $\Q[a^\pm,q^{\pm}]$ we can replace the link components in $S^3$ in Theorem \ref{t:skeinvalued knot complement} by polynomials. This gives the following expression for the partition function of skein-valued curve counts for $M_K$. Let $\sfW_{\lambda}^{(\gamma)}\in \Sk(M_K)$ denote the insertion of $\sfW_\lambda \in \Sk(S^1\times D^2)$ in a neighbourhood of the flow loop $\gamma\subset M_K$, and let $\sfW_{\lambda_1}^{(\gamma_1)} \dots \sfW^{(\gamma_m)}_{\lambda_m}$ denote the simultaneous insertions along all flow loops.
The skein-valued partition function is given by the following explicit formula
\be\label{eq:localization-formula-final-skein}
\begin{split}
	\sfPsi_{(T^\ast S^3,M_K)}(a,q,\mu) 
	& =\prod_{\gamma\in\Gamma} \iota_{\partial A(\gamma)}(\sfPsi^{(\sigma(\gamma),\epsilon(\gamma))}_{\mathrm{ann}})
	\\
	&= \sum_{\lambda_1,\dots, \lambda_m}
	\prod_{j=1}^{m}
	(\epsilon_j \, \sigma_j)^{|\lambda_j|}
	\,
	H_{\lambda^\vee_1,\dots, \lambda_m^{\vee}}(a,q)
	\,
	\sfW_{\lambda_1}^{(\gamma_1)} \dots \sfW^{(\gamma_m)}_{\lambda_m}
\end{split}
\ee
where $\lambda_j^\vee$ is a partition determined by the type of the $j$-th annulus, which in turn is determined by the type of flow loop.

\subsection{Flow loops on the complement of torus knots}\label{sec:torus-knots-general}
Let $(p,q)$ be relatively prime integers and consider a $T_{p,q}$ torus knot. There is a well known Seifert fibration of the complement $S^3\setminus T_{p,q}$ that we will use to find flow loops. 
To see the Seifert fibration consider the Hopf-fibration $\pi\colon S^3\to S^2$ and the singular foliation of $S^2$ by latitude circles that degenerate over the poles. The preimage of the latitude foliation is a foliation of $S^3$ by tori with two 1-dimensional leaves that form a Hopf link, the preimage of the poles. Each torus is foliated by $T_{p,q}$ torus knots with leaf space $S^1$ and the fibers limit to a curve $p$-times around one of the Hopf link components and $q$-times around the other. Removing a neighborhood of one $T_{p,q}$-fiber we find that the complement fibers over a disk with two singular fibers. 

\begin{figure}[h!]
\begin{center}
\includegraphics[width=0.35\textwidth]{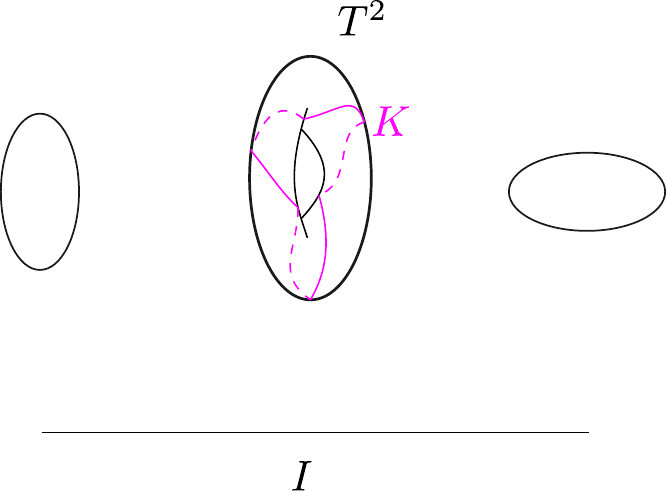}
\caption{Fibration of the 3-sphere by 2-tori, with a torus knot.}
\label{fig:S3-fibration}
\end{center}
\end{figure}

\begin{proposition}\label{prp:generaltorusknot}
Let $K=T_{p,q}$.
    There is a nowhere zero 1-form on the complement $M_K=S^3\setminus K$ of a $(p,q)$-torus knot with exactly three basic flow loops, the two braid axes of the torus knot and the torus knot itself. For the pull-back of the spin structure and orientation of $S^3$ to $M_{K}$ 
    the flow loops of the axes are both elliptic and support bosonic anti-annuli, while the torus knot flow loop is hyperbolic and supports a bosonic annulus, see \eqref{eq:flowloopsandannuli}. 
\end{proposition}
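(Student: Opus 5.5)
The plan is to build the $1$-form from the Seifert fibration of $M_K$ described above, so that all of its closed orbits can be read off from a gradient flow on the base orbifold. Write $S^3=\{|z_1|^2+|z_2|^2=1\}\subset\C^2$ and take $K=T_{p,q}=\{z_1^p=z_2^q\}\cap S^3$. The circle action $t\cdot(z_1,z_2)=(e^{iqt}z_1,e^{ipt}z_2)$ preserves $K$. Its generic orbits are $(p,q)$-torus knots, and its two exceptional orbits $\{z_1=0\}$ and $\{z_2=0\}$ are the braid axes. Let $f=z_1^p-z_2^q$ and let $\alpha=d\arg f$ be the closed form of the Milnor fibration on $M_K$. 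Since $f$ is weighted homogeneous, $f(t\cdot z)=e^{ipqt}f(z)$, so $\alpha(v)=pq>0$ for the generator $v$ of the action. Hence we can pick a metric for which $v$ is the dual vector field of $\alpha$. All orbits of $v$ are then closed, so this is a Morse--Bott degenerate situation, and it has to be perturbed.

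For the perturbation, let $B$ be the base orbifold of $M_K\to B$. It is an open disk, namely $S^2$ minus the point corresponding to $K$, with two cone points of orders $p$ and $q$. Choose a function $h$ on $B$, smooth in the orbifold sense, with the following properties: local minima at the two cone points, exactly one further critical point, which is a nondegenerate saddle at a regular point, and $h$ increasing with no critical points near the puncture. This count is consistent with $\chi(\text{disk})=1=1+1-1$. Set $\beta=\alpha+\varepsilon\,\pi^\ast dh$ for small $\varepsilon$. This form is closed, and it is nowhere vanishing because $\beta(v)=pq$. Its dual vector field can be taken to be $v+\varepsilon\tilde X$, where $\tilde X$ is a horizontal lift of a gradient-like field $X$ for $h$. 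A closed orbit of $\beta$ projects to a closed orbit or a critical point of $X$ on $B$, and a gradient-like flow has no nontrivial closed orbits. So the closed flow loops of $\beta$ are exactly the fibers over the critical points of $h$, together with their iterates: the two axes and one regular fiber. A regular fiber is isotopic in $M_K$ to a parallel copy of $K$, which gives the three basic flow loops. The behaviour near the puncture must be checked so that no orbit escapes to, or comes in from, the end of $M_K$. Choosing $h$ monotone in the radial coordinate near $K$ takes care of this.

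To find the types, compute the linearized return map $\psi$ along each loop. Transversally it is the holonomy of the Seifert fibration composed with the time-$2\pi$ map of the linearization of $\varepsilon X$ at the critical point. At a cone point the holonomy is a rotation of finite order $p$ or $q$. Composing it with the contraction at a minimum of $h$ gives both eigenvalues of $\psi$ inside the unit circle, so the axes are elliptic. At the saddle the holonomy is trivial, so $\psi$ is close to a diagonal map with eigenvalues $e^{\pm c\varepsilon}$. It is therefore hyperbolic, and it is positive hyperbolic because the stable and unstable directions are not permuted.

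With the orientation and spin structure pulled back from $S^3$, we are in the $(o,s)=(o,s)$ column of \eqref{eq:flowloopsandannuli}. This gives bosonic anti-annuli for the axes and a bosonic annulus for the torus knot loop. I expect the main obstacle to be this last identification. It requires checking that the comparison of orientations and spin structures between $M_K$ and its shift, made via the projection, really is the trivial one for each loop, and that the spin structure restricted to the exceptional fibers does not introduce a sign. That sign would switch bosonic to fermionic. I would first verify it on the unknot case $p=1$, where the answer can be compared with the conormal annulus of Remark~\ref{rmk:conormal-annuli}.
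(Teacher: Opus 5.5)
Your proposal is correct and is essentially the paper's proof with more detail: you perturb the fibration form by the pullback of a Morse function on the base orbifold, with minima at the two cone points, one saddle at a regular point and outward gradient at the end, read off the loop types from the return maps, and apply the $(o,s)=(o,s)$ column of \eqref{eq:flowloopsandannuli}. Two small remarks: for a gradient-like field of $h$ the minima are repelling, so the axis eigenvalues lie outside (not inside) the unit circle, which is still elliptic; and the orientation/spin comparison you flag as the main obstacle is trivial by construction, since both structures on $M_K$ are pulled back along the projection to $S^3\setminus K$.
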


\begin{proof}
Perturb the 1-form along the fibers of the fibration $M_K\to D^2$ by a Morse function of the base that has two minima at the two singular fibers, one saddle point at a general fiber, and with outward gradient along the boundary. The critical points correspond to the flow loops. The characters of the flow loops are clear.
\end{proof}

\begin{figure*}[h!]
    \centering
    \begin{subfigure}[t]{0.5\textwidth}
        \centering
        \includegraphics[width=.8\textwidth]{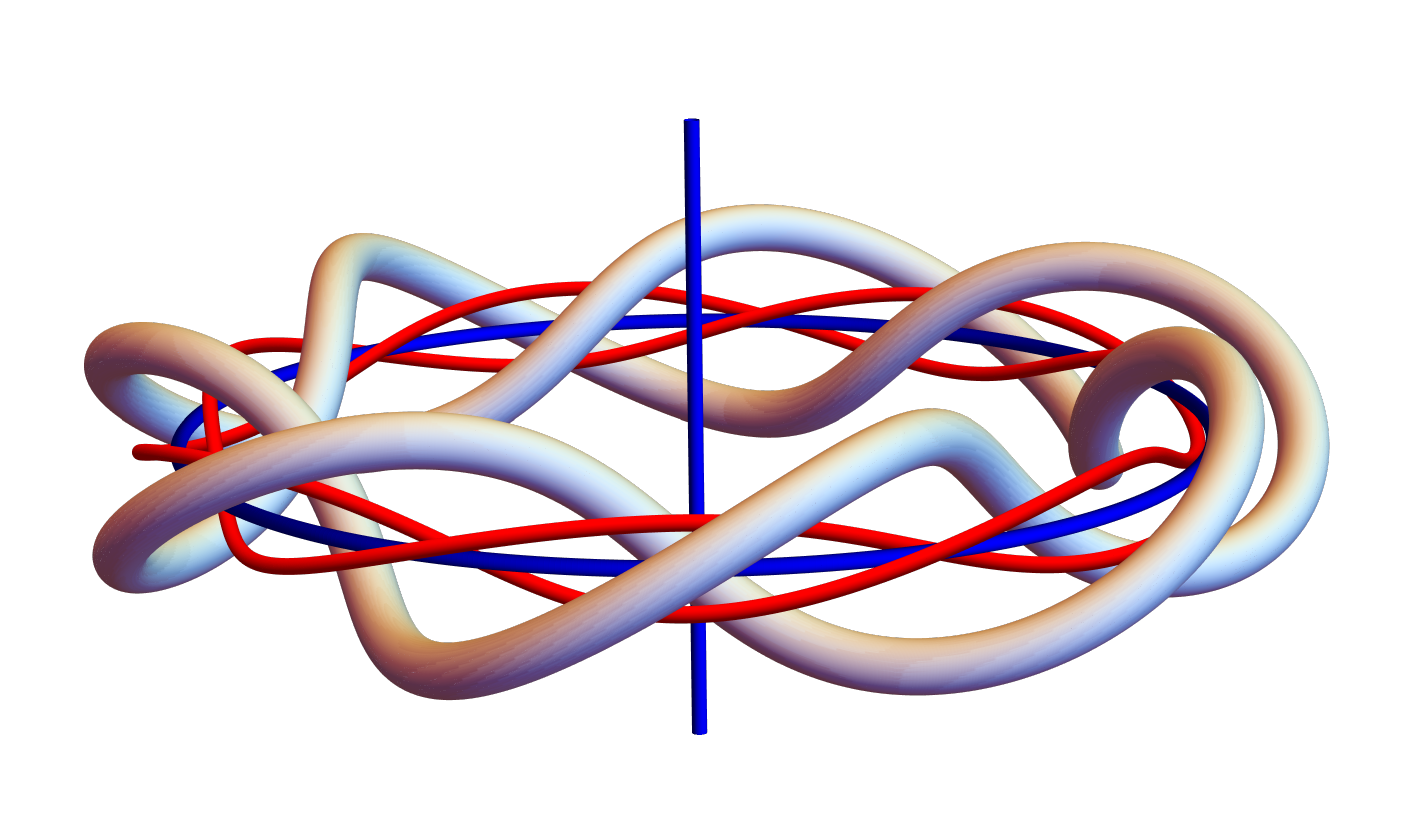}
        \caption{}
        \label{fig:T27-3loops}
    \end{subfigure}%
    ~ 
    \begin{subfigure}[t]{0.5\textwidth}
        \centering
        \includegraphics[width=.8\textwidth]{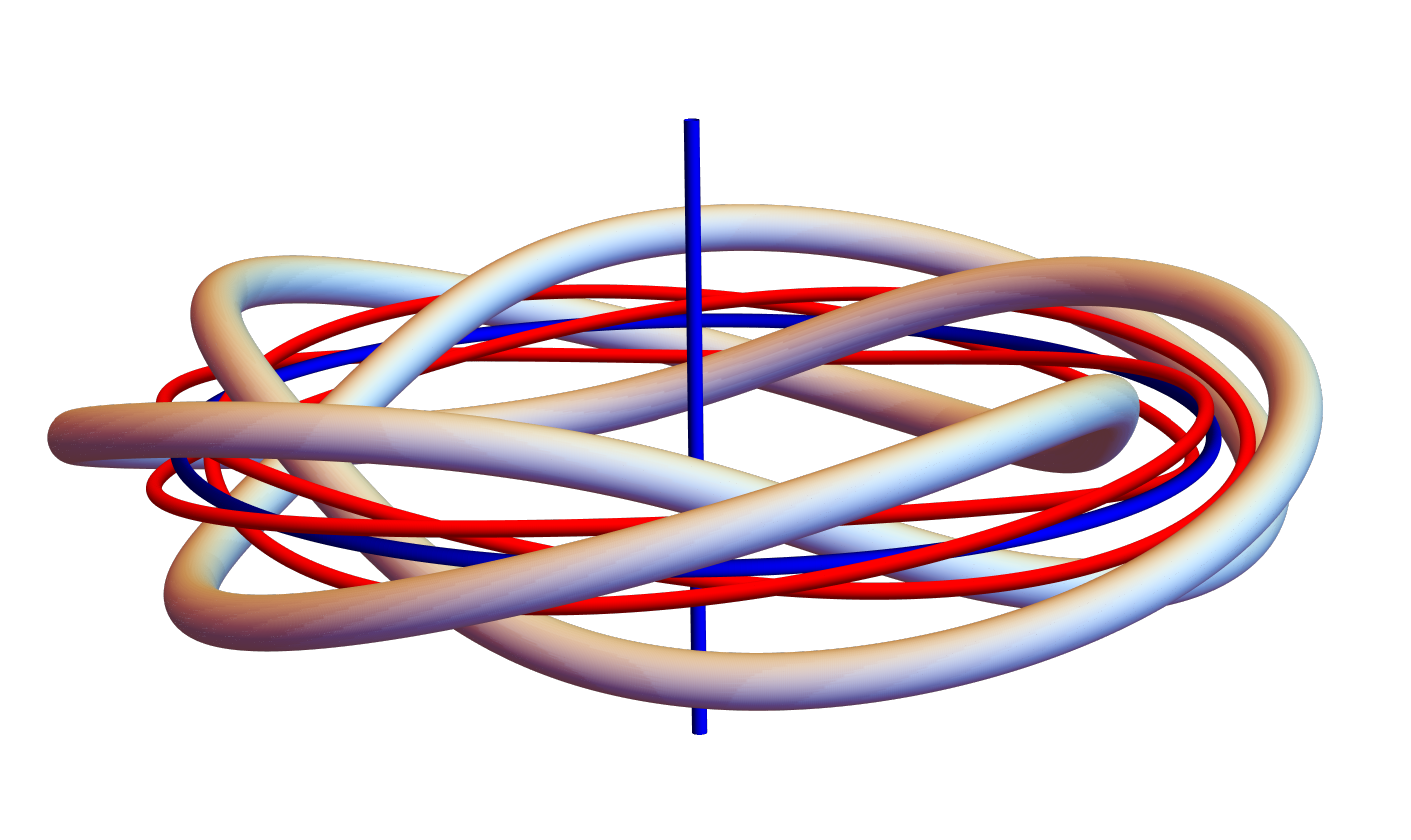}
        \caption{}
        \label{fig:T35-3loops}
    \end{subfigure}
    \caption{Examples of torus knot complements, respectively $T_{2,7}$ and $T_{3,5}$ and corresponding Morse flow loops.
    Blue loops are elliptic critical points (minima) of the Morse function, red loops are hyperbolic (saddles). 
    }
    \label{fig:3-flow-loops-torus-knots}
\end{figure*}

\subsection{Flow loops on the complement of $(2,2p+1)$-torus knots}\label{sec:torus-knots}
For $(2,2p+1)$-torus knots the flow loop structure can be further simplified as follows. 

\begin{proposition}\label{prp:2,2p+1torusknot}
    Let $K=T_{(2,2p+1)}$. Then $M_K$ admits a 1-form with two basic flow loops, an elliptic flow loop of degree $2$ and a negative hyperbolic flow loop of degree $2p+1$. 
    The corresponding annuli are: a bosonic anti-annulus for the degree $2$ loop and a fermionic annulus for the degree $(2p+1)$ loop, forming a negative Hopf link. 
\end{proposition}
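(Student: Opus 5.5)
We begin from the Seifert fibered picture of Proposition~\ref{prp:generaltorusknot}. For $K=T_{2,2p+1}$ the complement $M_K$ fibers over a disk with two singular fibers: the braid axis of degree $2$ (the core around which the knot winds $2p+1$ times) and the axis of degree $2p+1$ (around which it winds twice). The Morse function on the base used there has two minima, at the two singular fibers, and one saddle at a generic fiber. We modify it by a canceling move adapted to the order-$2$ singular fiber, leaving only one minimum and one saddle.

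The key observation is that the singular fiber of multiplicity $2$ has a tubular neighborhood whose quotient is a disk orbifold point with $\mathbb{Z}/2$ isotropy. Near it, the generic fibers close up only after going twice around. We take the first-return map of the flow on a transverse disk near this fiber to be rotation by $\pi$ composed with the local Morse flow. We then replace the elliptic minimum at the $\mathbb{Z}/2$ point, together with the saddle, by a single flow loop whose return map is $-\mathrm{id}$ composed with a hyperbolic map. That loop is negative hyperbolic: its stable and unstable directions are swapped, so $\det(1-\psi)<0$. Concretely, we lift to the double cover of the disk branched at that point, where a $\mathbb{Z}/2$-invariant Morse function has a saddle at the branch point. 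Its descent is a single closed orbit of degree $2p+1$ that is negative hyperbolic.

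We check that this step is a legitimate nowhere vanishing closed $1$-form. We write the form as $\alpha+\epsilon\, dh$, with $\alpha$ the fiber-direction form and $h$ an orbifold Morse function on the base. Critical points of $h$ in the orbifold sense then correspond exactly to basic flow loops. An orbifold Morse function on a disk with boundary gradient pointing outward needs $\chi^{\mathrm{orb}}$-compatible counts. The remaining critical points are one minimum at the order-$(2p+1)$ point, contributing an elliptic loop of degree $2$ (the roles of the degrees are swapped relative to multiplicity), and one saddle at the order-$2$ point, contributing a negative hyperbolic loop of degree $2p+1$. No multiple covers arise beyond those already counted in the annulus partition functions.

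The annulus types then follow from the dictionary \eqref{eq:flowloopsandannuli} with the pulled-back orientation and spin structure $(o,s)=(o,s)$: elliptic gives a bosonic anti-annulus, and negative hyperbolic gives a fermionic annulus. Finally, the two loops are the cores of the two solid tori of the genus one Heegaard splitting, so they form a Hopf link. The sign is negative for the orientations induced from the flow, matching Figure~\ref{fig:Hopf-torus-intro}. We compute it from the linking number of the two axes with flow orientation in the chosen (negative) torus knot convention.

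The main obstacle is the second step: producing rigorously a generic $1$-form whose return map at the order-$2$ fiber is negative hyperbolic, and confirming that no other periodic orbits appear at low action. We would first try the explicit $\mathbb{Z}/2$-equivariant saddle $h=x^2-y^2$ on the branched cover and verify the return map directly.
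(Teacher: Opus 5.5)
Your proposal is correct and is essentially the paper's argument. The paper also cancels the minimum at the order-$2$ point against the saddle, but phrases it on the Seifert surface: the preimage of a small disk around that point is $2p+1$ disks $\Delta^{(k)}$, cyclically permuted by the monodromy, and each is exactly your $\mathbb{Z}/2$ branched double-cover chart containing the minimum and the two lifts of the saddle. The paper replaces each by a single saddle whose unstable branches are swapped by the return map, then reads off the bosonic anti-annulus and fermionic annulus from \eqref{eq:flowloopsandannuli}.

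One slip to fix: a negative hyperbolic return map such as $-\mathrm{diag}(e^{c},e^{-c})$ has $\det(1-\psi)=(1+e^{c})(1+e^{-c})>0$. It is the positive hyperbolic case that has $\det(1-\psi)<0$. So the branch-swapping criterion, which is what both you and the paper actually use, is the correct justification.
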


\begin{proof}
We use a perturbing Morse function as in Proposition \ref{prp:generaltorusknot}. The Seifert surface $\Sigma$ is the branched degree $2(2p+1)$-cover $\pi\colon \Sigma\to D$ of the disk $D$ with two branch points $\zeta_1$ of ramification index $2$ and $\zeta_2$ of index $2p+1$. The former lifts on $\Sigma$ to $2p+1$ intersections of the Seifert surface with the horizontal elliptic flow loop in Figure \ref{fig:3-flow-loops-torus-knots}, while the latter lifts to $2$ intersections with the vertical elliptic flow loop. 
The Morse function on $\Sigma$ is pulled back from a Morse function on $D$ with two minima at $\zeta_1$ and $\zeta_2$, and a saddle point at a regular point $\zeta$. 
The pre-images of the critical points (i.e., the critical points of the pull back function) will be denoted as follows 
\be
\begin{split}
\pi^{-1}(\zeta_1) \ &= \ \{\zeta_1^{(1)},\zeta_1^{(2)},\dots,\zeta_1^{(2p)},\zeta_1^{(2p+1)}\},\\
\pi^{-1}(\zeta_2) \ &= \ \{\zeta_2^{(1)},\zeta_2^{(2)}\},\\
\pi^{-1}(\zeta) \ &= \ \{\zeta^{(1)},\zeta^{(2)},\dots,\zeta^{(2p+1)},\zeta^{(2p+2)},\dots,\zeta^{(4p+2)}\}.
\end{split}
\ee

Let us move the saddle $\zeta$ very close to $\zeta_1$. 
Then the pre-image of a small disk $\Delta$ around $\zeta_1$ consists of $2p+1$ small disks $\Delta^{(k)}\subset\Sigma$, $k=1,\dots,2p+1$, where we choose notation so that
\be\label{eq:crit-point-grouping}
	\{\zeta_1^{(k)},\zeta^{(k)},\zeta^{(k+2p+1)}\}\subset \Delta^{(k)},
\ee
so that the return map takes $\Delta^{(k)}$ to $\Delta^{(k+1)}$, where $1$ is identified with $2p+2$, 
and so that the unstable manifold of $\zeta^{(k)}$ has one flow line that connects with $\zeta_1^{(k)}$ and one that connects with $\zeta_2^{(1)}$, the unstable manifold of $\zeta^{(k+2p+1)}$ also has one flow lines that connects with $\zeta_1^{(k)}$ but the other connects with $\zeta_2^{(2)}$. Note that, with this notation, the return map on $\pi^{-1}(\zeta)$ is 

\be
        \zeta^{(k)}\mapsto
        \begin{cases}
                \zeta^{(k+1+2p+1)}, &\text{ if } 1\le k < 2p+1,\\
                \zeta^{(2p+2)}, &\text{ if } k =2p+1,\\
                \zeta^{(k+1-(2p+1))} &\text{ if } 2p+2\le k < 4p+2,\\
                \zeta^{(1)} , &\text{ if } k =4p+2.
        \end{cases}
\ee

Note that the component of the flow along $\Sigma$ across the boundary $\partial \Delta^{(k)}$ has four general position tangencies. Hence we can redefine the flow inside $\Delta^{(k)}$ to be a single saddle $\zeta^{(k)}$ at the center with unstable manifold components connecting to $\zeta^{(1)}_2$ and $\zeta_2^{(2)}$. The return map then takes $\zeta^{(k)}$ to $\zeta^{(k+1)}$ and the flow line $\zeta^{(k)}\to\zeta_2^{(1)}$ (respectively $\zeta^{(k)}\to\zeta_2^{(2)}$) to the flow line $\zeta^{(k+1)}\to\zeta_2^{(2)}$ (respectively to $\zeta^{(k+1)}\to\zeta_2^{(1)}$). 

This then gives a flow with only two flow loops. 
One of these is the original degree-two elliptic flow loop with associated bosonic anti-annulus, corresponding to $\zeta_2$.
The other is a degree $2p+1$ loop corresponding to a saddle that remains at $\zeta$. Since the return map permutes its stable (or unstable) manifolds, this is a negative hyperbolic flow loop.
The corresponding annulus is therefore a fermionic annulus, according to~\eqref{eq:flowloopsandannuli}. 
\end{proof}

An example of the flow loops of Proposition \ref{prp:2,2p+1torusknot} for $T_{2,7}$ is shown in Figure \ref{fig:torus-link-flow-loops}.

\begin{remark}[Coalescence phenomena for flow loops]\label{r : general 2+1=1} 
    The proof of Proposition \ref{prp:2,2p+1torusknot} shows that in general, an elliptic flow loop with a hyperbolic flow loop going twice around it, as in the proof, are deformation equivalent to a single negative hyperbolic flow loop. Translated to holomorphic annuli this means the following. If the orientations and spin structures agree we see that a bosonic anti-annulus with a bosonic annulus going twice around it are equivalent to a single fermionic annulus going once around. 
    Changing orientation and spin structures we get similar relations for other annuli. 
    Reversing orientation gives that a bosonic annulus together with a twice-around bosonic anti-annulus are equivalent to a single fermionic anti-annulus.
    Changing spin structure gives that a fermionic anti-annulus with a twice-around fermionic annulus are equivalent to a single bosonic annulus.
    Changing both orientation and spin structure gives that a fermionic annulus together with a twice-around fermionic anti-annulus are equivalent to a single bosonic anti-annulus.
\end{remark}

\begin{figure}[h!]
\begin{center}
\includegraphics[width=0.45\textwidth]{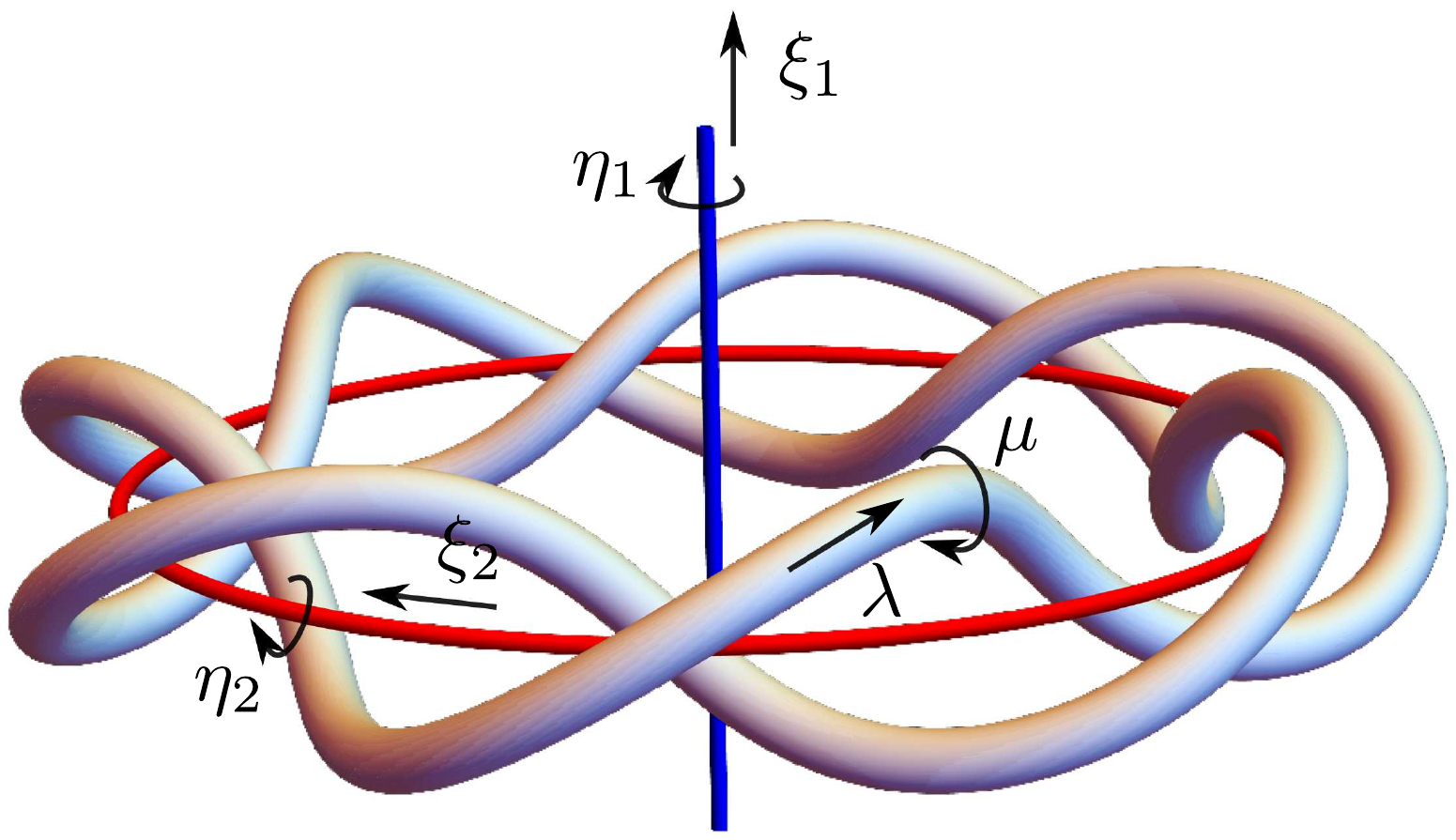}
\caption{The $T_{2,7}$ torus knot with loop flow lines for $\alpha$ forming a negative Hopf link, after the simplification described in Proposition \ref{prp:2,2p+1torusknot}. 
The blue line corresponds to the elliptic flow loop while the red line corresponds to the hyperbolic flow loop.
Note that orientation of $M_K$ here agrees with that of $S^3$. In particular the orientation of $(\lambda,\mu)$ on the boundary torus gives a normal vector that points outward from $M_K$. 
}
\label{fig:torus-link-flow-loops}
\end{center}
\end{figure}

\begin{remark}
One can use the flow loop formula for the Alexander polynomial, see e.g. \cite{dioagoekholm}, to see that $(2,2p+1)$-torus knots are the only torus knots that allows for only two flow loops on their complement. 

Recall that the  Alexander polynomial of $T_{q,p}$ torus knots is given by (where we normalize by dividing by an overall $(t-1)$ factor). 
\be
	\Delta(t) = \frac{(t^{pq}-1)}{(t^p-1)(t^q-1)},
\ee
where the factors in the denominator correspond to the two {bosonic annuli} of degree $p$ and $q$ and that in the numerator is the {bosonic anti-annulus} of degree $pq$. 
Conventions used here involve a change of orientation on one of the Lagrangians compared to our previous discussion, which swaps annuli and anti-annuli according to \eqref{eq:flowloopsandannuli}.

For $(p,q)=(2,2p+1)$, the polynomial simplifies to 
\be\label{eq:Alexander-reduced-annuli}
	\Delta(t) = \frac{(t^{2(2p+1)}-1)}{(t^2-1)(t^{2p+1}-1)} = \frac{(t^{(2p+1)}+1)}{(t^2-1)}\,,
\ee
with contributions from an elliptic flow loop of degree $2$ and a negative hyperbolic flow loop of degree $2p+1$, as shown in Proposition \ref{prp:2,2p+1torusknot}.

Analogous simplifications for other $(p,q)$ torus knots can be found. However since none of these has only two factors, it follows that there must be more than two flow loops in other cases.
\end{remark}

\section{Partition functions in the $\mathfrak{gl}_1$-skein }\label{sec:evaluation}
In this section we specialize the results of Section \ref{sec:knot-complements} to the $\mathfrak{gl}_1$-skein and obtain explicit formulas for torus knot complements. In the case of $(2,2p+1)$-torus knot we furthermore obtain a quiver-like expansion of the partition function and use the geometry of flow loops to derive the augmentation polynomial in a new way.

\subsection{The complement $\mathfrak{gl}_1$-skein partition function for fibered knots}\label{sec:cobordism}
In this section we prove Corollary \ref{c:U(1) knot complement}. Before going into the actual proof we review general properties of the $\mathfrak{gl}_1$-skein of knot complements.

Evaluating \eqref{eq:localization-formula-skein} requires two further steps. 
First, we identify $(S^1\times D^2)_j$ with neighborhoods $N_j$ of $\gamma_j$ in $M_K$. The complement of these neighborhoods then gives a (symplectic) cobordism $\CL_\alpha\approx M_K\setminus (\cup_jN_j)$ with 
\be\label{eq:cobordism-L-MK}
	\partial_+\CL_\alpha=\Lambda_K,\quad \partial_-\CL_\alpha=\cup_j\partial N_j\,,
\ee
comparing Figures \ref{fig:MK-pushoff} and \ref{fig:link-conormal-pushoff}.
Second, for general knots $K$ it is an elaborate task to find a presentation of the HOMFLYPT skein-module of $M_K$ and to get to something directly computable we specialize to the $\mathfrak{gl}_1$ skein. 
Here we need to compute the $\mathfrak{gl}_1$-skein operator $\hat\CO$ induced by the cobordism $\CL_\alpha$.

 \begin{figure*}[h!]
     \centering
     \begin{subfigure}[t]{0.5\textwidth}
\begin{center}
\includegraphics[width=\textwidth]{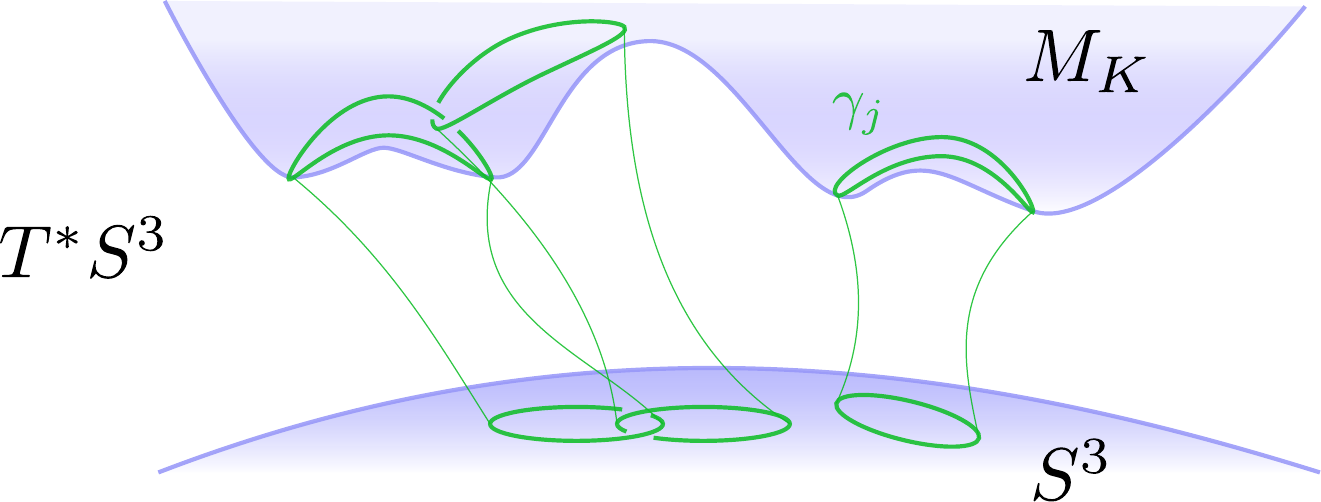}
\caption{}
\label{fig:MK-pushoff}
\end{center}
     \end{subfigure}%
     ~ 
     \begin{subfigure}[t]{0.5\textwidth}
\begin{center}
\includegraphics[width=\textwidth]{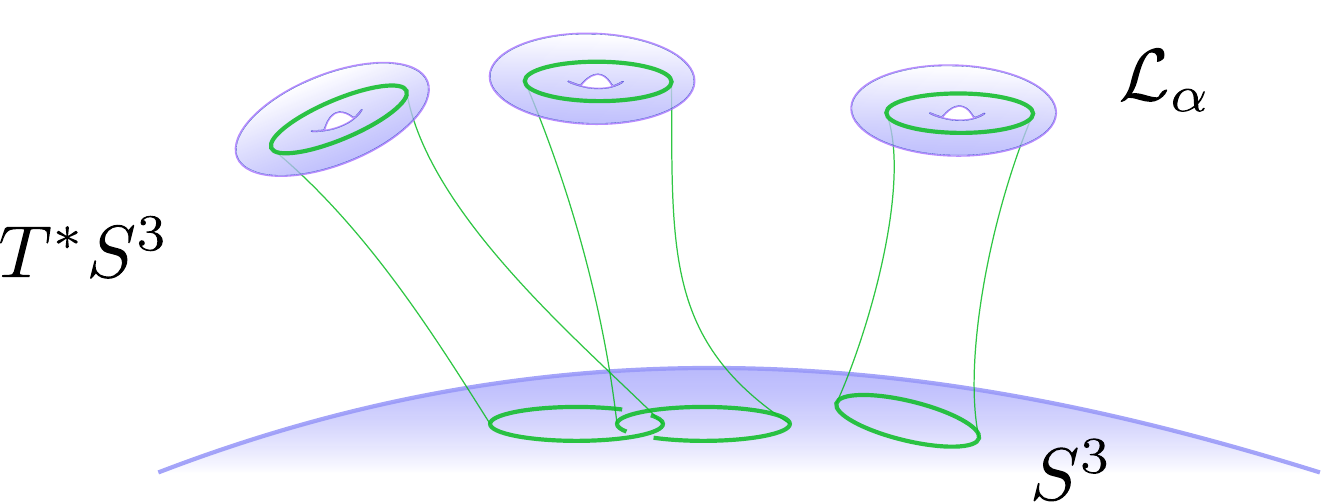}
\caption{}
\label{fig:link-conormal-pushoff}
\end{center}
     \end{subfigure}
     \caption{Left: Pushoff of $M_K$ with holomorphic annuli stretching to the zero section. Right: Pushoff of the link conormal for $\gamma_1,\dots,\gamma_k$ with holomorphic annuli stretching to the zero section.}
     \label{fig:MK-S3-L}
 \end{figure*}

\subsubsection{The $\mathfrak{gl}_1$ skein of $M_K$}\label{ssec:gl1skein}
The $\mathfrak{gl}_1$-skein of $M_K$ encodes `homology and linking' and is given by
\be
	\Sk_{\mathfrak{gl}_1}(M_K) \simeq \Q[q^\pm]\otimes H_1(M_K;\Q)\approx \Q[q^\pm]\otimes \Q[\mu^\pm]\,,
\ee
where $\mu$ is the homology generator corresponding to a 
meridian cycle linking the knot $K$, see figure~\ref{fig:skein-MK}, see Corollary \ref{c:U(1) knot complement} for the definition of linking.
The $\mathfrak{gl}_1$-skein algebra associated with $\Lambda_K=\partial M_K\approx T^2$ is a quantum torus algebra generated by $\mu^\pm$ and $\lambda^\pm$ with commutation relation
\be
	\hat \lambda \hat\mu = q^{-2} \hat \mu\hat\lambda\,,
\ee
where $\lambda$ is the longitude of the knot.
These operators correspond to the quantization of the adapted coordinates for the complement branch
\be
	\mu := x_{m} = y\,,
	\qquad
	\lambda := y_{m} = x^{-1}\,,
\ee 
defined earlier in Seifert framing. 
\begin{figure}[h!]
\begin{center}
\includegraphics[width=0.3\textwidth]{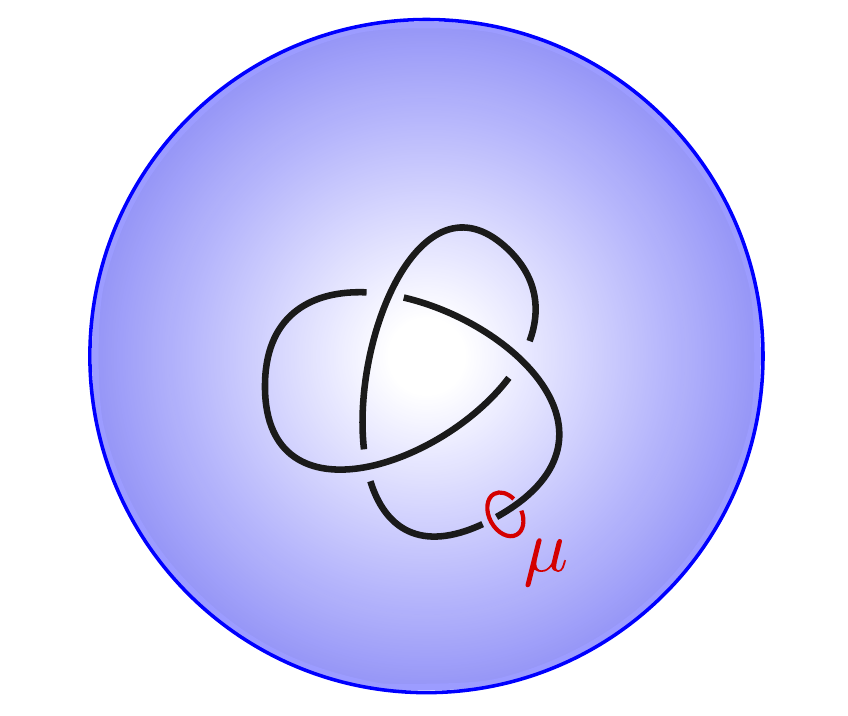}
\caption{The generator of $\Sk_{\mathfrak{gl}_1}(M_K)$ for $K=3_1$.}
\label{fig:skein-MK}
\end{center}
\end{figure}

If $N\subset M_K$ is a solid torus and $\gamma\subset M_K$ is a link then, since skein relations are local, the evaluation of $\gamma$ in the $\mathfrak{gl}_1$-skein of $M_K$ factors through the $\mathfrak{gl}_1$-skein of $N$. 

For the standard basis $\sfW_\lambda$ of the skein of $N$, $\sfW_\lambda$ maps to $0$ in the $\mathfrak{gl}_1$-skein except for symmetric partitions $\lambda=(d)$, in which case it maps to the corresponding power of the homology generator $\xi$
\be
	\sfW_\lambda \ \mapsto\ \left\{\begin{array}{lr}
	\xi^{d} \qquad & \lambda = (d) \\
	0 \qquad &\text{otherwise}
	\end{array}\right.
\ee
The contribution from basic annuli stretching between two solid tori $N_1$ and $N_2$ with homology generators $W_{\ydiagram{1}}^{(k)} = \xi_k$ for $k=1,2$, are then given by the following specialization of \eqref{eq:annuli-explicit-skein} to the $\mathfrak{gl}_1$ skein
\be\label{eq:annuli-explicit}
\begin{array}{|c|c|c|} 
\hline
(\sigma,\epsilon) & \text{terminology} & \text{$\mathfrak{gl}_1$ partition function $\psi^{(\sigma,\epsilon)}_{\text{ann}}$} \\
\hline\hline
    (1,+1) &\text{bosonic annulus} &  \frac{1}{1-\xi_1 \, \xi_2^{-1}}\\
    \hline
    (1,-1) &\text{fermionic annulus}& \frac{1}{1+\xi_1 \, \xi_2^{-1}}\\
    \hline
    (-1,+1) &\text{bosonic anti-annulus}& 1-\xi_1 \, \xi_2^{-1}\\
    \hline
    (-1,-1) &\text{fermionic anti-annulus} & 1+\xi_1 \, \xi_2^{-1}\\
    \hline
\end{array}
\ee

\subsubsection{The cobordism}\label{ssec:cobordism} 
We next consider inclusions of several solid tori $N_j\subset M_K$ and $T_j\subset S^3$ with basic annuli of types $(\sigma_{j},\epsilon_{j})$ stretched between them, ending on flow loop $\gamma_j$ at both ends. 
The boundary components in $T_j$ give HOMFLYPT polynomials. 
The boundary components in $M_K$ can be evaluated in terms of the generator $\mu$ of the $\mathfrak{gl}_1$-skein:  
\be
    \sfW^{(\gamma_j)}_{(d_j)} \ \mapsto \ (q^{\beta_j}\mu^{m_j})^{d_j}\,,
\ee
where $(\beta_j,m_j)\in\IZ\times  \IZ$ are determined by the framing and homology class of the curve $\sfW^{(\gamma_j)}_{(1)}$ in $M_K$.

When there are several flow loops $\gamma_j\subset N_j$ included there is also mutual linking between the Wilson lines $\gamma_j$. These are determined by the topology of the cobordism $\CL_\alpha$ in \eqref{eq:cobordism-L-MK} and the effect in the $\mathfrak{gl}_1$-partition is obtained by twisting the ordinary (unlinked) product partition function by 
a power of $q$ determined by overall linking number
\cite{Ekholm:2018eee, Ekholm:2019lmb}
\be\label{eq:cobordism-link-Gamma}
	\sfW^{(\gamma_1)}_{(d_1)} \dots \sfW^{(\gamma_m)}_{(d_m)}
	\ \mapsto \ 
	q^{\bd^t \cdot\Gamma\cdot \bd} \,
	\prod_{j=1}^{m} (q^{\beta_j} \mu^{m_j})^{d_j}
\ee
Here $\Gamma \in \IZ^{m\times m}$ is a symmetric quadratic form 
\be
	\Gamma_{ij} = \Gamma_{ji} = \mathrm{lk}(\gamma_i,\gamma_j)\,,
\ee
see Corollary \ref{c:U(1) knot complement} for the definition of linking number.

\subsubsection{The general $\mathfrak{gl}_1$-formula }
Collecting the above observations leads to the following specialization of the skein-valued curve counting formula \eqref{eq:localization-formula-final-skein}. 

\begin{proof}[Proof of Corollary \ref{c:U(1) knot complement}]
Section \ref{ssec:gl1skein} and \ref{ssec:cobordism} gives the following formula for the evaluation of curve counts on $M_K$ in $\mathfrak{gl}_1(M_K)$, by \eqref{eq:localization-formula-skein}  
\be\label{eq:localization-formula-final}
	\psi_{M_K}(a,q,\mu) = \sum_{d_1,\dots, d_m\geq 0}
	H_{(d_1)^\vee,\dots, (d_m)^{\vee}}(a,q)
	\,
	q^{\bd^t \cdot\Gamma\cdot \bd}
	\,
	\prod_{j=1}^{m}
	(\epsilon_j \,\sigma_j
	\,
	q^{\beta_j} \mu^{m_j})^{d_j}
\ee
where the HOMFLYPT polynomial of the link $\gamma_1\,\dots,\gamma_m$ is colored by either symmetric or antisymmetric partitions along the $j$-th component according to the type of flow loops
\be\label{eq:d-gamma-vee}
	\begin{array}{|c|c|c|}
		\hline
		\text{annulus type} & (\sigma_j,\epsilon_j) & (d_j)^\vee \\
		\hline\hline
		\text{bosonic annulus} & (+1,+1) & (d_j)\\
		\hline
		\text{femionic annulus} & (+1,-1) & (d_j)\\
		\hline
		\text{bosonic anti-annulus} & (-1,+1) & (d_j)^t= (1)^{d_j} \\
		\hline
		\text{fermionic anti-annulus} & (-1,-1) & (d_j)^t= (1)^{d_j} \\
		\hline
	\end{array}
\ee   
The assignment of an anti-annulus to elliptic flow loops follows from Section \ref{ssec:annuli}, keeping in mind that we match orientations and spin structures of $S^3$ and $M_K$. 
\end{proof}

The partition function of Corollary \ref{c:U(1) knot complement} determines a quantization $\hat A_{M_K}$ of the augmentation curve for the complement branch, by demanding that it annihilates \eqref{eq:localization-formula-final}. 

\begin{remark}
In fact, the quantum curve $\hat A_{M_K}$ can be obtained from the quantum operators that annihilate the \emph{conormal} partition function of the link $\{\gamma_1,\dots,\gamma_m\}$, i.e. the operator equations that encode recursion relations for $H_{(d_1)^\vee,\dots, (d_m)^{\vee}}(a,q)$.
The derivation involves introducing corrections due to linking, and then elimination of variables in the $\mathfrak{gl_1}$-skein of $\partial \CL_\alpha$. 
The cobordism operator that introduces linking is described in Appendix \ref{app:cobordism-quivers}. 
Below we illustrate, for simplicity, the classical version of this procedure in the specific case of $T_{2,2p+1}$ torus knots.
\end{remark}

\subsection{Explicit formula for $(2,2p+1)$-torus knots}\label{sec:torus-knots}
Recall from Proposition \ref{prp:2,2p+1torusknot} that $(2,2p+1)$-torus knots admit a 1-form on the complement with only two simple flow loops forming a negative Hopf link. 
In particular, we find an elliptic flow loop supporting a bosonic anti-annulus in class $\mu^2$ 
and a negative hyperbolic flow loop supporting a fermionic annulus in class $\mu^{2p+1}$.
Therefore, the corresponding winding numbers and signs in \eqref{eq:localization-formula-final} are, according to \eqref{eq:annuli-explicit}
\be\label{eq:2-torus-data-m-sigma}
	m_1=2,
	\quad 
	m_2=2p+1,
	\quad 
	\sigma_1=-1,
	\quad 
	\sigma_2=1,
	\quad
	\epsilon_1 = +1,\quad \epsilon_2=-1.
\ee
This leads to the following specialization of that general formula
\begin{proposition}\label{prop:2-torus-formula}
    The $\mathfrak{gl}_1$-partition of the complement $M_{T_{(2,2p+1)}}$ of a $(2,2p+1)$-torus knot is 
\be\label{eq:gl1-formula-2-2p+1-knots}
    	\psi_{(X,M_{K})}
	=\sum_{n_1, n_2 \geq 0}
	H_{(1)^{n_1},(n_2)}(a,q)
	\,
	q^{\Gamma_{11} n_1^2 + \Gamma_{22} n_2^2 +2\Gamma_{12} \, n_1 n_2}
	\,
	(- q^{\beta_1} \,\mu^2)^{n_1}
	(- q^{\beta_2}\,\mu^{2p+1})^{ n_2}.
\ee
    Here $H_{(1)^{n_1},(n_2)}(a,q)$ denotes the HOMFLYPT polynomial of the negative Hopf link in $S^3$ in representation $((n_1)^t,(n_2))$,   
\be\label{eqGamma-matrix-2p-1}
	\Gamma = \left(\begin{array}{cc}
	0 & -1 \\
	-1 & -2p-1
	\end{array}\right),
\ee
and $\beta_1$ and $\beta_2$ depends on the choice of basis $\mu$. 
Note that under a change of basis $\mu\mapsto q^{s}\mu$, $\beta_1$ and $\beta_2$ satisfy
\be\label{eq:beta-i-relation}
	\beta_1\mapsto \beta_1 + 2s,\quad \beta_2\mapsto \beta_2 + (2p+1)s\,.
\ee
There is a 4-chain for $M_K$ and a choice of basis element $\mu$ such that 
\be\label{eq:betas}
	\beta_1=1\,,
	\qquad
	\beta_2=2p+2\,.
\ee
\end{proposition}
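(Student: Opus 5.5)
The plan is to specialize Corollary \ref{c:U(1) knot complement} to the flow-loop data of Proposition \ref{prp:2,2p+1torusknot}, and then compute the linking form $\Gamma$ and the framing exponents $\beta_j$ directly from the geometry. By Proposition \ref{prp:2,2p+1torusknot} there are exactly two flow loops. The first is $\gamma_1$, elliptic of degree $2$, carrying a bosonic anti-annulus, so $(\sigma_1,\epsilon_1)=(-1,+1)$ and its colour is $(n_1)^t=(1)^{n_1}$. The second is $\gamma_2$, negative hyperbolic of degree $2p+1$, carrying a fermionic annulus, so $(\sigma_2,\epsilon_2)=(+1,-1)$ and its colour is $(n_2)$. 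This is the data recorded in \eqref{eq:2-torus-data-m-sigma}. Inserting it into \eqref{eq:localization-formula-final} gives the two-fold sum \eqref{eq:gl1-formula-2-2p+1-knots}. Each sign satisfies $\sigma_j\epsilon_j=-1$, which produces the factors $(-q^{\beta_j}\mu^{m_j})^{n_j}$, and the homology classes are $m_1=2$, $m_2=2p+1$, read off as the number of intersections of $\gamma_j$ with the Seifert surface $\Sigma$ in the branched-cover description. Since $\gamma_1\cup\gamma_2\subset S^3$ is the negative Hopf link (Proposition \ref{prp:2,2p+1torusknot}, Figure \ref{fig:torus-link-flow-loops}), the coefficient is the HOMFLYPT polynomial of that link in colours $((1)^{n_1},(n_2))$.

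Next I compute $\Gamma$ from the definition in Corollary \ref{c:U(1) knot complement}: $\Gamma_{\beta\gamma}=\sigma\cdot\gamma$, where $\partial\sigma=\beta-k\mu$. I will realize $\gamma_1$ and $\gamma_2$ as the two braid axes (cores of the Seifert fibration) and $K$ as a $(2,2p+1)$ curve on a separating torus.
\begin{itemize}
\item For $\Gamma_{12}$: take a disk bounded by $\gamma_1$ in $S^3$. It meets $\gamma_2$ once, with negative sign for the negative Hopf link, and meets $K$ in $2$ points. Tubing those two punctures to the boundary torus gives a $2$-chain with boundary $\gamma_1-2\mu$, so $\Gamma_{12}=-1$.
\item For $\Gamma_{11}$: the correction chain for $\gamma_1$ meets a pushoff of $\gamma_1$ algebraically zero times in the framing induced by the annulus.
\item For $\Gamma_{22}$: a pushoff of $\gamma_2$ along the Seifert fibration direction links $\gamma_2$ $-(2p+1)$ times relative to the $\mu$-correction, because $K$ wraps $2p+1$ times around that axis. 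This gives $\Gamma_{22}=-2p-1$.
\end{itemize}
I will check the signs against the orientation conventions in Figure \ref{fig:torus-link-flow-loops}.

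The transformation rule \eqref{eq:beta-i-relation} is formal. Replacing $\mu$ by $q^{s}\mu$ multiplies $\mu^{m_j n_j}$ by $q^{s m_j n_j}$, and this is absorbed into $\beta_j\mapsto\beta_j+m_js$.

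The main obstacle is the normalization $\beta_1=1$, $\beta_2=2p+2$. The $\beta_j$ depend on the framing of the annulus boundaries in $M_K$ and on the choice of $4$-chain, which fixes the $a$- and $q$-linking corrections. Because of \eqref{eq:beta-i-relation}, only the combination $(2p+1)\beta_1-2\beta_2$ is basis-independent, and it must equal $(2p+1)-2(2p+2)=-2p-3$. I would compute this combination geometrically: the framing of $\gamma_j$ induced by the shift along $\alpha$ is the fibration framing, and I would compare it with the framing coming from the chain $\sigma$ bounding $\gamma_j-m_j\mu$. The first attempt would be to read the invariant off the self-intersection of these chains in the Seifert-fibered model. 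As a fallback, I would fix the $\beta_j$ by requiring that the $p=0$ case reproduce the known unknot-complement (solid torus) partition function, and then use the recursion with respect to $p$. Once the invariant combination is determined, choosing $s$ achieves $\beta_1=1$.
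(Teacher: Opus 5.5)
Your specialization of Corollary~\ref{c:U(1) knot complement} is correct and matches the paper: the signs $\sigma_j\epsilon_j=-1$, the colours $(1)^{n_1}$ and $(n_2)$, and the classes $\mu^2$ and $\mu^{2p+1}$ are all right. So are $\Gamma_{12}=-1$ from the spanning disk of $\gamma_1$ and the rescaling rule \eqref{eq:beta-i-relation}.

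The gap is everything that depends on framings and on the $4$-chain. The final claim \eqref{eq:betas} is never proved. You say that $(2p+1)\beta_1-2\beta_2$ ``must equal'' $-2p-3$, but that value is read off from the statement, not derived. You also never specify a $4$-chain, and this is not a side issue. The $\beta_j$ are intersection numbers of the annuli with the $4$-chain $C$ of $M_K$, including the linking of $\gamma_j\subset S^3$ with $C\cap S^3$. They therefore depend on $C$, not only on framings in $M_K$, which is why the statement reads ``there is a $4$-chain''. Without fixing $C$, your proposed ``self-intersection in the Seifert-fibered model'' has nothing to compute.

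The fallback fails too. Matching $p=0$ against an independently known unknot-complement partition function could at best fix $(\beta_1,\beta_2)$ at $p=0$. There is no recursion in $p$ that would then produce $\beta_2=2p+2$.

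The diagonal entries of $\Gamma$ are only asserted. Nothing shows that the framing of $\gamma_1$ relevant to the annulus is the disk framing, which is what $\Gamma_{11}=0$ needs. Pushing $\gamma_2$ off ``along the Seifert fibration direction'' is not meaningful: $\gamma_2$ is itself an exceptional fiber, so the fibration direction is tangent to it. Nearby regular fibers are $(2,2p+1)$-cables of $\gamma_2$, not an integral framing.

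The missing idea is the paper's explicit choice of $4$-chain. Take $C$ to be the chain of the perturbed shift field $\xi'=\xi+v$. Here $v$ is the gradient of the Morse function on the pages, with its critical loci displaced slightly off the flow loops, so that $C\cap S^3=\gamma_1'\cup\gamma_2'$ consists of nearby parallels.

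Along $\gamma_1$, the inward normal of its spanning disk is homotopic to $\xi'$. The pushoff then has zero intersection with the disk, giving $\Gamma_{11}=0$, and the only contribution to $\beta_1$ is $\mathrm{lk}(\gamma_1,\gamma_1'\cup\gamma_2')=1$.

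Along $\gamma_2$, the inward normal of its disk rotates $2p+1$ times relative to $\xi'$. This is where the $2p+1$ strands of $K$ winding around $\gamma_2$ enter: they force the displaced page gradient $v$ to turn $2p+1$ times relative to the disk framing. The rotation produces $2p+1$ local intersections with $C$, and together with the linking contribution $1$ this gives $\beta_2=2p+2$. The same comparison against the disk framing gives $\Gamma_{22}=-2p-1$.
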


\begin{proof}
Proposition \ref{prp:2,2p+1torusknot} gives two flow loops that form a {negative} Hopf link. It then remains only to determine the linking matrix $\Gamma$ and the $q$-powers of the basic annuli. 

Consider the planar disks $D_k$ bounded by $\gamma_k$. This disk intersects $T_{2,2p+1}$ transversely in $2$ points for $k=1$ and $2p+1$ points for $k=2$. If we take the standard meridian of $T_{2,2p+1}$ as a generator of the $\mathfrak{gl}_1$-skein of $M_{T_{2,2p+1}}$ then these disks shows that $\gamma_1$ is in class $\mu^2$ and $\gamma_2$ is in class $\mu^{2p+1}$. 

The disks $D_1$ and $D_2$ also intersects $\gamma_2$ and $\gamma_1$ with intersection sign $-1$. 
Note here it is crucial that the orientations on $M_K$ and $S^3$ agree, so that the Hopf link has signature $-1$ in both.
This gives the off diagonal elements in $\Gamma$. In order to determine the diagonal elements of $\Gamma$ and the $q$-powers of the annuli we must specify a $4$-chain. 

We will take a $4$-chain that is a small perturbation of the $4$-chain given by the image $\xi$ under the shift map $S^3\setminus K\to M_K$ of the shifting vector field. We take the perturbing vector field to be a small shift of the gradient Morse function on the Seifert surface with a critical point moved slightly off of the flow loop. If $v$ is the perturbing vector field, $\xi'=\xi+v$, and $C$ is the 4-chain of $\xi'$ then $C\cap S^3=\gamma_1'\cup\gamma_2'$, where $\gamma_1$ and $\gamma_2$ are nearby parallels of $\gamma_1$ and $\gamma_2$ that lie at the location of the critical points of $v$. 

Note next that the inward normal to the disk $D_2$ is homotopic to $\xi'$ along $\gamma_1$ and that therefore $\gamma_1$ framed by $\xi'$ represents $\mu$ in the $\mathfrak{gl}_1$-skein of $M_K$. The annulus then contributes $q\mu$, where the power of $q$ is determined by $\mathrm{lk}(\gamma_1,\gamma_1'\cup\gamma_2')=1$, i.e., $\beta_1=1$.

The inward normal of $D_2$, in contrast, rotates with respect to $\xi$. There are local intersections with $C$ each time the inward normal is parallel to $\pm v$, which gives a contribution of $q^{2p+1}$ taking into account also the contribution from $\mathrm{lk}(\gamma_2,\gamma_1'\cup\gamma_2')=1$ gives $\beta_2=2p+2$.

The calculation of the diagonal elements of $\Gamma$ follows from the same calculations, we need to compute the intersection of $\gamma_j$ shifted by $\xi'$ with the deformation surface $D_j$. This is the same calculation as above save for the linking in $S^3$, and with our choice of orientation we get $\Gamma_{11}=0$ and $\Gamma_{22}=-2p-1$.  

\end{proof}

\subsection{Quiver structure}\label{sec:quiver-structure}

Counts of holomorphic curves on rank-1 knot conormal $A$-branes are known to admit, in many examples, a quiver structure \cite{Kucharski:2017ogk}. Concretely, this means that the rank-1 specialization of \eqref{eq : open string part function} takes the form
\be
	\psi_{(X,L_K)}(x,a,q) = \sum_{d_1,\dots, d_m} q^{C_{ij}d_i d_j} \prod_{j=1}^{m} \frac{x_j^{d_j}}{(q^2;q^2)_{d_j}}
\ee
where $C_{ij}=C_{ji}$ is a symmetric matrix with integer entries and $x_j = (-1)^{C_{jj}} q^{q_j} a^{a_j} x^{k_j}$ are determined by certain integers $q_j,a_j,k_j$, which are part of the data in the knots-quivers correspondence.

In \cite{Ekholm:2018eee, Ekholm:2019lmb} it was shown that the quiver structure expresses $\psi_{L_K}$ as a $\Sk_{\mathfrak{gl}_1}(L_K)$-valued partition function of $m$ formal basic disks $\{D_1,\dots, D_m\}$ whose boundaries have linking 
\be
	\mathrm{lk}(\partial D_i, \partial D_j) = C_{ij}\,.
\ee
Each disk contribution is given by the multi-covering formula
\be
	\psi_{\disk}(X_k) = (X_k;q^2)_\infty^{-1}
	=
	\exp\left(\sum_{d\geq 1} \frac{X_k^d}{d(1-q^{2d})} \right)
	\,,\qquad
	X_k = (-1)^{C_{jj}} q^{C_{jj}-1}\hat x_j \prod_{k}\hat y_k^{C_{jk}}\,,
\ee
where
$\hat y_j\hat x_i = q^{2\delta_{ij}}\hat x_i \hat y_j$ is the quantum torus algebra of holonomies on tubular neighbourhoods of basic disks $\partial D_j$ in $L_K$.
The partition function can be written in terms of these as follows
\be\label{eq:quiverdiskformula}
	\psi_{(X,L_K)} = \langle\psi_{\disk}(X_m)\dots \psi_{\disk}(X_1)\rangle_{\Sk_{\mathfrak{gl_1}}(L_K)}
\ee
where evaluation in the skein of $L_K$ takes $\langle \hat x_k\rangle = x_k$. We point out that in most examples the basic disks are formal and does not correspond to actual holomorphic disks. (The notation in \eqref{eq:quiverdiskformula} differs slightly from \cite[eq. (6.9)]{Ekholm:2019lmb}. The meaning of $X_k$ is slightly different, this alters the commutation relation, but the partition function is recovered in the same way, see \cite{Kucharski:2025lcr} for a related discussion. Also `normal ordering' there here simply corresponds to evaluation in the $\mathfrak{gl}_1$ skein of $L_K$.)

Quiver descriptions of $\mathfrak{gl}_1$ skein-valued curve counts have been found beyond the context of knot conormals in resolved conifold, including for Lagrangian $A$-branes corresponding to other branches of augmentation varieties \cite{Ekholm:2021irc}, as well as toric branes in certain toric Calabi-Yau threefolds \cite{Panfil:2018faz}. See \cite{Kucharski:2025tqb} for a recent overview of results.

Here we will use not only basic disks but also basic annuli. 
We consider generalized quivers that includes also annuli,
whose $\mathfrak{gl}_1$ skein-valued partition functions are given in 
\eqref{eq:annuli-explicit}.

The HOMFLYPT polynomials $H_{(1)^{n_1},(n_2)}(a,q)$ of the (negative) Hopf link are well-known, 
and we show in Appendix \ref{app:Hopf} that they admit a quiver description involving six basic curves: two disks on each conormal, an annulus and an anti-annulus stretching between the two conormal components.
The overall linking matrix is
\be
	C_{\widetilde{\text{Hopf}}} = \left(\begin{array}{cc|cc|cc}
		1 & 0 & 0 & 0 & 0 & 1 \\
		0 & 0 & 0 & 0 & 0 & 1 \\
		\hline
		0 & 0 & 0 & 0 & 0 & -1 \\
		0 & 0 & 0 & 1 & 0 & -1 \\
		\hline
		0 & 0 & 0 & 0 & 0 & 0 \\
		1 & 1 & -1 & -1 & 0 & 0 \\
	\end{array}\right)
\ee
where the first (resp. second) two rows/colums correspond to the disks on the first (resp. second) component of the Hopf conormal, and the last two rows/colums correspond to the annulus and anti-annulus.
The cobordism of the Hopf link conormal into the complement $M_K$ maps the boundaries of disks and annuli to the flow lines in homology classes $\mu^2$ and $\mu^{2p+1}$. In addition, it further modifies linking among the six basic curves by $\Gamma$ from \eqref{eqGamma-matrix-2p-1}, see Appendix \ref{app:Hopf} for details.
Overall, this leads to a quiver description for curve counts on the knot complement of $(2,2p+1)$-torus knots with six basic curves (four disks, an annulus and an anti-annulus) with the following linking
\be
\begin{split}
	C_{M_{K}}
	&=
	\left(\begin{array}{cc|cc|cc}
		1 & 0 & -1 & -1 & -1 & 0 \\
		0 & 0 & -1 & -1 & -1 & 0 \\
		\hline
		-1 & -1 & -2p-1 & -2p-1 & -2p-2 & -2p-3 \\
		-1 & -1 & -2p-1 & -2p & -2p-2 & -2p-3 \\
		\hline
		-1 & -1 & -2p-2 & -2p-2 & -2p-3 & -2p-3 \\
		0 & 0 & -2p-3 & -2p-3 & -2p-3 & -2p-3 \\
	\end{array}\right)
\end{split}
\ee
The open string partition function for the $T_{2,2p+1}$ knot complement in Proposition \ref{prp:2,2p+1torusknot} then admits the following presentation as a quiver-like partition function
\be\label{eq:ZMK-final-torus-knots}
\begin{split}
	\psi_{(X,M_{K})}
	& = \sum_{d_1, d_2, d_3, d_4, d_5\geq 0}\sum_{d_6=0}^{1}
	q^{\bd^t\cdot C_{M_{K}}\cdot \bd}
	\, 
	q^{\beta_1 (d_1+d_2+d_5+d_6)+\beta_2 (d_3+d_4+d_5+d_6)}\,
	\\
	&\qquad\qquad\qquad\times
	\frac{(- \mu^{2})^{ d_1}}{(q^2;q^2)_{d_1}} 
	\frac{(a^2 q \mu^2)^{d_2}}{(q^2;q^2)_{d_2}}
	\frac{(-q \mu^{2p+1})^{d_3}}{(q^2;q^2)_{d_3}} 
	\frac{(a^2 \mu^{2p+1})^{d_4}}{(q^2;q^2)_{d_4}}
	(-a^2 \mu^{2p+3})^{d_5} (\mu^{2p+3})^{d_6}
\end{split}
\ee
where $\bd^t = (d_1,\dots, d_6)$ is the quiver charge vector, and $\beta_i$ are given by \eqref{eq:betas}.

\begin{remark}[Quivers and integrality of M2 brane counts for complements of $(2,2p+1)$ torus knots]\label{rmk:integrality}
Our results directly show that curve counts on knot complements of $(2,2p+1)$ torus knots are organized by a quiver structure.
We recall that partition functions of symmetric quivers admit a factorization in terms of integer-valued motivic Donaldson-Thomas invariants \cite{Kontsevich:2010px, efimov2012cohomological}.
Our quiver formula \eqref{eq:ZMK-final-torus-knots} therefore implies integrality counts of M2 branes with boundaries on a single M5 brane wrapping the knot complement Lagrangian $M_K$. In \cite{Marino:2023eto} an analogous integrality structure was conjectured for complex Chern-Simons partition functions on hyperbolic knot complements. It appears to us that this conjecture is related to our statement and it would be interesting to explore this connection.
Our statement is a counterpart of the conjectural knots-quivers correspondence \cite{Kucharski:2017ogk} which pertains to curve counts on knot conormal Lagrangians. 
The corresponding integrality structures for knot conormals were conjectured long ago in \cite{Ooguri:1999bv, Labastida:2000zp,  Labastida:2000yw}. 
\end{remark}

\subsection{Augmentation curves and Lagrangian cobordisms}\label{sec:aug-curves}

The cobordism \eqref{eq:cobordism-L-MK} gives a new way of computing the augmentation curve for $(2,2p+1)$-torus knots as follows. The twisted Hopf link augmentation variety is a 2-complex dimensional subvariety of $(\IC^*)^4$, which has a branch described by the vanishing locus of
\be
\begin{split}
	B_1(\xi_i, \eta_i) & = 1-\eta_1 - \xi_1\eta_1+a^2 \xi_1 \\
	B_2(\xi_i, \eta_i) & = 1-\eta_2 - \xi_2 + a^2 \xi_2 \eta_2 \\
\end{split}
\ee
where $\xi_i$ denote longitudes and $\eta_i$ meridians on the Hopf link components, see Figure \ref{fig:torus-link-flow-loops}. Since there are no holomorphic disks with boundary in $\CL_\alpha$, the bounding chains of the disks near one of the Hopf link can be continued to a bounding chain in $M_K$. It intersects the central curve of the other Hopf link component according to its linking number. This leads to the following modification in the augmentation equations, compare \eqref{eq:Hopf-Gamma-link}, 
\be
	\tilde B_k(\xi_i,\eta_i):= B_k(\xi_i \prod_j \eta_j^{\Gamma_{ij}}, \eta_i) = 0\,,\qquad k=1,2\,.
\ee 
Then
\be\label{eq:tilde-B}
\begin{split}
	\tilde B_1 & = 1-\eta_1 - \xi_1\eta_1\eta_2^{-1}+a^2 \xi_1\eta_2^{-1} \\
	\tilde B_2 & = 1-\eta_2 - \xi_2\eta_1^{-1}\eta_2^{-(2p+1)} + a^2 \xi_2  \eta_1^{-1}\eta_2^{-2p} \\
\end{split}
\ee
(These relations admit a fairly straightforward quantum lift, see Appendix \ref{app:cobordism-quivers} for a discussion and their relation to the quantum cobordism operator $\hat\CO$.)
Now, specializing variables 
\be\label{eq:xi-mu}
	\xi_i = \epsilon_i \sigma_i \mu^{m_i}  = -\mu^{m_i}
\ee 
where we used $\sigma_i$ and $\epsilon_i$ given in \eqref{eq:2-torus-data-m-sigma},
gives (The negative power in $\lambda^{-1}$ appears to preserve the overall sign of the symplectic structure and comes from the different orientation of boundaries along $K$ and along the Hopf link, see Figure \ref{fig:torus-link-flow-loops}.)
\be\label{eq:lambda-eta}
	\log \lambda^{-1} = \frac{\partial W_{M_K}}{{\partial \log \mu}} = \sum_{j=1}^{2}\frac{\partial \tilde W}{\partial\log \xi_i} \frac{\partial \log \xi_i}{\partial\log \mu}\Big|_{\xi_i=-\mu^{m_i}} =\log  \eta_1 ^{m_1} \eta_2 ^{m_2}\Big|_{\xi_i=-\mu^{m_i}}
\ee
where $W_{M_K}$ is the disk potential for $M_K$ given by the leading asymptotics of \eqref{eq:gl1-formula-2-2p+1-knots}
 \be\label{eq:W-complement-general}
 	\lim_{q\to 1} (q^2-1) \log \psi_{(X,M_K)} 
 \ee
and $\tilde W$ is the superpotential whose critical locus is defined by $\{\tilde B_k=0\}$, which means $\log \eta_i = \frac{\partial \tilde W}{\partial\log\xi_i}$. Winding numbers are $m_1=2, m_2=2p+1$.

Relations \eqref{eq:xi-mu} can also be understood by studying homology generators in $M_K$ after removing a neighbourhood of the Hopf link as in Figure \ref{fig:torus-link-flow-loops}, see Figure \ref{fig:homology-map}. 
Similarly \eqref{eq:lambda-eta} is evident by following once along the longitude of $K$ as shown in Figure \ref{fig:torus-link-flow-loops}, and noting that it winds twice around $\eta_1$ in the negative direction, and $2p+1$ times around $\eta_2$ also in the negative direction.
Taken together, equations \eqref{eq:tilde-B}, \eqref{eq:xi-mu} and \eqref{eq:lambda-eta} yield a relation between $\lambda$ and $\mu$ which contains $A_{K}(\lambda,\mu)$ as one of its factors. An illustration of this will be given in sections \ref{sec:unknot-aug-curve} and \ref{sec:trefoil-aug-curve}.

\begin{figure}[h!]
\begin{center}
\includegraphics[width=0.5\textwidth]{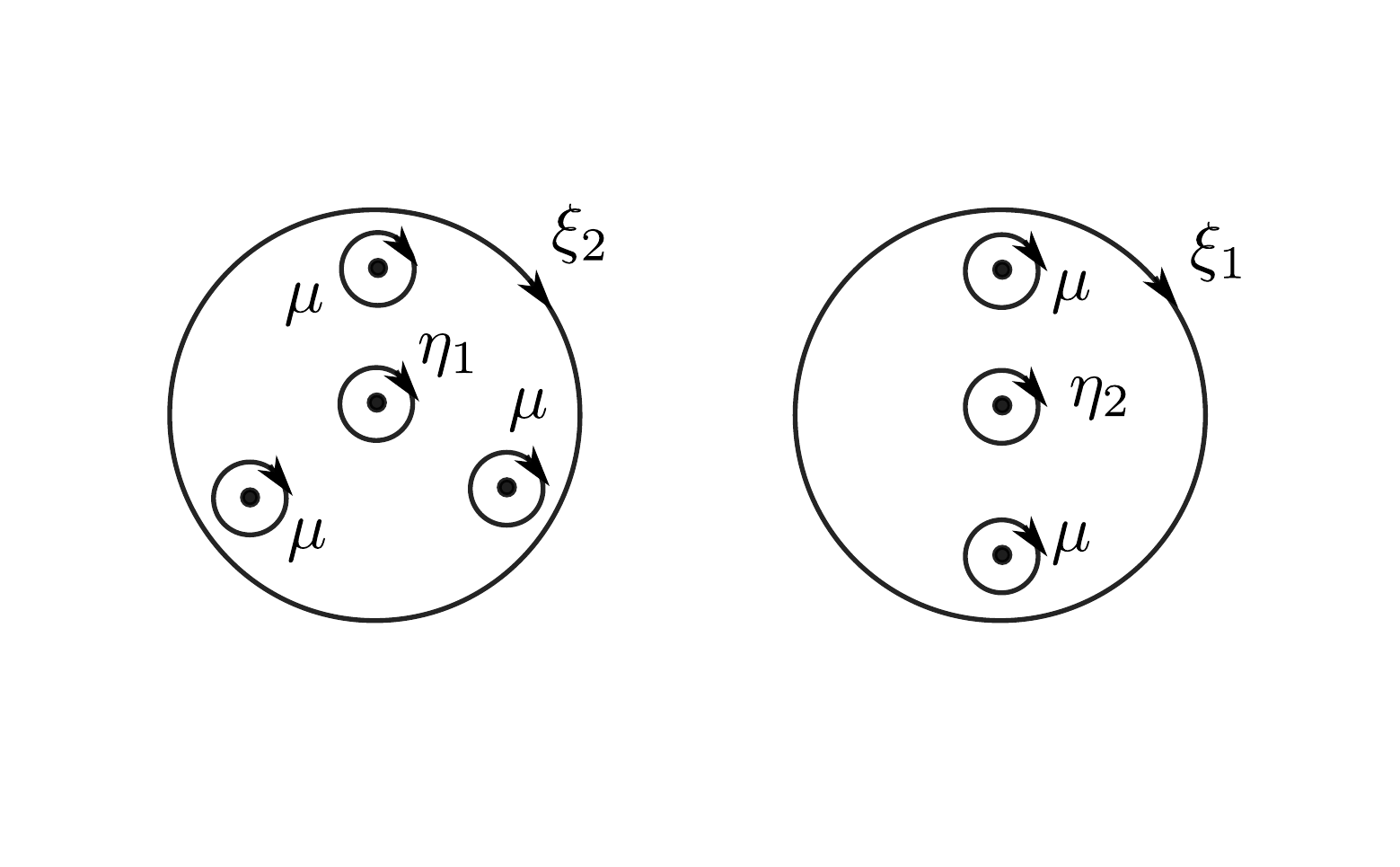}
\caption{Illustration of \eqref{eq:xi-mu} and \eqref{eq:lambda-eta} for the case $p=1$. If the trefoil is removed by setting $\mu=1$, we recover $\xi_1=\eta_2$ and $\xi_2=\eta_1$ (up to signs), as appropriate for the Negative Hopf link complement, see \cite[Section 7.2.1]{Ekholm:2024ceb}. On the other hand, trivializing the Hopf longitudinal holonomies $\eta_2=\eta_1=1$ gives $\xi_1=\mu^2,\xi_2=\mu^3$ (up to signs).}
\label{fig:homology-map}
\end{center}
\end{figure}

\section{Quantizations of augmentation curves and connections to M theory}\label{eq:quantum-branches}

Knowledge of the open string partition function on fibered knot complements allows for concrete investigations of quantizations of augmentation varieties. There are two standard branches in the augmentation curve of any knot: the conormal and the complement branch.
The open string partition function of the conormal branch is given by the generating series of HOMFLYPT polynomials, while for fibered knots, that of the complement branch is given by \eqref{eq:localization-formula-skein}.
Each of these is annihilated by a specific quantization of the classical augmentation curve. In this section we ask whether these quantizations agree or not.

We develop a general definition of phases of augmentation curves motivated by physics: lifting topological string theory to M-theory and taking the M-circle to infinite radius leads to a tropical limit of the augmentation curve which defines natural `phases' for a brane. These include the complement and the conormal phase, as well as additional phases without straightforward geometric interpretation (except when $a=1$, see \cite{cornwell}). Nevertheless, for each phase we define local canonical coordinates on the quantum torus of the Legendrian torus of the knot and a quantization scheme defining formal counts that are direct counterparts of the $\mathfrak{gl}_1$-skein curve counts for phases with Lagrangians fillings.

\subsection{Quantization scheme}\label{ssec: quantization scheme}
Quantizing a mirror curve can often be thought of as constructing a $D$-module associated to a Lagrangian submanifold of a symplectic manifold. Many approaches to this problem have been developed from a physical/string theoretic point of view, see e.g. \cite{Eynard:2007kz, Dijkgraaf:2007sw, Gukov:2008ve, Dimofte:2011gm, Gukov:2011qp, Grassi:2014zfa, Marino:2023eto}. 

Here we consider the problem of quantizing augmentation curves of knots in this way. This means that we wish to promote a Lagrangian submanifold of $(\IC^*)^2$ defined by a two-variable polynomial $A(x,y)$ (depending on the additional parameter $a$) to an element of the $\mathfrak{gl}_1$ skein module of $T^2$ 
\be
	A(x,y) \quad\to\quad \hat A(\hat x,\hat y) \ \in\  \Sk_{\mathfrak{gl}_1}(T^2)\,.
\ee
Quantization of augmentation curves is expected to yield a $D$-module both from a physical point of view, involving fermions on the curve arising on intersecting branes \cite{Dijkgraaf:2007sw} after a chain of dualities, and from a mathematical point of view by higher genus knot contact homology \cite{Ekholm:2018iso}, the relation between the two frameworks has been discussed in \cite{Aganagic:2013jpa}. One approach to constructing $D$-modules for general knots would be to generalize the elimination theory used to produce the augmentation polynomial to elimination theory in the $\mathfrak{gl}_1$-skein of $T^2$. However, at present no such general result is known. Another approach is to use fermions on the augmentation curve itself, which require taking into account singularities of the augmentation curve that appear for generic knots, and again general results in this are lacking.

Parts of the classical augmentation curve are known to parameterize moduli spaces of Lagrangians $L_\alpha$ with rank-1 local systems that fill the Legendrian knot conormal in the resolved conifold. 
Topological string theory (or equivalently skein valued curve counts) associates to each such Lagrangian a partition function, which takes values in the $\mathfrak{gl}_1$ skein module 
\be
	\psi_\alpha \ \in\  \Sk_{\mathfrak{gl}_1}(L_\alpha)\,.
\ee
In local coordinates (defined below), each $\psi_\alpha$ defines a $D$-module, the corresponding ideal consists of the operators that annihilate $\psi_\alpha$. Below we will check whether such $D$-modules corresponding to spaces of different Lagrangians (e.g. of different topology) are compatible, in the sense that they generated  the `same' ideal $\hat A = 0$.

\begin{remark}
As mentioned above, when elimination works for higher-genus knot contact homology \cite{Ekholm:2018iso}, a universal operator $\hat A$ is constructed from data at infinity, and that operator would annihilates all $\psi_\alpha$. It turns out that $\hat A$ is not entirely independent of branch. It was observed in \cite{Ekholm:2024ceb} that $\hat A$ depends on auxiliary geometric data associated to the Lagrangian fillings on different branches.  
\end{remark}

\subsection{Decompactification is tropicalization}

The augmentation curve of a knot is an algebraic variety $\Sigma_K\subset (\IC^*)^2$. 
Similar curves arise in physics as the moduli space of vacua of a 3d $\CN=2$ QFT $T_{3d}$ on $S^1\times \IR^2$ \cite{Ooguri:1999bv, Dimofte:2010tz, Aganagic:2013jpa}, that also admits several other dual descriptions.
Generally speaking, there is an $SL(2,\IZ)$ duality group that acts on 3d $\CN=2$ QFTs with $U(1)$ gauge group, where the standard generators $S$ and $T$ of $SL(2,\IZ)$ correspond respectively to gauging a $U(1)$ flavor symmetry, and to shifting the Chern-Simons level \cite{Witten:2003ya, Dimofte:2011ju}.
Another duality, specific to augmentation curves, arises from M-theory.
The 3d-3d correspondence provides a realization of $T_{3d}$ by considering the twisted compactification of a single M5 brane (the generalization to a stack of M5 branes gives rise to skein-valued augmentation curves \cite{Ekholm:2019yqp}). on a 3-manifold $M_\alpha$ times $S^1\times_q \IR^2$. 
We denote by $T_\alpha$ the twisted compactification of the Abelian 6d $(2,0)$ theory on $M_\alpha$, possibly specified by additional data (e.g., data encoding a choice of topologically inequivalent classes of flat connections on the same 3-manifold $M_\alpha$).
Different `branches' of $\Sigma_K$ (a precise definition will be given below) correspond to different 3d-3d realizations of $T_{3d}$, related by a duality web shown in Figure \ref{fig:dual-branches}.
\begin{figure}[h!]
\begin{center}
\includegraphics[width=0.6\textwidth]{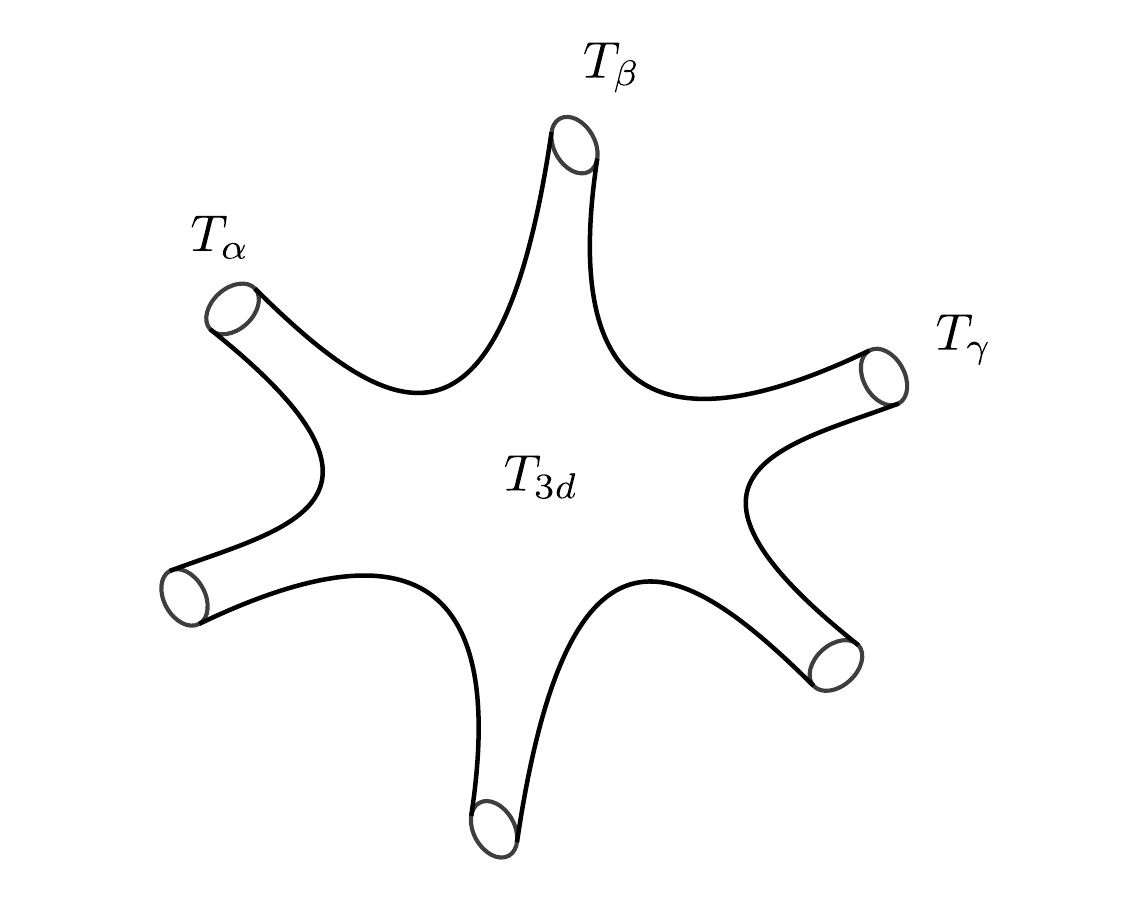}
\caption{Dual Lagrangian descriptions of $T_{3d}$ on $S^1\times \IR^2$ associated with branches of $\Sigma_K$.}
\label{fig:dual-branches}
\end{center}
\end{figure}

In all examples considered, each branch of $\Sigma_K$ leads to a Lagrangian description  $T_\alpha$ which is an Abelian Chern-Simons matter theory encoded by a symmetric quiver $Q$, introduced in \cite{Ekholm:2018eee}. 
From the point of view of a given Lagrangian description $T_\alpha$, branches of $\Sigma_K$ correspond to Coulomb, Higgs, or topological vacua of the 3d $\CN=2$ QFT. 
A duality relating different Lagrangian descriptions $T_\alpha\simeq T_\beta$ then interchanges these interpretations, identifying the Higgs vacua of $T_\alpha$ with, say, topological vacua of $T_\beta$, etc.

The geometric interpretation of phases (and dualities) is less clear. Some branches $T_\alpha$ are associated with a 3-manifold $M_\alpha$, but the topology of $M_\alpha$ is not fixed, and may change from branch to branch
$M_\alpha\not\approx M_\beta$ for $\alpha\ne \beta$.
This then means that QFT dualities translate into topological transitions of 3-manifolds. 
Other branches $T_\beta$ does not have a well-understood geometric interpretation in terms of 3-manifolds.

For branches $T_\alpha$ that do admit a 3-manifold interpretation, cylindrical region $\IC^*_\alpha\subset \Sigma_K$ can be regarded as the moduli space $\IR^{b_1}\times (S^1)^{b_1}$ of a Lagrangian $L_\alpha$ with a rank-1 abelian local system, if $b_1 \equiv (L_{\alpha}) = 1$. 
When such patches $\C_\alpha^\ast$ and $\C_\beta^\ast$ overlap in $\Sigma_K$, there are simultaneous dual interpretations as moduli spaces of both $L_\alpha$ and $L_\beta$.

The change in topology of $M_\alpha$ over a smooth moduli space $\Sigma$ is in string theory and M-theory an effect of open M2 branes ending on the M5 brane \cite{Aganagic:2001nx}. 
From the viewpoint of QFT, M2 branes that wrap a cycle in $L$ times the circle descend to BPS vortices.%
(In fact, transitions among different branches are often identified (in a suitable Lagrangian description) with points where the FI parameter vanishes. Here vortex contributions to the partition functions formally diverge. A dual Lagrangian description related by particle-vortex duality (the $S$ transform) becomes more appropriate to describe the local dynamics in this case.)
A 3d theory $T_{3d}$ compactified on $S^1\times \IR^2$ may be regarded as a 2d $(2,2)$ theory $T_{2d}$ on $\IR^2$, whose fields coincide with Fourier modes of 3d fields complexified by gauge and flavour $S^1$-holonomies
\be
	S^1\ \text{reduction:} \quad T^{\IR^3}_{3d} \to T^{S^1\times \IR^2}_{3d}\approx T_{2d}\,.
\ee
The complexified (Fayet-Ilioupoulos) couplings and (vectormultiplet scalar) expectation values provide local coordinates on $\IC^*\times \IC^*$, where $\Sigma_K$ traces the space of vacua $y(x)$ for any given coupling $x$.

From the point of view of $S^1$ compactification, $\Sigma_K$ is presented as a (dual) circle fibration over the moduli space of $T_{3d}$ on $\IR^3$. 
Singularities of the moduli space, where certain BPS states would become massless, are smoothed out in $T_{3d}^{S^1\times \IR^2}$, because BPS particles with worldline wrapping $S^1$ contribute finite-size corrections to the low energy effective action \cite{Aharony:1997bx}.
There is a different circle fibration for each duality frame $T_\alpha$, corresponding to different ways in which 3d dualities of $T_{3d}^{\mathbb{\IR}^{3}}$ descend to 2d dualities of $T_{3d}^{S^1\times \IR^2}\approx T_{2d}$ \cite{Aharony:2017adm}.

We next consider the connection to tropical geometry.
Taking the radius $R$ of $S^1$ to infinity $R\to\infty$ gives $\IR^3$ from $\IR^2\times S^1$, all $S^1$-holonomies of the 3d QFT are forced to be trivial. On the 3d-3d dual side this turns off the moduli for the flat connection on $M_\alpha$, since the 2-form on the M5 brane cannot be reduced to a connection 1-form along $S^1$ anymore.
As holonomies are frozen to zero, only geometric moduli of the undelying 3-manifold $L_\alpha$ vary and only half of the dimension of the space of vacua remain.
Moreover, from a dynamical perspective the $R\to\infty$ limit suppresses contributions to the low energy effective action from finite-mass BPS states wrapping $S^1$. These BPS states descended from M2 branes responsible for smoothing the junctions among cylindrical regions, hence in the limit the space of vacua is expected to become singular.
The result of taking $R\to\infty$ is expressed mathematically as taking the tropical limit
\be
	S^1\ \text{decompactification:}\quad \Sigma_K \to \Trop_K\,.
\ee
An example is given in Figure~\ref{fig:tropical-phases}, see \cite[Section 5.3]{Ekholm:2019lmb} for further details.

\begin{figure}[h!]
\begin{center}
\includegraphics[width=0.6\textwidth]{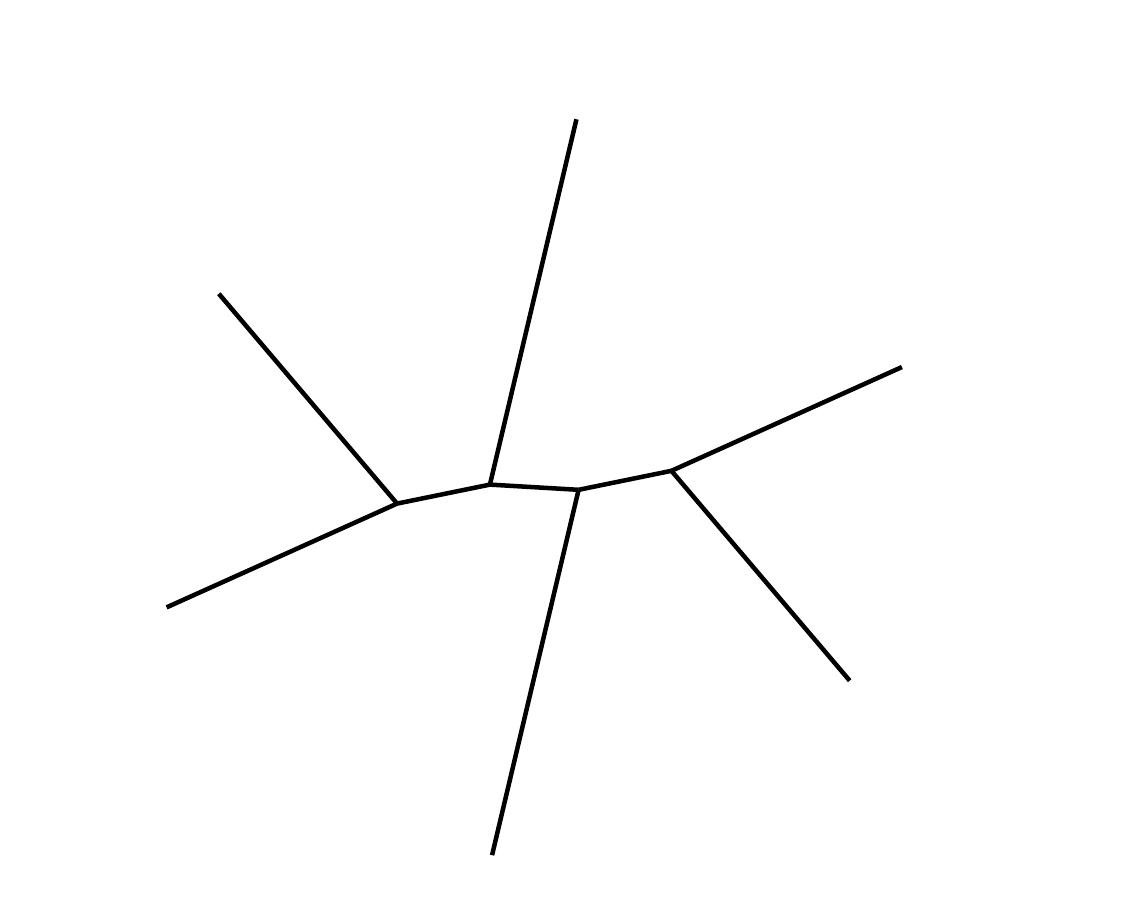}
\caption{The $R\to\infty$ limit of the curve in Figure \ref{fig:dual-branches}, the tropical curve $\Trop_K$.}
\label{fig:tropical-phases}
\end{center}
\end{figure}

In general, the tropical limit of an augmentation curve
\be\label{eq:augmentation-curve}
	\Sigma_K \subset(\IC^*)^2  : \qquad
	\sum_{i,j\in\mathbb{Z}}c_{i,j}a^{v_{i,j}}x^iy^j = 0 
\ee
where $a,x,y\in\mathbb{C}^*$, $v_{i,j}\in\mathbb{Z}$, and $c_{i,j}\in\mathbb{Z}$,
is defined as the set of edges of the graph of the piecewise-linear function, see e.g., \cite{brugalle2015brief} for notions of tropical geometry.
\be\label{eq:trop-curve-def}
	\Trop_K\subset\IR^2:
	\qquad
        \text{Max}\{v_{i,j}*\log|a|+i*\log|x|+j*\log|y|:\forall i,j\in\mathbb{Z}\}\,.
\ee
We point out that in the tropical limit, numerical coefficients are suppressed by $R\to\infty$, however closed string moduli such as $a$ are retained and become real-valued (like ambient coordinates $x, y$). This is because K\"ahler moduli are complexified by the $B$-field, which arises from the circle reduction of the M-theory 3-form, and this complexification is turned off in the decompactification limit.

Tropical curves have both internal edges attached to two vertices, and external edges attached to only one vertex. 
As stratified spaces, tropical curves are characterized by incidence relations among edges.
This structure is piecewise constant as one varies the continuous moduli, which in the case of $\Sigma_K$ include $|a|$ the (exponentiated) volume of $\IP^1$ in the resolved conifold. However at the real-codimension 1 locus defined by $|a|=1$, the structure of a tropical curve can change significantly. 
This leads to two distinct phases for tropical augmentation curves, corresponding to $|a|<1$ and $|a|>1$, each with its own collection of edges and vertices.
In addition, in the limits $|a| \to 0$ or $|a| \to \infty$ all internal branches become external. We illustrate these phenomena with a few examples in Section \ref{sec:tropical-examples}.
%

\subsection{Local open string moduli}\label{sec:local-moduli}
The open string partition function (open curve counts) involves the choice of open string moduli that measure the area of holomorphic curves. That is, the open topological string partition (or its logarithm, the free energy) function should be expressed as a formal power series in the (exponentiated, complexified) area of worldsheet instantons $\xi^\beta = e^{-\text{area}_{\mathbb{C}}(\beta)}$. Here $\beta\in H_2(X,L)$ is the relative homology class of a curve, and $|\xi^\beta|<1$ is contained within the unit disk for all $\beta$. Famously, the choice of such a parametrization is not unique due to a choice of framing \cite{Aganagic:2001nx}.

\subsubsection{Seifert global coordinates.}\label{eq:Seifert-global-coordinates}
Consider now the augmentation curve $\Sigma_K\subset (\C^\ast)^2$ of a knot $K$. There is a canonical choice of coordinates $(x,y)$ on $(C^\ast)^2$ associated to two external edges of $\Sigma_K$ associated to the 
conormal Lagrangian filling of the Legendrian conormal torus $\Lambda_K\subset U^\ast S^3$, and the complement filling $M_K\approx S^3\setminus K$. 

The conormal branch of the classical augmentation variety encodes holomorphic disks with boundary on the conormal Lagrangian $L_K$ of the knot (shifted off of $S^3$) in the resolved conifold $X= \CO(-1)\oplus \CO(-1)$. 
For fibered knots, the complement branch similarly encodes counts of holomorphic disks with boundary on $M_K$ (see \cite{Ekholm:2020lqy} for a discussion of the non-fibered case).

The coordinates on $(\C^\ast)^2$ are $y$ corresponding to the meridian in $\Lambda_K$, the cycle that bounds in $L_K$, and $x$ corresponding to the longitude, the cycle that bounds in $M_K$ which gives the Seifert framing.  
In these coordinates, the conormal branch of the augmentation curve is asymptotic to a puncture at $(x,y)=(0,1)$ and the complement branch is asymptotic to $(x,y)=(1,0)$.

\subsubsection{Definition of branches}
A branch of the augmentation curve corresponds to a pair $(e,v)$, where $e$ is an edge of the tropical diagram attached to a vertex $v$.
The set of branches is then $\{(e_\alpha,v_\alpha)\}_{\alpha}$, pairs of edges and vertices such that $v_\alpha\in \partial e_\alpha$.
Each internal edge defines two branches, while each external edge defines one. 

Any edge $e$ of the tropical diagram is characterized by a vector $(r_1, r_2) \in \IZ^2 / \IZ_2$. Alternatively we assign to each edge 
\be\label{eq:edge-def}
	\text{slope}: \ \ \frac{r_2}{r_1}\,,
	\qquad
	\text{orbifold index}: \ \ \gcd(|r_1|, |r_2|)\,.
\ee
The edge associated with the conormal branch has type $(1,0)$, while that of complement branch has type $(0,1)$, both have orbifold index $1$. The $(r_1, r_2)$-types of other edges can be determined by resolving a balancing condition at each vertex (with orientation of all edges flowing towards the vertex, one imposes that the sum over incident edges $\sum_{i} (r_{1,i}, r_{2,i}) = (0,0)$).

\subsubsection{Coordinates on branches}
We assign local coordinates to each branch $(e_\alpha, v_\alpha)$ in such a way that $v_\alpha$ is located at $(1,1)$.
If the edge $e_\alpha$ is of type $(r_1, r_2)$, we define the primitive vector $(r_1',r_2')$ with
\be\label{eq:sign-xy-def}
	r_i' := \sigma_{\alpha} \frac{r_i}{\gcd(|r_1|, |r_2|)}
\ee
where $\sigma_\alpha =\pm 1$ is chosen in such a way that $(r_1', r_2')$ points \emph{towards} $v_\alpha$.
Then the longitude and meridian coordinates of the branch are defined as follows
\be\label{eq:local-coordinates}
\begin{split}
	x_\alpha &= \xi_x \, x^{s_1'}y^{s_2'}\,,\\
	y_\alpha &= \xi_y \, x^{-r_2'}y^{r_1'}\,, 
\end{split}
\ee
where $s_1', s_2'$ are coprime integers obeying
\be
	r_1' s_1'  + r_2' s_2' = 1
\ee
and $\xi_x, \xi_y \in \IC^*$ are normalization coefficients chosen to ensure that $v_\alpha$ is at $(|x_\alpha|,|y_\alpha|)=(1,1)$.
Note that this definition does not uniquely fix $s_1', s_2'$, but leaves the freedom to change framing
\be
	(s_1', s_2') \sim (s_1' - f r_2', s_2' + f r_1')\,, \quad f\in \IZ\,. 
\ee

In particular we have for the conormal $\alpha=l$ and complement $\alpha=m$ branches in Seifert framing
\be\label{eq:conormal-complement-duality}
	\left\{\begin{split}
	x_l  &= x \\
	y_l  &= y \\
	\end{split}\right.
	\qquad\qquad
	\left\{\begin{split}
	x_m  &= y \\
	y_m  &= x^{-1} \\
	\end{split}\right.
\ee
The fact that these are related to each other by $S\in SL(2,\IZ)$ is a special feature of Seifert framing\,.
A change of framing for the conormal branch, i.e. taking $x\to x(-y)^f$ leads to a `tilting' of any branch that is not horizontal. This applies in particular to the complement branch, see Figure \ref{fig:framing-example} for an example.
The integer $f$, normalized so that $f=0$ corresponds to Seifert framing, corresponds to the self-linking number of $K\subset S^3$, as defined by the push-off corresponding to the framing vector field.

\begin{figure}[h!]
\begin{center}
\includegraphics[width=0.3\textwidth]{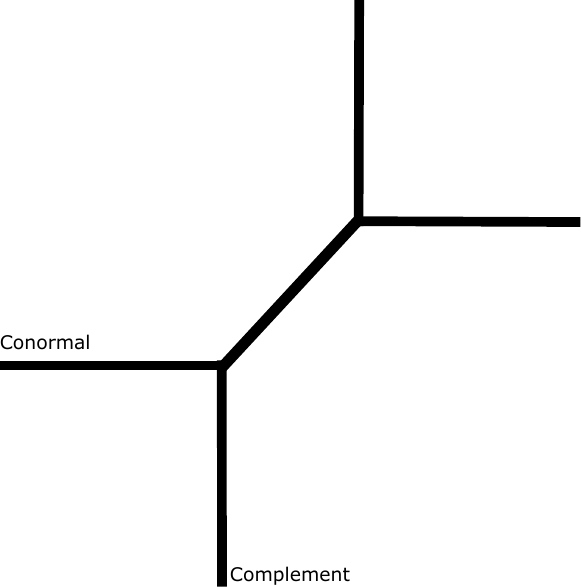}~~~~~~~~~~~~~~~~~~~~~~~~~~~~~~
\includegraphics[width=0.3\textwidth]{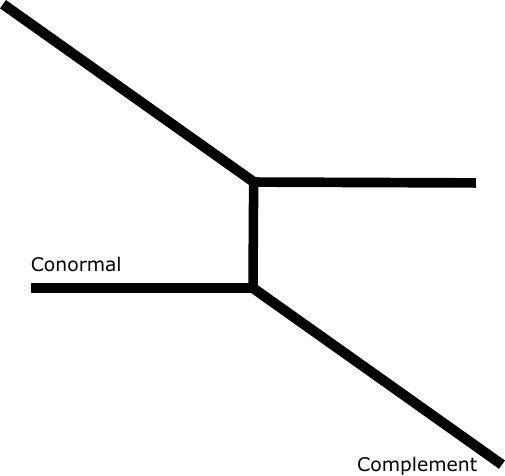}
\caption{
Left: tropical diagram in Seifert framing. Right: a different choice of (conormal) framing changes the slope of the complement branch.
}
\label{fig:framing-example}
\end{center}
\end{figure}

\subsubsection{Large volume open string coordinates.}
If $(x_\alpha, y_\alpha)$ are coordinates on an external branch, as the area of worldsheet instantons becomes large (hence $x_\alpha \sim e^{-\text{area}}\to 0$), their contributions to the meridian holonomy must be trivial and therefore $y_\alpha \sim e^{\oint_{\text{meridian}}{\nabla}}\to 1$. For internal branches the limit works the same after a suitable rescaling of the closed string moduli that effectively decompactifies the internal edge into an half infinite external edge. 

Note that $|x_\alpha|<1$ ($v_{\alpha}$ is placed at $x_\alpha=1$ and the sign in \eqref{eq:sign-xy-def} is such that $|x_\alpha|$ decreases along the edge $e_\alpha$). 
For internal branches $|x_\alpha|$ is also bounded from below by the length $\ell(e_\alpha)$ of the internal edge $e_\alpha$ $e^{-\ell(e_\alpha)} <|x_\alpha | <1$.
From \eqref{eq:trop-curve-def} we have $\ell(e_\alpha) = 2k \log |a|$ for some $k\in \IQ$, where $a$ is the closed string modulus of the resolved conifold. 

Taking this into account, the open string partition function can be organized as powers series:
\begin{itemize}
\item On external branches:
\be
	\psi_\alpha = \sum_{m,n\geq 0} c_{m,n}(q)  a^{2m} x_\alpha^{n}\,.
\ee
\item On internal branches:
\be
	\psi_\alpha = \sum_{l,m,n\geq 0} c_{l,m,n}(q) a^{2m} x_\alpha^{n}  (a^{2k} x_{\alpha}^{-1})^{l}\,.
\ee
\end{itemize}
In both cases, the expansion parameters lie within the unit disk.

\begin{remark}[Irrational open Gromov-Witten invariants]\label{rmk:irrational-OGW}
The formal disk potential (and its higher-genus counterpart) sometimes have \emph{irrational} coefficients, i.e., the formal disk potential has the form
\be
	W^{\mathrm{formal}}_{\alpha} = \sum_{l,m,n\geq 0} w_{l,m,n} a^{2m} x_{\alpha}^{n} a^{2k} (x_\alpha^{-1})^l
	\quad\text{with}\quad
	w_{l,m,n}  \in \IR \setminus \IQ\,.
\ee
see Section \ref{trefoil-int-bra}. In the examples below, such phenomena appear for branches of orbifold index $>1$ and  introducing a generic resolution of the orbifold restores rationality of curve counts. 
Thus, here irrational open Gromov-Witten invariants can be understood as the orbifold limit of rational curve counts where infinite towers of contributions are summed together in the limit. 

\end{remark}

\subsection{Consistency of quantization schemes}\label{sec:globalp-consistency}
Recall from Section \ref{ssec: quantization scheme} of quantization of augmentation curves $A_K(x,y)=0$. In this section we will present general formulas that check how the quantization $\hat A_K(\hat x,\hat y)$ transform between coordinates $(\C^\ast)_\alpha$ along $\Sigma_K$. 

Consider a branch $T_\alpha$ of $\Sigma_K$ with local coordinates $(x_\alpha,y_\alpha)$ in $(\C^\ast)^2$. Using the coordinate change in \eqref{eq:local-coordinates} we obtain a local equation 
\be
A_{K,\alpha}(x_\alpha,y_\alpha) =  \sum_{i,j} c^{(\alpha)}_{i,j}(a) \, x_\alpha^i y_\alpha^j = 0.
\ee
for $\Sigma_K$. Assume now further that we knot the wave function $\psi_\alpha(x_\alpha)$ along $T_\alpha$. Then $\psi_\alpha(x_\alpha)$ defines a quantization of $A_{K,\alpha}$:
\be\label{eq:quantum-curve-alpha}
	\hat A_{K,\alpha} =  \sum_{i,j} c^{(\alpha)}_{i,j}(q,a) \, \hat x_\alpha^i \hat y_\alpha^j\qquad \in \, \Sk_{\mathfrak{gl}_1}(T^2)
\ee
where $\hat x_\alpha,\hat y_\alpha$ are generators of the quantum torus algebra, $\hat y_\alpha \hat x_\alpha = q^{2} \hat x_\alpha\hat y_\alpha$ that quantizes the Poisson algebra of functions on $(\IC^*)^2$, by the requirement that
\be
\hat A_{K,\alpha}\psi_\alpha =0.
\ee 
The coefficients $c^{(\alpha)}_{i,j}(q,a) $ are polynomials in $q,a$ with integer coefficients
with classical limit the augmentation curve \eqref{eq:augmentation-curve}: 
\be
	\lim_{q\to 1} c^{(\alpha)}_{i,j}(q,a) = c_{i,j}(a)\,.
\ee

Since the coordinates $(x_\alpha,y_\alpha)$ correspond to curves on $\Lambda_K$ with intersection number $1$ there is an $SL(2,\mathbb{Z})$ transformation to Seifert coordinates $(x,y)$ and we get on the $\mathfrak{gl}_1$ quantum torus algebra a change of coordinates 
\be\label{eq:local-quantum-coordinates}
\begin{split}
	\hat x_\alpha &= \tilde \xi_x \, \hat x^{s_1'}\hat y^{s_2'}\,,\\
	\hat y_\alpha &= \tilde \xi_y \, \hat x^{-r_2'}\hat y^{r_1'}\,, 
\end{split}
\ee
with $\lim_{q\to 1} \tilde \xi_i = \xi_i$ in \eqref{eq:local-coordinates}.
This allows us to express $\hat A_{K,\alpha}$ in terms of global quantum torus variables $\hat x, \hat y$, and in particular to compare $\hat A_{K,\alpha}(\hat x,\hat y)$ to $\hat A_{K,\beta}(\hat x,\hat y)$ for $\alpha\ne \beta$.
As we shall see, quantizations on different branches generally give \emph{different} operators, but the differences can be understood and the operators should regarded as equivalent in an appropriate sense
\begin{remark}
In previous work, the quantized operator $\hat A_K$ of $A_K$ of one branch has been used unaltered for all other branches, and the corresponding partition functions $\psi_\alpha$ have been obtained by solving the corresponding (adapted) $q$-difference equation. Here we have explicit knowledge of partition functions on two branches and we find that such calculations sometimes involves additional change of coordinates (also for the recusrsion relation) between branches.
\end{remark}

For the conormal branch of $\Sigma_K$ the partition function is given by HOMFLYPT polynomials colored by symmetric partitions
\be\label{eq:conormal-psi-general}
	\psi_{L_K} = \sum_{n\geq 0} H_{(n)}(a,q) \, (x_{l})^n\,.
\ee
The corresponding quantization $\hat A_{K,l}$ therefore coincides with quantum curves studied commonly in the literature on quantum knot invariants \cite{Garoufalidis:2016zhf, 2013NuPhB.867..506F}.
Less is known about curve counts for other Lagrangian fillings. Our main Theorem \ref{t:skeinvalued knot complement} gives $\psi_{M_K}$ and correspondingly $\hat A_{K,m}$ for torus knots.

\subsection{Examples of tropical augmentation curves}\label{sec:tropical-examples}

In this section we give examples of tropical limits of augmentation curves.

\subsubsection{Unknot}
The augmentation curve and its tropicalization are
\be
	\Sigma_{0_1} :\qquad 1-x-y+a^2xy = 0
\ee
\be
	\Trop_{0_1}:\qquad \text{Max}\{0,u,v,t +u+v\}
\ee
where $u=\log|x|,v=\log|y|$ and $t=\log |a^2|$.
The tropical diagram is shown in Figure \ref{fig:unknot-trop}
\begin{figure}[h!]
\begin{center}
\includegraphics[width=0.2\textwidth]{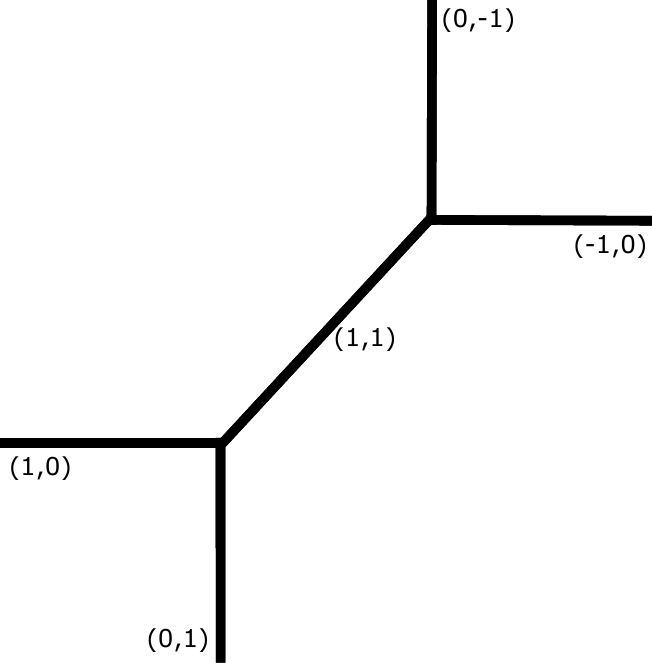}~~~~~~~~~~~~~~~~~~~~~~~~~~~~~~
\includegraphics[width=0.2\textwidth]{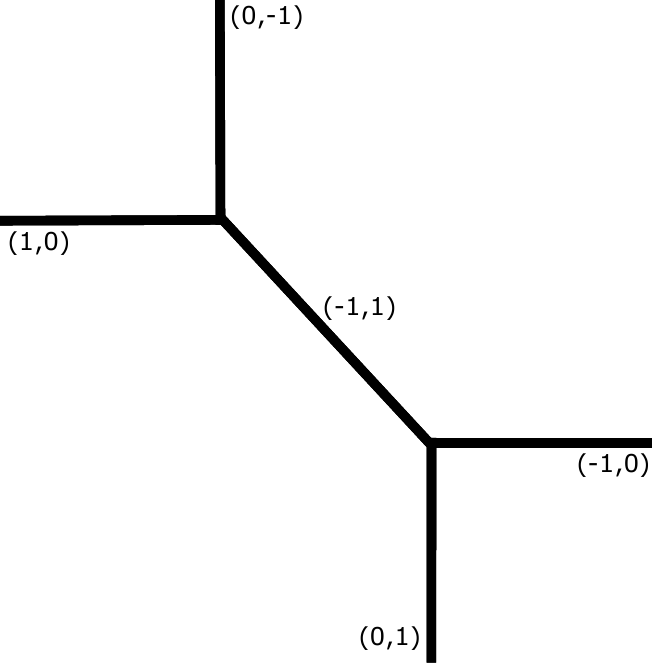}
\caption{Tropical augmentation curve of unknot. The figure on the left is for $|a|<1$, whereas the figure on the right is for $|a|>1$.}
\label{fig:unknot-trop}
\end{center}
\end{figure}

\subsubsection{Reduced trefoil}
The \emph{reduced} augmentation curve for the trefoil knot is (see e.g. \cite{ng2014topological}),
\[
	\Sigma_{3_1} :\qquad  
	1-x-y+xy-2xy^2+2a^2xy^2+a^2xy^3-x^2y^3-a^4xy^4+a^2x^2y^4 = 0.
\]
The corresponding tropical diagram is
\[
	\Trop_{3_1} :\qquad  \text{Max}\{0,u,v,u+v,u+2v,t +u+2v,t+u+3v,2u+3v,2t+u+4v,t+2u+4v\},
\]
where $u=\log|x|,v=\log|y|$ and $t=\log |a^2|$. Its branches and phases are shown in Figure \ref{trefoil-trop-fig-1}.

\begin{figure}[h!]
    \centering
    \includegraphics[scale=0.5]{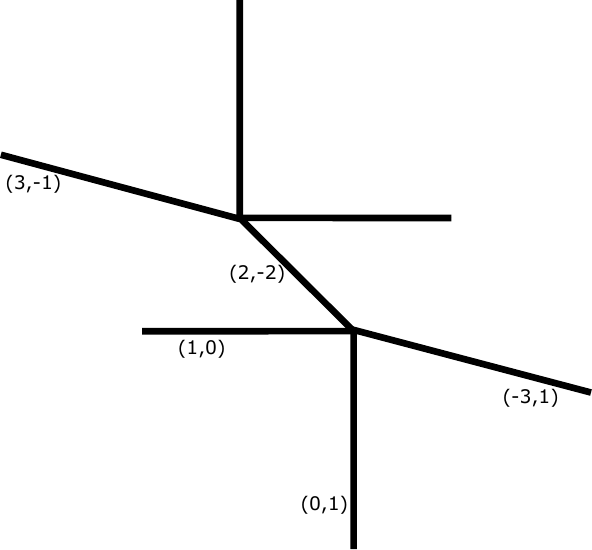}~~~~~~~~~~~~~~~
    \includegraphics[scale=0.5]{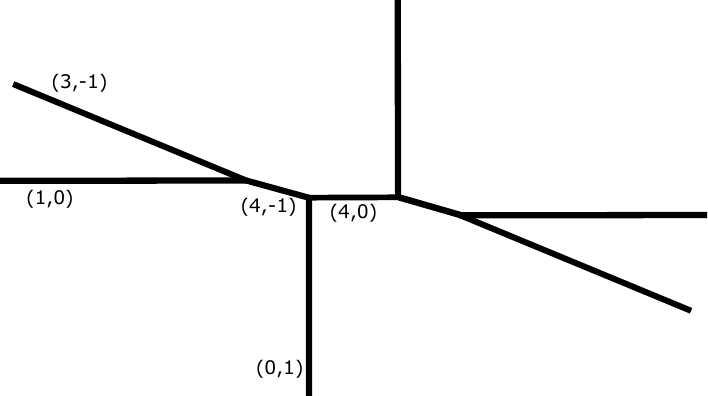}
    \caption{Tropical augmentation curve of trefoil. The figure on the left corresponds to $|a|<1$, and the figure on the right corresponds to $|a|>1$. }\label{trefoil-trop-fig-1}
\end{figure}

\subsubsection{Figure-eight}
The augmentation curve of the figure-eight knot is (see e.g. \cite{klemmetal}), 
\[
	\Sigma_{4_1} :\qquad  
	x-x^2-2a^2xy+2x^2y-a^4y^2+x^3y^2+a^4y^3-a^2x^3y^3+2a^4xy^4-2a^4x^2y^4-a^6xy^5+a^4x^2y^5.
\]
The tropical curve is
\begin{align*}
	\Trop_{4_1}:\qquad \text{Max}\{u,2u,&t+u+v,2u+v,2t+2v, \\
    &3u+2v,2t+3v,t+3u+3v,2t+u+4v,2t+2u+4v,3t+u+5v,2t+2u+5v\},
\end{align*}
where $u=\log|x|,v=\log|y|$ and $t=\log |a^2|$. Its branches and phases are shown in Figure \ref{figure-eight-fig-1}.
\begin{figure}[h!]
    \centering
    \includegraphics[scale=0.45]{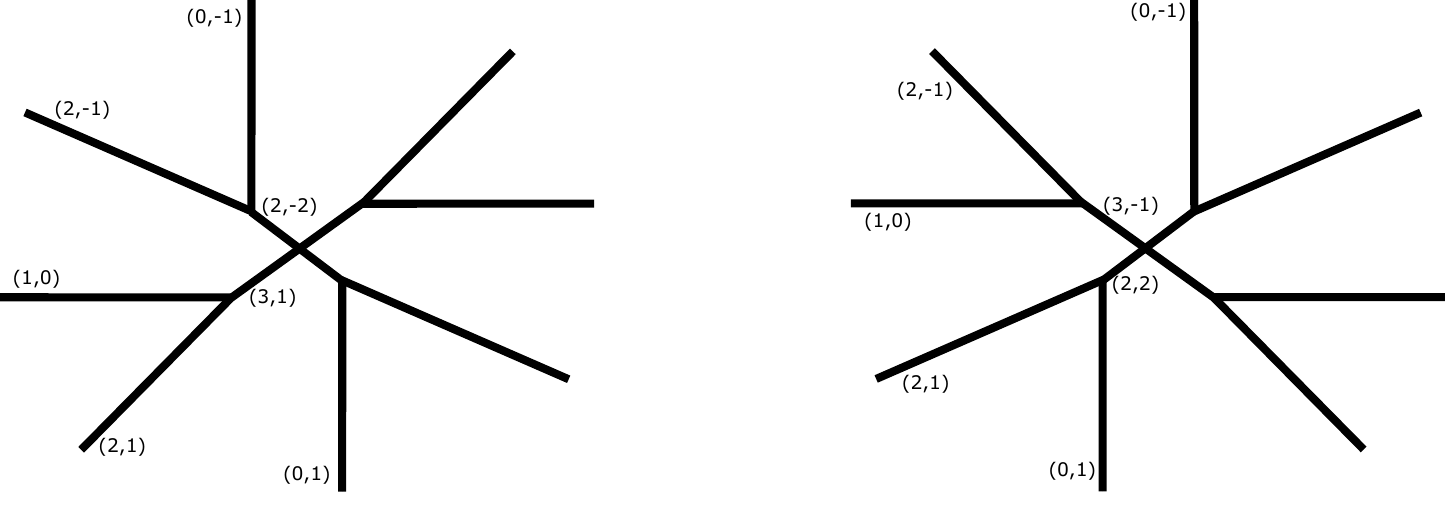}
    \caption{Tropical augmentation curve of $4_1$ knot. The figure on the left shows $|a|<1$, while the figure on the right shows $|a|>1$.}\label{figure-eight-fig-1}
\end{figure}

\section{The $(2,1)$ torus knot -- the twisted unknot}\label{sec:unknot}
In this section we illustrate our formulas for the simplest $(2,2p+1)$-torus knot, setting $p=0$ which gives the twisted unknot.

The augmentation polynomial of the untwisted unknot in Seifert framing and in global Seifert coordinates defined in Section \ref{eq:Seifert-global-coordinates} is
\be\label{eq:unknot-A}
	A_{U,l} = 1-y- x+a^2 xy\,.
\ee
In logarithmic coordinates $u=\log|x|$ and $v=\log|y|$,  the tropical augmentation curve is
\be\label{unknot-trop}
	\Trop_{0_1}: \quad   \text{Max}\{0,u,v,2\log|a|+u+v\}.
\ee
This is shown in Figure \ref{fig:unknot-trop}, with conormal branch labeled by $(1,0)$ and complement branch by $(0,1)$.

\subsection{Conormal branch}

On the conormal branch, $(x_l,y_l)=(x,y)$. 
The conormal partition function is the generating series of symmetrically colored HOMFLYPT polynomials.
We recall that for a generic partition $\nu$ the colored HOMFLYPT of the \emph{untwisted} unknot is
\be
	H_\nu(a,q) = \prod_{\ydiagram{1}\in \nu} \frac{a q^{c(\ydiagram{1})} - a^{-1} q^{-c(\ydiagram{1})} }{q^{h(\ydiagram{1})}-q^{-h(\ydiagram{1})}}
\ee
with $c(\ydiagram{1})$ and $h(\ydiagram{1})$ respectively denoting the content and hooklength of a box in the Young diagram representing $\lambda$.
Specializing to symmetric partitions gives $H_{(r)}(a,q) = a^{-r}q^r (a^2;q^2)_r / (q^2;q^2)_r$. 
Passing to the unknot with a positive or negative twist amounts to rescaling $H_{(r)}$ by $a^{\pm 1}$, which gives the following conormal partition function
\be
	\psi_{L_{U^\pm}}(x,q,a)
	= 
	\sum_{r\geq0}\frac{(a^2;q^2)_r}{(q^2;q^2)_r} \, (a^{-1\pm 1} x)^r
	=
	\frac{(a^{1\pm 1}x;q^2)_\infty}{(a^{-1\pm 1}x;q^2)_\infty}
\ee
For convenience we have rescaled $x$ by $q^{-1}$ and absorbed the factor $q^{r}$.
This defines the following quantizations of the augmentation curve for the unknot
\be
	\hat A_{U^\pm,l} = 1-\hat{y}- a^{-1\pm 1} \hat{x}+a^{1\pm 1}\hat{x}\hat{y}\,.
\ee
Observe that choosing the positive twist corresponds to Seifert global coordinates defined in Section \ref{eq:Seifert-global-coordinates}.
In keeping with our general framework we will instead be interested in the negatively twisted unknot.
The resulting curve is however not expressed in global Seifert coordinates due to the fact that $y\to 0$ would give $x\to a^2$. Therefore Seifert coordinates are obtained by a rescaling of $x$ by $a^2$, which effectively givesthe same curve as for positive twisting. Therefore, although we work with the negatively twisted unknot, our choice of coordinates will force us to consider the curve $\hat A_{U^+,l}$ in either case.

We recall that the conormal partition function admits a quiver presentation involving a disk and an anti-disk (framed disk) without mutual linking \cite{Kucharski:2017ogk}
\be
	\psi_{L_{U^{\pm}}}(x,q,a)=
	\sum_{d_1,d_2\geq 0}
	q^{d_1^2} \frac{(-a^{1\pm 1} q^{-1} x)^{d_1}}{(q^2;q^2)_{d_1}} \frac{(a^{-1\pm 1}x)^{d_2}}{(q^2;q^2)_{d_2}}
\ee

\begin{figure}[h!]
\begin{center}
\includegraphics[width=0.4\textwidth]{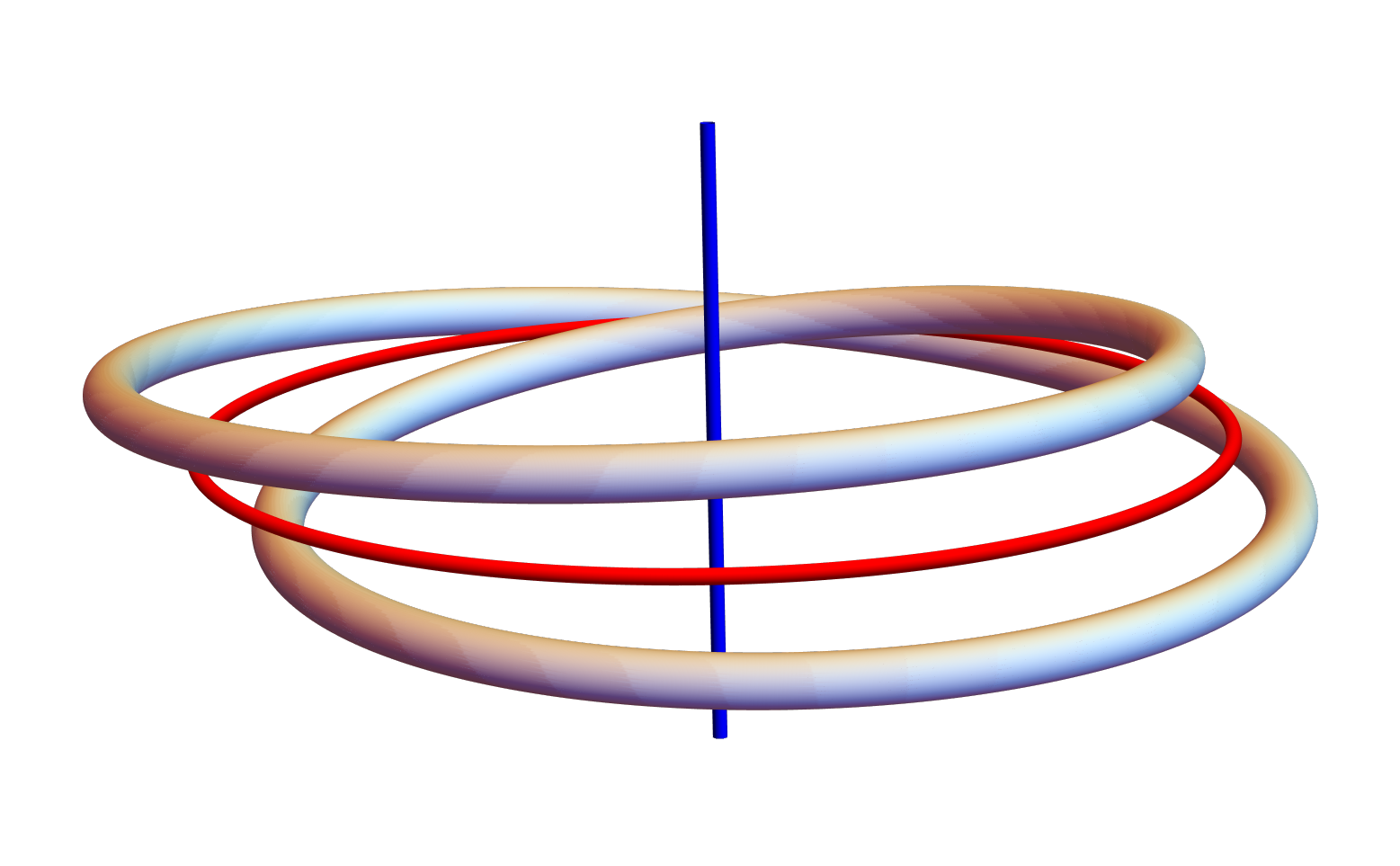}
\caption{The Unknot-Hopf cobordism (negative twist).}
\label{fig:unknot-cobordism}
\end{center}
\end{figure}

\subsection{Complement branch}\label{sec:unknot-complement-branch}
From the above discussion we see that switching between positive and negative twist corresponds to rescaling $x$. 
When working on the complement this rescaling becomes inessential, because oure definition of Seifert global coordinates from Section \ref{eq:Seifert-global-coordinates} anyways requires $x\to 1$ as $y\to 0$.  Thus in both cases, a suitable rescaling would lead to the same classical augmentation curve and same local coordinates. Without loss of generality we will henceforth focus on the positive twist only.

The cobordism between the twisted unknot Legendrian and the Hopf link Legendrian is shown in Figure \ref{fig:unknot-cobordism}.
We consider the partition function \eqref{eq:gl1-formula-2-2p+1-knots} specialized to $p=0$. 
This reads
\be
\begin{split}
	\psi_{M_U}
	& =
	\sum_{n_1, n_2 \geq 0}
	H_{(1)^{n_1},(n_2)}(a,q)
	\,
	q^{- n_2^2 - 2 \, n_1 n_2}
	\,
	(- q \,\mu^2)^{n_1}
	(-  q^{2}\,\mu)^{ n_2}
	\\
	& \mathop{=}^{\eqref{eq:ZMK-final-torus-knots}}
	\sum_{d_1, d_2, d_3, d_4, d_5\geq 0}\sum_{d_6=0}^{1}
	q^{d_1^2
	-d_3^2
	-3 d_5^2
	-2 d_3 d_1
	-2 d_4 d_1
	-2 d_5 d_1
	-3 d_6^2
	-2 d_2 d_3
	-2 d_2 d_4
	-2 d_3 d_4
	-2 d_2 d_5
	-4 d_3 d_5
	-4 d_4 d_5
	-6 d_3 d_6
	-6 d_4 d_6
	-6 d_5 d_6}
	\,
	\\&\qquad
	\qquad
	\qquad\times
	\frac{(-q\mu)^{2 d_1}}{(q^2;q^2)_{d_1}} 
	\frac{(a^2 q^2 \mu^2)^{d_2}}{(q^2;q^2)_{d_2}}
	\frac{(-q^3 \mu)^{d_3}}{(q^2;q^2)_{d_3}} 
	\frac{(a^2 q^2 \mu)^{d_4}}{(q^2;q^2)_{d_4}}
	(-a^2 q^3 \mu^{3})^{d_5} (q^3 \mu^{3})^{d_6}
	\,.
\end{split}
\ee

We have found numerically that this admits the much simpler form
\be
	\psi_{M_U} = \frac{(q^2 y;q^2)_\infty}{(q^2 a^2 y;q^2)_\infty}\,,
\ee
which moreover has a simple quiver description based on a disk and an antidisk
\be
	\psi_{M_U} = \sum_{d_1, d_2\geq 0} q^{d_2^2} \frac{(q^2a^2y)^{d_1}}{(q^2;q^2)_{d_1}}\frac{(-qy)^{d_2}}{(q^2;q^2)_{d_2}}\,.
\ee
From these expressions it is easy to see that we have the following quantization of \eqref{eq:unknot-A} for the unknot complement $M_U$
\be
	\hat A_{U,m} = 1 -  \hat y -   \hat x + q^2a^2 \hat x \hat y\,.
\ee
Comparing with the conormal quantum curve, we find that in the case of the unknot, the complement branch quantization of the augmentation curve agrees with the conormal branch quantization after the change of variables $a\to a q$
\be\label{eq:unknot-conormal-complement-relation}
	\hat A_{U,m}(\hat x, \hat y, a) = \hat A_{U,l}(\hat x,\hat y, qa)\,.
\ee
Note that this is compatible with the $\mathfrak{gl}_1$ skein specialization of the skein-valued unknot recursion for conormal and complement obtained in \cite[Section 6.3]{Ekholm:2024ceb}. The uplift of this formula to the worldsheet skein module would involve shifting $a^2\to a^2 a_M a_L$, corresponding to the fact that any curve that intersects $S^3$ will also intersect $L$ and $M$, since their connected sum is homeomorphic to $S^3$, and since we are counting curves with $M$ very close to the zero section.

Correspondingly, we also find that our result agrees with the $F_K(a,q,x)$ invariant computed in \cite[equation (54)]{Ekholm:2020lqy}, upon making the following variable identifications
\be
	\psi_{M_U}(a,q,y) = F_K(a^2q^2,q^2,y)\,.
\ee

\begin{remark}[$q$-shift from brane recombination] \label{rmk:4-chain-shift}
The shift $a\to aq$ in \eqref{eq:unknot-conormal-complement-relation} has a geometric origin. It arises by viewing the complement Lagrangian as obtained by performing surgery on the conormal and the zero-section of $T^*S^3$, see \cite{Ekholm:2024ceb}.
Recall that the power of $a$ counts intersections with the 4-chain of $S^3$ in $T^\ast S^3$, and that in the $\mathfrak{gl}_1$-skein on $L$ intersections with the 4-chain on $L$ gives powers of $q$ ($a=q^N$ in the $\mathfrak{gl}_N$-specialization of the skein), the shift $a\to q a$ corresponds to the $4$-chain of $M_K$ being a smoothing of the sum of 4-chains of $L_K$ and $S^3$.
\end{remark}

\subsection{The augmentation curve}\label{sec:unknot-aug-curve}
We illustrate the new derivation of the augmentation curve from the cobordism to the Hopf link following Section \ref{sec:aug-curves}.

The twisted Hopf link augmentation variety gets modified by the cobordism to \eqref{eq:tilde-B}, which in this case evaluates to
\be
\begin{split}
	\tilde B_1 & = 1-\eta_1 - \xi_1\eta_1\eta_2^{-1}+a^2 \xi_1\eta_2^{-1} \\
	\tilde B_2 & = 1-\eta_2 - \xi_2\eta_1^{-1}\eta_2^{-1} + a^2 \xi_2  \eta_1^{-1} \\
\end{split}
\ee
while coordinates $(\lambda,\mu)$ for the unknot Legendrian torus holonomies are related to the Hopf holonomies by the specialization of \eqref{eq:xi-mu} and \eqref{eq:lambda-eta}
\be
	\xi_1 = -\mu^2\,,\qquad
	\xi_2 = -\mu\,,\qquad
	\lambda^{-1} = \eta_1^2\eta_2\,.
\ee 
Taken together these reduce to a subvariety of $(\IC^*)^2$ with coordinates $\lambda,\mu$ which contains \eqref{eq:unknot-A} as a factor
\be
	(1 - \mu-\lambda + a^2 \lambda\mu) (\dots) = 0\,.
\ee

\subsection{Internal branch}

The internal diagonal edge shown in Figure \ref{fig:unknot-trop} is of type $(r_1,r_2) = (1,1)\sim (-1,-1)$. There are two branches associated with this edge: one corresponding to the left-bottom vertex (at $x=y=1$) and one associated with the top-right vertex (at $x=y=a^{-2}$). Working with the former, we identify $(r_1', r_2')=(-1,-1)$ and choose $s_1'=-1, s_2'=0$. From \eqref{eq:local-coordinates} we deduce that the appropriate local coordinates are
\be
	x_\alpha = x^{-1}\,, \qquad y_\alpha = -x y^{-1}\,.
\ee

To quantize this branch we only require that the partition function admits a quiver description. The quiver matrix can be deduced from the disk potential. To obtain this we write the classical curve in local coordinates
\be
	A_{U,\alpha} = 1-y_\alpha+x_\alpha y_\alpha -a^2 x_\alpha^{-1}
\ee
which gives 
\be
	W^{U,\alpha} = \int \log y_\alpha\, d\log x_\alpha
	= 
	\Li_{2}(x) + \Li_2(a^2 x^{-1}) \,.
\ee
The same disk potential is given by the semiclassical limit of the following quiver partition function
\be
	\psi_{\text{diagonal}}
	=
	\sum_{d_1, d_2\geq 0} q^{d_1^2 +d_2^2} \frac{(- q^{\beta_1} x_\alpha)^{d_1}}{(q^2;q^2)_{d_1}}\frac{(-a^2 q^{\beta_2} x_\alpha^{-1})^{d_2}}{(q^2;q^2)_{d_2}} = (q^{\beta_1+1} x_\alpha ;q^2)_\infty(q^{\beta_2+1} a^2x_\alpha^{-1} ;q^2)_\infty\,,
\ee
with $\beta_1, \beta_2$ unknowns to be determined (one of them can be absorbed by rescaling $x_\alpha$).
Noting that
\be
	\hat y_\alpha\cdot \psi_{\alpha} = \frac{1-q^{\beta_2-1}a^2 x_\alpha^{-1}}{1-q^{\beta_1+1}x_\alpha}\psi_{\alpha},
\ee
we deduce that the quantum curve on the branch under study is
\be
	\hat A_{U,\alpha} = 1 -  \hat y_\alpha -q^{\beta_2-1}a^2 \hat x_\alpha^{-1} +q^{\beta_1+1}\hat x_\alpha \hat y_\alpha.
\ee
As before, we might absorb one of the $\beta_i$ into a rescaling of $x$. The other remains as an undetermined parameter.

\section{The $(2,3)$ torus knot -- the trefoil}\label{sec:trefoil}

Next we consider the negative trefoil knot $3_1$, corresponding to $p=1$, see Figure \ref{fig:trefoil-cobordism}. 
The full augmentation curve for the negative trefoil knot in Seifert framing is known to be, see e.g. \cite[Section 5]{ng2014topological} (with identifications $\lambda= x,\mu = y a^{2}, U = a^{2}$).
\be\label{eq:31-Aug-curve}
	1-y-x+x y-2 x y^2+2 a^2 x y^2+a^2 x y^3-a^4 x y^4-x^2 y^3+a^2 x^2 y^4 = 0.
\ee
Note that $x,y$ are normalized according to our definition of Seifert global coordinates.
Let us define the curve $A_{3_1}$ by introducing an additional branch as follows
\be\label{eq:31-full-A-class}
	A_{3_1} = (1-a^2 y^2) 
	\left(
	1-y-x+x y-2 x y^2+2 a^2 x y^2+a^2 x y^3-a^4 x y^4-x^2 y^3+a^2 x^2 y^4
	\right),
\ee
motivated by the fact that the quantum curve will necessarily feature the additional terms, as we will see shortly. 

\begin{figure}[h!]
\begin{center}
\includegraphics[width=0.4\textwidth]{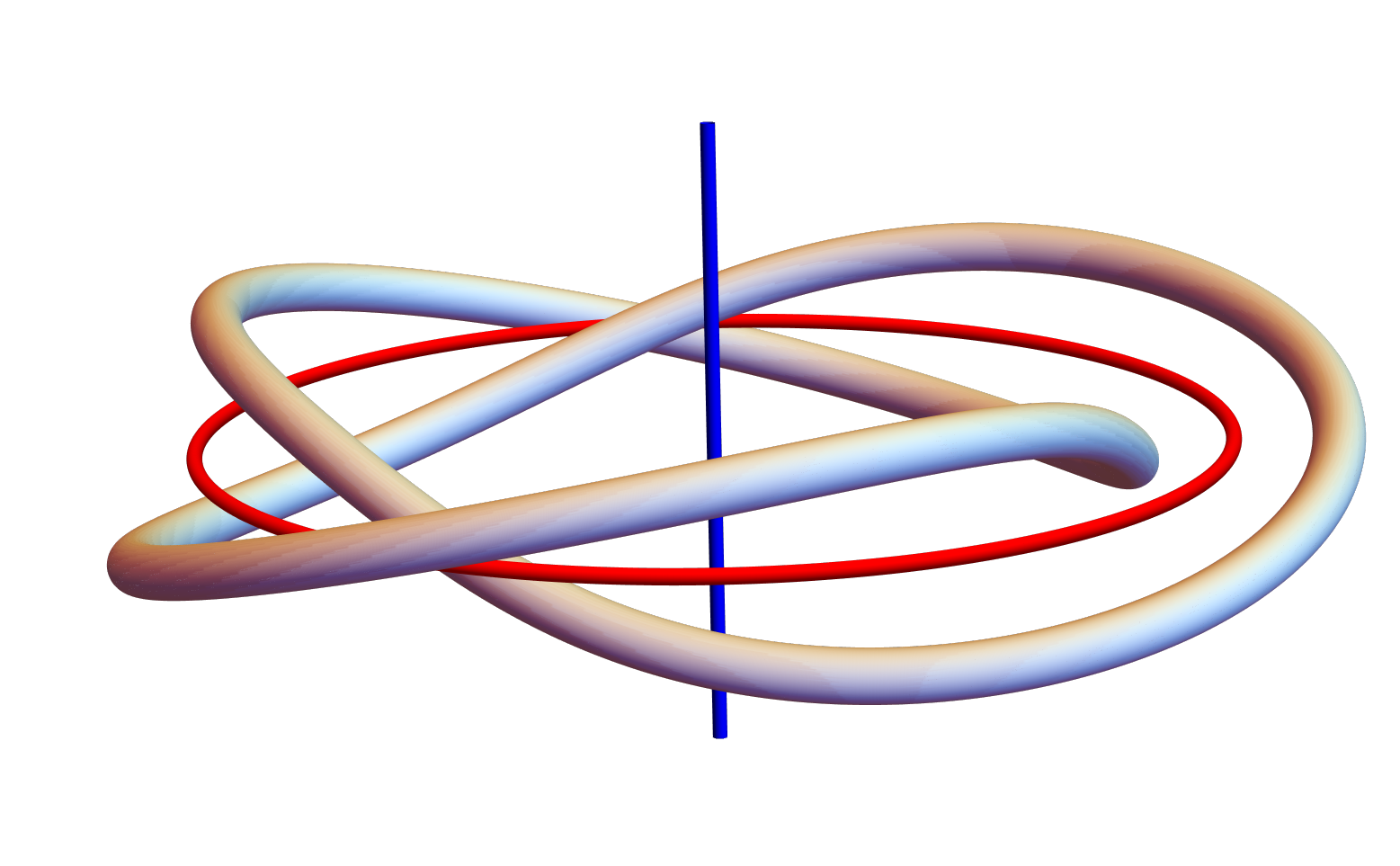}
\caption{The (negative) trefoil knot.}
\label{fig:trefoil-cobordism}
\end{center}
\end{figure}

In logarithmic coordinates $u=\log|x|$ and $v=\log|y|$, the tropical curve is given by
\begin{equation}
	\Trop_{3_1}:\quad \text{Max}\{0,u,v,u+v,u+2v,2\log a+u+2v,2\log a+u+3v,2u+3v,4\log a+u+4v,2\log a+2u+4v\}.
\end{equation}
The corresponding diagram is shown in Figure \ref{trefoil-trop-fig-1}.
There are now three types of external edges: horizontal edges corresponding to the conormal filling, vertical edges corresponding to the complement filling, and a new kind of diagonal edges that in the limit $a=1$ correspond to nontrivial flat connections on the complement branch, see \cite{cornwell}.
Note also that the number of internal edges changes depending on whether $|a|<1$ or $|a|>1$.

\subsection{Conormal branch}
In blackboard framing the symmetrically colored HOMFLYPT polynomials for the negative trefoil are
\be
	H^{\mathrm{bb}}_{(r)}(a,q) = 
	q^{-3r^2} 
	\left(\frac{q^2}{a^2}\right)^r
	\frac{(a^2;q^2)_r}{(q^2;q^2)_r} \, \left[ \sum_{k=0}^{r} \frac{(q^2;q^2)_r}{(q^2;q^2)_k (q^2;q^2)_{r-k}} q^{2k(r+1)} (a^2q^{-2};q^2)_k \right]
\ee

We will work in Seifert framing, which amounts to multiplying $H_r$ by $q^{3 r(r-1)}$ and for convenience we introduce an additional linear rescaling by $(q a^2)^r$
\be
	H_{(r)}(a,q) := q^{3 r(r-1)} \left({q a^2}\right)^r \, H^{\mathrm{bb}}_{(r)}(a,q)
\ee 
The partition function is the generating series of symmetrically colored (unreduced) HOMFLYPT polynomials
\be
\begin{split}
	\psi_{L_K}(x,q,a)
	& = \sum_{r\geq0} H_{(r)}(a,q) x^r \\
	& = \sum_{r\geq 0} \frac{(a^2;q^2)_r}{(q^2;q^2)_r} \, \left[ \sum_{k=0}^{r} \frac{(q^2;q^2)_r}{(q^2;q^2)_k (q^2;q^2)_{r-k}} q^{2k(r+1)} (a^2q^{-2};q^2)_k \right]
	x^r\,.
\end{split}
\ee
This is annihilated by the following quantization of the augmentation curve 
\be\label{tref-con-recur-oper}
\begin{split}
	\hat A_{L_K}
	& =
	1-\hat y-\frac{a^2 \hat y^2}{q^6}+\frac{a^2 \hat y^3}{q^6}
	\\
	&
	+\hat x \left(a^6 \hat y^6-a^4 \left(q^2+1+q^{-2}\right) \hat y^4-a^4 q^2 \hat y^5+a^2  \left(q^2+q^4\right) \hat y^4+a^2 \left(q^2+1+q^{-2}\right) \hat y^2- \left(q^2+q^4\right) \hat y^2+q^2\hat y-1 \right)
	\\
	& 
	+\hat x^2
	\left(- q^8 \hat y^3 + a^2 q^8 \hat y^4 + a^2 q^{14} \hat y^5 - a^4 q^{14} \hat y^6\right)
	\\
\end{split}
\ee
Note that in the classical limit this curve recovers the classical curve \eqref{eq:31-full-A-class}, which includes additional factors compared to the augmentation curve \eqref{eq:31-Aug-curve}.
The conormal partition function admits a quiver presentation involving six disks \cite{Kucharski:2017ogk}
\be
	\psi_{L_K}(x,q,a)=
	\sum_{d_1,\dots, d_6\geq 0}
	q^{
	(C_{L_{3_1}})_{ij} d_i d_j
	}
	\prod_{i=1}^{6} \frac{x_i^{d_i}}{(q^2;q^2)_{d_i}}
\ee
with quiver adjacency matrix
\be
	C_{L_{3_1}} = \left(
\begin{array}{cccccc}
 0 & 0 & 1 & 2 & 1 & 2 \\
 0 & 1 & 1 & 2 & 1 & 2 \\
 1 & 1 & 2 & 2 & 2 & 3 \\
 2 & 2 & 2 & 3 & 2 & 3 \\
 1 & 1 & 2 & 2 & 3 & 3 \\
 2 & 2 & 3 & 3 & 3 & 4 \\
\end{array}
\right)
\ee
and variables
\be
	x_1 =  x\,,\quad
	x_2 = -\frac{a^2}{q}x\,,\quad
	x_3 = q^2 x\,,\quad
	x_4 = -a^2 qx\,,\quad
	x_5 = -\frac{a^2}{q}x\,,\quad
	x_6 = \frac{a^4}{q^2}\,.
\ee

\subsection{Complement branch}\label{sec:trefoil-complement-branch}
The partition function of holomorphic curves is given by \eqref{eq:gl1-formula-2-2p+1-knots} specialized to $p=1$
\be
\begin{split}
	\psi_{M_{3_1}}
	& =
	\sum_{n_1, n_2 \geq 0}
	H_{(1)^{n_1},(n_2)}(a,q)
	\,
	q^{-3 n_2^2 -2 \, n_1 n_2}
	\,
	(-q \,\mu^2)^{n_1}
	(- q^{4}\,\mu^{3})^{ n_2}
\,.
\end{split}
\ee
The corresponding quiver description is given by \eqref{eq:ZMK-final-torus-knots}.
Basic curves include two disks wrapping twice around the meridian of the knot, two disks wrapping three times the meridian, an annulus and an anti-annulus whose boundaries wrap overall five times. Therefore, the complement branch partition function in quiver representation is
\be\label{eq:complement-psi-31}
\begin{split}
	\psi_{M_{3_1}}
	&=
	\sum_{d_1,\dots, d_5\geq 0}\sum_{d_6=0}^{1}
	q^{
	(C_{M_{3_1}})_{ij} d_i d_j
	}\frac{\prod_{i=1}^{6}y_i^{d_i}}{\prod_{i=1}^{4}(q^2;q^2)_{d_i}}
\end{split}
\ee
with quiver adjacency matrix
\be
	C_{M_{3_1}} =
	\left(
\begin{array}{cccccc}
 1 & 0 & -1 & -1 & -1 & 0 \\
 0 & 0 & -1 & -1 & -1 & 0 \\
 -1 & -1 & -3 & -3 & -4 & -5 \\
 -1 & -1 & -3 & -2 & -4 & -5 \\
 -1 & -1 & -4 & -4 & -5 & -5 \\
 0 & 0 & -5 & -5 & -5 & -5 \\
\end{array}
\right)
\ee
and variables
\be
	y_1 = -q y^2\,,\quad
	y_2 = a^2 q^2 y^2\,,\quad
	y_3 = -q^5 y^3\,,\quad
	y_4 = a^2 q^4 y^3\,,\quad
	y_5 = -a^2 q^5 y^5\,,\quad
	y_6 = q^5y^5\,.
\ee
The partition function determines (up to the freedom to rescale $\mu$) the following quantization of $A_{3_1}$
\be\label{eq:complement-curve-result}
\begin{split}
	\hat A_{M_K}
	& = 
	\frac{(\hat y -1) \left(a^2 \hat y ^2-q^4\right) }{q^4}
	\\
	&
	+\left(-\frac{a^2 \hat y^2 \left(a^2 \hat y^3+\left(a^2-1\right) \hat y^2-1\right)}{q^4}-\frac{\left(a^2-1\right) \hat y^2 \left(a^2 \hat y^2-1\right)}{q^2}+\left(a^2-1\right) \hat y^2+\frac{a^4 \hat y^4 \left(a^2 \hat y^2-1\right)}{q^6}+\hat y-1 
	\right)\hat x
	\\
	&+ 
	\frac{\hat y ^3 \left(a^2 \hat y ^2-1\right) \left(q^2-a^2 \hat y \right)}{q^6}
	\hat x^2
\end{split}
\ee
where operators $\hat x$, $\hat y$ act by
\be\label{eq:trefoil-complement-polarization}
    \hat xf(y)=f(q^{-2} y)\,,\qquad
    \hat yf(y)=yf(y)\,.
\ee
Comparing the conormal and complement recursion operators, namely equations (\ref{tref-con-recur-oper}) and (\ref{eq:complement-curve-result}), we see that $\hat A_{M_K}$ is obtained from $\hat A_{L_K}$ by simultaneously rescaling $\hat x$ and $a$ 
\be\label{eq:conormal-complement-shifts-trefoil}
	\hat A_{M_{3_1}}(\hat x,\hat y,a) = \hat A_{L_{3_1}} (\hat x,\hat y,qa)\,.
\ee
We observe the appearance of a shift of $a$ by $q$, just as in the case of the unknot \eqref{eq:unknot-conormal-complement-relation}. Its origin can be understood by brane recombination as explained in Remark \ref{rmk:4-chain-shift}.

\subsection{The augmentation curve}\label{sec:trefoil-aug-curve}

Before proceeding to other branches, we take a break to discuss how the augmentation curve of the trefoil can be obtained from that of the twisted Hopf link via the cobordism argument of Section \ref{sec:aug-curves}. 
With $p=1$ the twisted Hopf link augmentation variety \eqref{eq:tilde-B} is described by
\be
\begin{split}
	\tilde B_1 & = 1-\eta_1 - \xi_1\eta_1\eta_2^{-1} + a^2 \xi_1\eta_2^{-1} \\
	\tilde B_2 & = 1-\eta_2 - \xi_2\eta_1^{-1}\eta_2^{-3} + a^2 \xi_2  \eta_1^{-1}\eta_2^{-2} \\
\end{split}
\ee
while coordinates $(\lambda,\mu)$ for the unknot Legendrian torus holonomies are now related to the Hopf holonomies by the specialization of \eqref{eq:xi-mu}-\eqref{eq:lambda-eta}
\be\label{eq:xi-mu-trefoil}
	\xi_1 = -\mu^2\,,\qquad
	\xi_2 = -\mu^3\,,\qquad
	\lambda^{-1} = \eta_1^2\eta_2^3\,.
\ee 
Taken together these reduce to a subvariety of $(\IC^*)^2$ with coordinates $\lambda,\mu$ which contains \eqref{eq:31-Aug-curve} as a factor
\be
	(1 -\mu-\lambda+\lambda  \mu-2 \lambda  \mu ^2  
	+2 a^2 \lambda  \mu ^2
	+a^2 \lambda  \mu ^3
	-a^4 \lambda  \mu ^4 -\lambda ^2 \mu ^3 +a^2 \lambda ^2 \mu ^4
	) (\dots) = 0\,.
\ee

\subsection{External diagonal branch}

We now consider the diagonal branch, corresponding to the lower-right external diagonal leg of the tropical diagram in Figure \ref{trefoil-trop-fig-1}.
Here we do not have any geometric interpretation as the moduli space of $A$-branes and cannot compute the partition function $\psi_{\delta}$ directly. We can derive possible partition function as follows:

We first compute the classical local coordinates $(x_\delta, y_\delta)$.
Starting from $(x,y) = (x_{m}, y_{m})$ we change framing to $x=x'(-y')^{-3}, y=y'$. At this point the diagonal branch becomes of slope $(0,1)$, and we can identify $x_{m}=y'$, $y_{m}=(x')^{-1}$.
At the quantum level, the change of framing is implemented by $\hat x = (-q^{-1} \hat y')^{-3} \hat x'$, $\hat y = \hat y'$ while the Fourier transform is  $\hat x''=\hat y'$, $\hat y''=(\hat x')^{-1}$. Finally, we define $\hat x_{\delta} = \hat x''$ and $\hat y_{\delta} = q^{-5} \hat y''$.
Applying these to the complement quantum curve \eqref{eq:complement-curve-result}
\be
	\hat A_{M_K} = f_0(a,q,\hat y) + f_1(a,q,\hat y) \hat x +  f_2(a,q,\hat y)\hat x^2
\ee
we obtain
\be\label{eq:A-hat-diagonal}
\begin{split}
	\hat A_{\delta} 
	& =  f_2(a,q,q^4 \hat x_\alpha) q^{-8}  \hat x_\alpha^{-3}
	+ f_1(a,q,q^4 \hat x_\alpha) \hat y_\alpha
	+ f_0(a,q,q^4 \hat x_\alpha) q^{14}  \hat x_\alpha^{3}\hat y_\alpha^2
	\\
	& = 1-a^2 q^2 {\hat x_\alpha}-a^2 q^8 {\hat x_\alpha}^2+ a^4 q^{10} {\hat x_\alpha}^3
	\\
	& +\big[
	a^6 q^{18} {\hat x_\alpha}^6-a^4 q^{16} {\hat x_\alpha}^5-a^4 \left(q^4+q^2+1\right) q^{10} {\hat x_\alpha}^4+a^2 \left(q^4+q^2+1\right) q^4 {\hat x_\alpha}^2\\
	& +a^2 \left(q^6+q^4\right) q^8 {\hat x_\alpha}^4+q^4 {\hat x_\alpha}-\left(q^2+1\right) q^6 {\hat x_\alpha}^2-1
	\big]\hat y_\alpha
	\\
	& +\left(
	-a^2 q^{22} {\hat x_\alpha}^6+a^2 q^{18} {\hat x_\alpha}^5+q^{18} {\hat x_\alpha}^4-q^{14} {\hat x_\alpha}^3
	\right)\hat y_\alpha^2
\end{split}
\ee
where $\hat x_\delta,\hat y_\delta$ denote generators of the quantum torus adapted to the diagonal branch.

The partition function annihilated by this curve has the following quiver structure
\be\label{eq:Z-diagonal}
	\psi_{\text{diagonal}}(x_\alpha)
	=
	\sum_{d_1,\dots, d_5\geq 0}
	q^{(C_{\text{diagonal}})_{ij} d_i d_j}
	\prod_{i=1}^{5} \frac{x_i^{d_i}}{(q^2;q^2)_{d_i}}
\ee
where the quiver adjacency matrix is
\be
	C_{\text{diagonal}}=\left(
\begin{array}{ccccc}
 0 & 0 & 1 & 1 & 0 \\
 0 & 1 & 1 & 0 & 0 \\
 1 & 1 & 2 & 0 & 0 \\
 1 & 0 & 0 & 0 & 0 \\
 0 & 0 & 0 & 0 & 1 \\
\end{array}
\right)
\ee
and quiver variables are
\be
	x_1 = a^2 q^2 x_\alpha \,,\quad
	x_2 =  -q^3 x_\alpha\,,\quad
	x_3 = q^4 x_\alpha^2 \,,\quad
	x_4 = q^2 x_\alpha \,,\quad
	x_5 = -q x_\alpha \,.
\ee
This quiver corresponds formally to five basic disks, four of which are degree 1 in $x_\alpha$, and one (the third) of degree 2 in the homology generator variable $x_\alpha$. 
The linking of disk 3 with the others is not an integer multiple of two, suggesting a (formal) separtion of its two local boundary strands.
Also worth noting is that disks 4 and 5 form a quasi-cancelling pair: they would cancel exactly except for the fact that disk 4 links once with disk 1, which disk 5 does not.

\subsection{Internal diagonal branch}\label{trefoil-int-bra}

Finally we consider the internal diagonal leg of the tropical diagram in Figure \ref{trefoil-trop-fig-1} for $|a|<1$.
In fact we will restrict to the extremal curve obtained by setting $a=0$ in \eqref{eq:31-full-A-class}, which gives
\be
	1-y-x+xy-2xy^2-x^2y^3 =0 \,,
\ee
whose tropical limit is described by
\be
	\text{Max}\{0,u,v,u+v,u+2v,2u+3v\}.
\ee
The tropical diagram diagram is shown in Figure \ref{fig:trefoil-phases-extremal}, where the internal diagonal edge is now the top-left external diagonal edge, which is of type $(r_1, r_2)=(2,-2)$. This is therefore an example of \emph{orbifold edge} of order $2$. 

\begin{figure}[h!]
    \centering
    \includegraphics[scale=0.6]{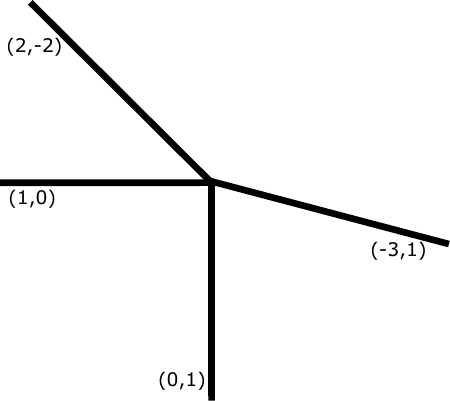}
    \caption{Tropical curve of trefoil in extremal limit}\label{fig:trefoil-phases-extremal}
    \label{fig:trefoil-bottom-phases}
\end{figure}

The classical local coordinates are obtained from \eqref{eq:local-coordinates} with $(r_1', r_2') = (1,-1)$ and $(s_1', s_2')=(0,-1)$ which gives
\be
	x_{\alpha} = y^{-1}\,,\qquad y_\alpha = -xy\,.
\ee
In terms of these, the classical (extremal) curve takes the following form
\be\label{mid-branch-curve}
	(1-y_\alpha)^2=x_\alpha(1-y_\alpha+x_\alpha y_\alpha).
\ee

The disk potential computed from (\ref{mid-branch-curve}) is then
\be
\begin{split}
	W_{\pm}(x_\alpha) 
	& = 
	\int\log y_{\alpha} \, d\log x_\alpha
	\\
	& =\frac{1}{2} \left(-1\pm\sqrt{5}\right) x_\alpha
	+\frac{1}{40} \left(-5\pm3 \sqrt{5}\right) x_\alpha^2
	+\frac{1}{450} \left(-25\pm11 \sqrt{5}\right) x_\alpha^3+O\left(x_\alpha^4\right)\,.
\end{split}
\ee
There are two interesting properties of these expressions, both arising from the fact that the internal branch has orbifold index $2$.
On the one hand, denoting by $y_\pm(x_\alpha)$ the two roots of \eqref{mid-branch-curve}, their product is
\be
	y_+ y_- = 1-x_\alpha
\ee
and therefore
\be\label{eq:orbifold-disk-potentials}
	W_++W_-=-\text{Li}_2(x_\alpha)\,.
\ee
On the other hand, the orbifold potentials $W_\pm$ have irrational coefficients. This is in contrast to what would be expected from usual (open) Gromov-Witten theory, as anticipated in Remark \ref{rmk:irrational-OGW}.

To see the origin of irrationality, we consider a one-parameter deformation defined by
\be
	(1-y_\alpha)^2=x_\alpha(1-y_\alpha+\beta x_\alpha y_\alpha)\,,
\ee
whose tropicalization is
\begin{equation}\label{eq:tropical-bottom-trefoil-deformed}
    \text{Max}\{0,u,v,u+v,2v,2u+v+\log\beta\}\,.
\end{equation}
The tropical diagram for this tropical variety is shown in Figure \ref{Fig-6}.

\begin{figure}[h!]
    \centering
    \includegraphics[scale=0.5]{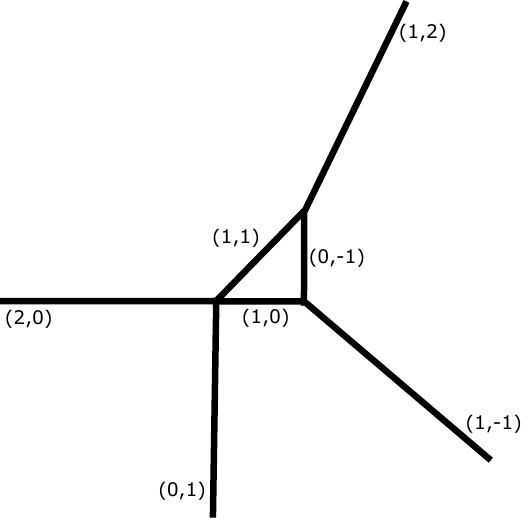}~~~~~~~~~~~~~~~~~~~~~~~~~~~~
    \includegraphics[scale=0.5]{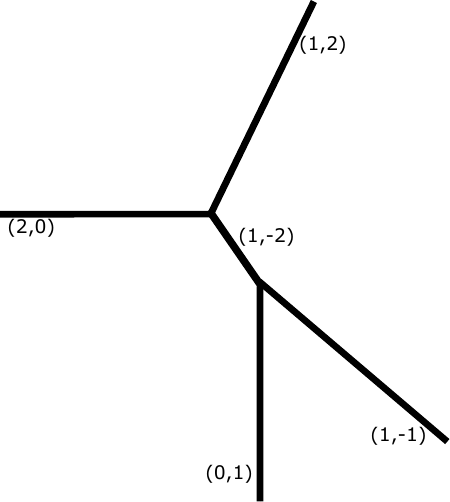}
    \caption{Tropical diagram for \eqref{eq:tropical-bottom-trefoil-deformed}. The horizontal branch is the internal diagonal $(2,-2)$ branch of the extemal trefoil tropical diagram in Figure \ref{fig:trefoil-bottom-phases}. 
    On the left: deformation with $\beta < 1$. On the right: $\beta > 1$.}\label{Fig-6}
\end{figure}

We then consider the $\beta$ deformation of the disk potential
\be
\begin{split}
	W_{\pm}
	&=\frac{1}{2}\left(-1\mp\sqrt{1+4\beta}\right)x_\alpha
	+\left(\frac{\mp(1+2\beta)-\sqrt{1+4\beta}}{8\sqrt{1+4\beta}}\right)x_\alpha^2\nonumber\\ \nonumber\\
	&+\left(\frac{\mp(-1-6\beta-6\beta^2+2\beta^3)-\sqrt{1+4\beta}-4\beta\sqrt{1+4\beta}}{18(1+4\beta)^{3/2}}\right)x_\alpha^3+\mathcal{O}(x_\alpha)^4\,.\\
\end{split}
\ee
This is still an irrational function of moduli. However, viewing $\beta$ as a closed string parameter, the corresponding large-volume expansion near $\beta=0$ features rational coefficients
\be
\begin{split}
    	W_{+}
	&=
	\left(-1-\beta +\beta ^2-2 \beta ^3+5 \beta ^4-14 \beta ^5+42 \beta ^6-132 \beta ^7+429 \beta ^8-1430 \beta ^9+4862 \beta ^{10}+O\left(\beta ^{11}\right)\right)x_\alpha\\
	&+\left(-\frac{1}{4}-\frac{\beta ^2}{4}+\beta ^3-\frac{15 \beta ^4}{4}+14 \beta ^5-\frac{105 \beta ^6}{2}+198 \beta ^7-\frac{3003 \beta ^8}{4}+2860 \beta ^9-\frac{21879 \beta ^{10}}{2}+O\left(\beta ^{11}\right)\right)x_\alpha^2\\
	&+\left(-\frac{1}{9}-\frac{\beta ^3}{9}+\beta ^4-6 \beta ^5+\frac{280 \beta ^6}{9}-150 \beta ^7+693 \beta ^8-\frac{28028 \beta ^9}{9}+13728 \beta ^{10}+O\left(\beta ^{11}\right)\right)x_\alpha^3\\
	& + O(x_\alpha^4)
\end{split}
\ee
and with $W_-$ again given by \eqref{eq:orbifold-disk-potentials}.

With these rational-valued coefficients of the disk potential we get the following open string interpretation. Observe that $W_+$ admits a quiver description with corresponding partition function
\be
	\psi_{\text{internal}}^{a=0}
	= 
	\sum_{d_1,d_2,d_3\geq 0} 
	q^{2d_2^2-d_3^2+2d_1d_3}
	\frac{(q^{\delta_1}x_\alpha)^{d_1}}{(q^2;q^2)_{r_1}}\frac{(q^{\delta_2}\beta)^{d_2}}{(q^2;q^2)_{d_2}}\frac{(-q^{\delta_3}\beta)^{d_3}}{(q^2;q^2)_{d_3}}
\ee
where $\delta_i$ are $q$-charges of the three basic disks that cannot be determined by the disk potential.
This quiver consists of three disks without any linking. Two of the disks have no $x_\alpha$-dependence, and as a consequence the coefficients of $x_\alpha^{r}$ correspond to infinite power series in $\beta$.
 
At the quantum level, the partition function is annihilated by the following operator
\be
	\hat A_{\text{internal}}^{a=0}
	= 1- (1+q^{-2})\hat y_\alpha+ q^{-2} \hat y_\alpha^2 - q^{\delta_1} \hat x_\alpha (1-\hat y_\alpha + q^{1+\delta_1+\delta_3}\hat x_\alpha \hat y_\alpha)
\ee
Note that this holds for arbitrary $\delta_1,\delta_2,\delta_3$. Moreover it does not depend on $\delta_2$. In fact the partition function is not entirely determined by this, but one also needs to impose the closed-string moduli $q$-difference equations, i.e. appropriate $q$-difference equations involving shifts of $\beta$.

\section{Comments on the relation to $\widehat Z$-invariants for knot complements}\label{eq:relation-to-Zhat}

In this concluding section we remark on relations to work on $\widehat Z$ invariants for knot complements, with particular attention to the conjectural interpretation of the latter in terms of curve counts on $M_K$.

\subsection{Curve counting on geometric branches and Chern-Simons theory}

The augmentation curve of any fibered knot admits at least two `geometric' branches, i.e. patches that admit a geometric interpretation as moduli spaces of some Lagrangian $A$-brane: the conormal branch parameterizing the conormal Lagrangian $L_K$ and the complement branch parameterizing the knot complement Lagrangian $M_K$.

The $\mathfrak{gl}_1$-skein valued partition function $\psi_\alpha$ for a geometric branch counts holomorphic curves with boundary on the corresponding Lagrangian $L_\alpha$ in the resolved conifold.
Moreover, by invariance under conifold transition, this also coincides
with the open string partition function in $T^*S^3$ 
\be\label{eq:MK-TS}
	\psi_\alpha = \psi_{\text{top}}(L_\alpha, T^*S^3)\,.
\ee

When $L_\alpha$ is the knot complement Lagrangian $M_K$, the partition function $\psi_{M_K}$ counts worldsheet instantons with boundaries on a single brane on $M_K$ in the background of $N = \log_q a$ branes on the zero section $S^3$.
Moreover, by the topological gauge/string duality of \cite{Witten:1992fb}, $\psi_{M_K}$ corresponds to a partition function of $U(1)$ Chern-Simons QFT on $M_K$ with Wilson lines insertions in correspondence of worldsheet boundaries.

Chern-Simons theory on $M_K$ plays a prominent role in the construction of a three-variable series $F_K(a,q,y)$ for knot complements originally introduced for $a=q^2$ in \cite{gukov2021two} as a $\widehat Z$ invariant for manifolds with boundary \cite{Gukov:2016njj, Gukov:2016gkn, Gukov:2017kmk}, and later generalized to arbitrary $N$ in \cite{Park:2019xey, Park:2020edg, Kucharski:2020rsp,Ekholm:2020lqy, Park:2021ufu,Ekholm:2021irc}.

\subsection{Fourier-Laplace transform approach to computing $F_K$}

The working definition of $F_K$ (with a suitable normalization) adopted in the literature is the Fourier-Laplace transform of the conormal partition function \eqref{eq:conormal-psi-general}
\be
	\psi_{L_K}(a,q,x) \mathop{\longleftrightarrow}^{\text{Fourier}} F_K(a,q,y)\,.
\ee
This is usually defined by writing $H_{(r)}(a,q)$ as a function of $a,q,q^{2r}$ and by replacing $q^{2r} \to y$.
By construction, the resulting power series in $y$ is annihilated by the \emph{conormal} quantum curve 
\be\label{eq:A-FK}
	\hat A_{\text{conormal}}\cdot F_K(a,q,y) = 0\,,
\ee
with the proviso that the $\mathfrak{gl_1}$ torus skein algebra acts by the Fourier-dual representation
\be
	\hat x f(y) = f(q^{-2} y)\,,\qquad
	\hat y \, f(y) = y\, f(y)\,.
\ee

A geometric interpretation of $\widehat Z$ was conjectured in \cite[Section 2.9]{Gukov:2017kmk}, where it was proposed that it should be related to the partition function of open Gromov-Witten invariants of a 3-manifold in its own cotangent bundle $M\subset T^*M$.
If we take $M=M_K$ but replace $T^*M$ with $T^*S^3$, by viewing $M_K\subset S^3$ as a submanifold,
we arrive at the following conjectural interpretation for $F_K$
\be\label{eq:FK-interpretation}
	F_K \mathop{=}^{?}  \psi_{\text{top}}(M_K, T^*S^3)\,.
\ee
By virtue of \eqref{eq:MK-TS}, the conjecture holds if $F_K = \psi_{M_K}$. Moreover, due to \eqref{eq:A-FK} this will be true when the worldsheet quantizations of the agumentation curve on complement and conormal branches agree 
\be
	\hat A_{l} \dot{=} \hat A_{m}\,.
\ee
Here $\dot{=}$ denotes equality up to certain $q$-shifts arising in the context of worldsheet skein theory \cite{Ekholm:2024ceb}, such as those encountered in Remark \ref{rmk:4-chain-shift}.
For this reason, the question of global consistency of quantization discussed in Section \ref{sec:globalp-consistency} is directly related to providing an enumerative interpretation of the $F_K$ invariant.

Our results for the unknot and the trefoil, and more generally, $T_{2,2p+1}$ torus knots indicate that \eqref{eq:FK-interpretation} indeed holds after small adjustments. We next discuss these adjustments.

The first concerns the usual definition of $F_K$ via Fourier-Laplace transform on the colored HOMFLYPT generating series. Observe that this procedure is sensitive to a choice of framing for the conormal branch. A change of framing takes $x\to x (-y)^f$, thereby changing $H_{(r)}$ as a function of $q^{r}$.
From the viewpoint of the tropical diagram, a change of framing can take any non-horizontal edge (i.e. type $(r_1, r_2)$ with $r_2\neq 0$) into a vertical edge (i.e. of type $(0,r_2)$) for a suitable choice of $f$, see Figure \ref{fig:framing-example}.
Since the notion of vertical branch is framing-dependent, the definition of $F_K$ via Fourier-Laplace transform, which relates a horizontal branch to a vertical one, is framing-dependent. As explained in Section \ref{sec:local-moduli}, Seifert framing ensures that the complement branch is vertical, and therefore provides the appropriate framing for computing $F_K$ via Fourier-Laplace transform on HOMFLYPT generating series.

Next, we discuss how the partition function of curve counts on the trefoil complement \eqref{eq:complement-psi-31} compares to $F_{3_1}$ invariant computed in \cite[Section 5.2]{Ekholm:2020lqy}. 
These two are clearly different. Part of this difference stems from the reasons discussed above. Another important difference lies in the fact that \cite{Ekholm:2020lqy} uses the quantum curve from \cite[eq. (2.24)]{2013NuPhB.867..506F}, which is adapted (after setting $t=-1$) to \emph{reduced} normalization of the HOMFLYPT polynomial. At the classical level, the relation between our curve \eqref{eq:31-full-A-class} and the one of \cite{2013NuPhB.867..506F} is obtained by substituting $x\to (1-a x) (1-x)^{-1} a^{-1} y$, $y\to a^{-1} x^{-1}$ and $a\to a^{1/2}$ and multiplying by an overall factor $a^4 x^6(x-1)(1-ax)^{-1}(1-ax^2)^{-1}$. This gives in fact the curve of \cite[eq. (2.25)]{2013NuPhB.867..506F} with $t=-1$. The rational factors in the substitution of the $x$-variable implement the change from unreduced to reduced normalization. The partition function of holomorphic curves $\psi_{m_{3_1}}$ corresponds to the generating series annihilated by the unreduced curve, while $F_{3_1}$ computed in \cite{Ekholm:2020lqy} was annihilated by the reduced curve.

Similar remarks can be applied to the comparison between the partition function of curve counts on $T_{2,2p+1}$ torus knots \eqref{eq:ZMK-final-torus-knots}, and the $F_K$ invariants computed in \cite[eq. (78)]{Ekholm:2020lqy}.

\section*{Acknowledgements}
The work of S.C. is supported by the project “Quiver Structures in Knot Invariants from String Theory and
Enumerative Geometry”, funded by Vergstiftelsen and the Swedish Research Council (VR 2022-06593),
through the Centre of Excellence in Geometry and Physics at Uppsala University. T.E. is supported by the Knut and Alice Wallenberg Foundation, KAW2020.0307 Wallenberg Scholar and
by the Swedish Research Council, VR 2022-06593, Centre of Excellence in Geometry and Physics at Uppsala
University and VR 2024-04417, project grant.
The work of P.L. is supported by the Knut and Alice Wallenberg Foundation, KAW2020.0307 Wallenberg
Scholar and by the Swedish Research Council, VR 2022-06593, Centre of Excellence in Geometry and Physics
at Uppsala University. S.C. is grateful to Lukas Nakamura for discussions.

\appendix

\section{Quiver linking cobordism operator}\label{app:cobordism-quivers}

In this appendix we provide a general description of a cobordism operator $\hat\CO$, relating curve counts encoded by two quivers with the same set of basic curves but different linking among their boundaries.

\subsection{General construction}\label{app:}

We consider a quiver $Q$ with $K$ vertices and adjacency matrix $C$. The partition function defines an element of the skein module of a collection of $K$ solid tori
\be
	Z = \sum_{\bd\in \IZ_{\geq 0}^{\times K}} q^{\bd^t\cdot C\cdot \bd} \prod_{i=1}^{K} \frac{x_i^{d_i}}{(q^2;q^2)_{d_i}}
	\quad\in\quad \Sk_{\mathfrak{gl}_1}(\bigsqcup_{j=1}^{K}S^1\times D^2).
\ee
We wish to build a cobordism operator $\hat\CO$ that relates $Q$ to another quiver $Q$' with the same set of basic curves, but different matrix $C'$
\be\label{eq:cobord-Z}
	Z = \hat\CO\cdot Z'
\ee
Here both $Z, Z'$ belong to $ \Sk_{\mathfrak{gl}_1}(\bigsqcup_{j=1}^{K} S^1\times D^2)$ and $\hat\CO$ belongs to the corresponding skein algebra
\be
	\hat\CO \in \Sk_{\mathfrak{gl}_1}(\bigsqcup_{i=1}^K T^2)\,.
\ee

A systematic construction of $\hat\CO$ can be obtained by recalling that $Z$ (and similarly $Z'$) can be expressed as an operator in the $\mathfrak{gl}_1$ skein algebra of $\bigsqcup_{i=1}^K T^2$ as follows \cite{Ekholm:2019lmb}. Introduce the basis of longitudes and meridians
\be
	\hat x_i, \hat y_i \in \Sk_{\mathfrak{gl}_1}(\bigsqcup_{i=1}^K T^2)
\ee
obeying the quantum torus algebra relations
\be
	\hat y_i \hat x_j = q^{2\delta_{ij}} \hat x_j\hat y_i \qquad
	[\hat x_i,\hat x_j]=[\hat y_i,\hat y_j]=0\qquad i,j\in\{1,\dots,K\}\,.
\ee
Define the variables
\be
	X_i = (-1)^{C_{ii}} q^{C_{ii}-1}\hat x_i\hat y_i^{C_{ii}} \prod_{j<i} \hat y_j^{C_{ij}}
\ee
and the operator
\be\label{eq:quiver-operator}
	\hat \psi := (q X_K;q^2)^{-1}_{\infty}\dots (q X_1;q^2)^{-1}_{\infty}.
\ee
The partition function is obtained by acting with $\hat \psi$ on the (normalized) vacuum of the skein module of $\bigsqcup_{i=1}^K (S^1\times D^2)$, or equivalently by \emph{normal ordering} (see \cite{Ekholm:2019lmb})
\be
	Z =\ : \hat\psi :
\ee

This acquires a geometric interpretation if the different components of $\bigsqcup_{i=1}^K (S^1\times D^2)$ are embedded into a common 3-manifold, such as a conormal solid torus \cite{Ekholm:2018eee}, where longitudes of $S^1\times D^2$ components are mutually linking as encoded by the quiver matrix $C$.
The operator $\hat\psi$ can then be viewed as $K$ insertions of multicoverings of basic holomorphic disks in the empty skein, creating the partition function one basic curve at a time, with linking data encoded by the quiver adjacency matrix $C$ taken into account via the quantum torus algebra relations.

\begin{remark}
This definition of $\hat\psi$ extends in a straightforward way to generalized quivers with disks, annuli and other types of generators associated to vertices, by replacing each of the factors $(q X;q^2)_\infty^{-1}$ with the appropriate multi-covering partition function for the given curve. For example, the multi-covering partition function of annuli of various types are given in \eqref{eq:annuli-explicit}.
\end{remark}

Let $\hat \psi$ and $\hat \psi'$ be the operators corresponding respectively to $Q, Q'$. 
It is clear from \eqref{eq:quiver-operator} that each of them is invertible. 
We define the cobordism operator as
\be\label{eq:cobordism-def}
	\hat\CO := \hat\psi \cdot (\hat\psi')^{-1}\,.
\ee
By definition, this takes $\hat\psi'$ into $\hat\psi$, i.e. $\hat\psi = \hat\CO \cdot \hat\psi'$, 
which after normal ordering (or acting on the vacuum skein element) gives the desired property \eqref{eq:cobord-Z}.

An interesting application of the cobordism operator is that it maps recursion relations of one quiver into those of the other quiver
\be
	\hat A\cdot \psi = 0\qquad \Rightarrow \qquad
	\hat A'\cdot \psi' = 0
\ee
with $\hat A,\hat A' \in \Sk(\bigsqcup_{j=1}^K T^2)$ and
\be
	\hat A' = \hat A\cdot \hat\CO\,.
\ee

\subsection{Example: linking basic disks}\label{sec:toy-example}
As a simple example of cobordism, we consider the linking operation of two basic disks.
Consider a quiver with two vertices and adjacency matrix
\be
	C = \left(\begin{array}{cc}
	0 & 0 \\
	0 & 0 
	\end{array}\right).
\ee
The partition function and the corresponding recursion relations are
\be
	\psi = (x_1;q^2)^{-1} (x_2;q^2)^{-1} 
	\qquad
	\left\{\begin{split}
	\hat A_1 = 1-\hat y_1 - \hat x_1 = 0\\
	\hat A_2 = 1-\hat y_2 - \hat x_2 = 0
	\end{split}\right.
\ee
We will build the cobordism to the quiver with linked vertices
\be
	C' = \left(\begin{array}{cc}
	0 & 1 \\
	1 & 0 
	\end{array}\right)
\ee
whose partition function and recursion relations are
\be
	\psi' = (x_1;q^2)_\infty^{-1} (x_2;q^2)_\infty^{-1} (x_1 x_2;q^2)_\infty
	\qquad
	\left\{\begin{split}
	\hat A_1' = 1-\hat y_1 - \hat x_1 \hat y_2 = 0\\
	\hat A_2' =1-\hat y_2 - \hat x_2 \hat y_1 = 0
	\end{split}\right.
\ee
We define the operator-valued quiver partition functions as follows
\be
	\hat\psi = (\hat x_2;q^2)_\infty^{-1}(\hat x_1;q^2)_\infty^{-1}
	\qquad
	\hat\psi' = (\hat x_2\hat y_1;q^2)_\infty^{-1}(\hat x_1;q^2)_\infty^{-1}
\ee
so that their normal-ordered expressions give back $\psi,\psi'$ respectively.
Then
\be
	\hat \CO := \hat\psi\cdot(\hat \psi')^{-1} =  (\hat x_2;q^2)_\infty^{-1} (\hat x_2\hat y_1;q^2)_\infty
\ee
We can then compute 
\be
\begin{split}
	\hat A_1 \cdot \hat \CO 
	& = (1-\hat y_1 - \hat x_1) \cdot (\hat x_2;q^2)_\infty^{-1} (\hat x_2\hat y_1;q^2)_\infty \\
	& = \CO\cdot (1-\hat y_1) - \hat x_1 \cdot (\hat x_2;q^2)_\infty^{-1} (\hat x_2\hat y_1;q^2)_\infty \\
	& = \CO\cdot (1-\hat y_1) -   (\hat x_2;q^2)_\infty^{-1} (q^{-2} \hat x_2\hat y_1;q^2)_\infty  \cdot \hat x_1\\
	& = \CO\cdot (1-\hat y_1 - (1-q^{-2} \hat x_2\hat y_1)  \hat x_1)\\
	& = \CO\cdot (1-\hat y_1 - \hat x_1 (1- \hat x_2\hat y_1)  )\qquad \text{(use $\hat A_2'$)}\\
	& = \CO\cdot (1-\hat y_1 - \hat x_1 \hat y_2) \\
	& = \CO\cdot \hat A_1'\\
\end{split}
\ee
which holds in the ideal generated by $\hat\psi'$.
A similar computation gives
\be
	\hat A_2 \cdot \hat \CO  = \hat \CO \cdot \hat A_2' 
\ee

\section{Quivers for the twisted Hopf link, and for $(2,2p+1)$-torus knots}\label{app:Hopf}

The HOMFLYPT polynomials $H_{(1)^{n_1},(n_2)}(a,q)$ of the (negative) Hopf link are well-known.
Here we provide a quiver description of these in terms of two disks on each conormal, and two annuli stretching between them and linking with the disks.
The quiver that we obtained is related, but not equivalent to, the conjectural one described in
\cite[Conjecture 2.3]{Ekholm:2024ceb}. 
The difference is due to the twisting of one annulus into an anti-annulus. 

We verified numerically that the (twisted) Hopf partition function admits the following quiver structure.
\be\label{eq:twisted-Hopf-HOMFLYPT}
\begin{split}
	& Z_{\widetilde{\text{Hopf}}}(a,q) 
	:= \sum_{n_1, n_2} H_{(1)^{n_1},(n_2)}(a,q) \xi_1^{n_1} \xi_2^{n_2} 
	\\
	& = 
	\sum_{n_1, n_2}\left\{
	\sum_{{k}=0}^1\left[\sum_{j=0}^{\text{Max}[n_1,n_2]-{k}}\frac{(-1)^{n_1-{k} } a^{2 \left(n_1-{k} \right)} q^{-2 j-2 {k}  \left(n_2+1\right)+2 {k}  n_1+n_1+n_2}(a^{-2};q^2)_{n_1-j-{k}}(a^2;q^2)_{n_2-j-{k}}}{(q^2;q^2)_{n_1-j-{k}}(q^2;q^2)_{n_2-j-{k}}}\right]
	\right\}
	\,
	\xi_1^{n_1} \xi_2^{n_2} 
	\\
	& = \sum_{d_1, d_2, d_3, d_4, d_5\geq 0}\sum_{d_6=0}^{1}
	q^{d_1^2 + d_4^2 + 2 (d_1+d_2) d_6-2 (d_3+d_4) d_6}
	\frac{\xi_1^{d_1}}{(q^2;q^2)_{d_1}} \frac{(-a^2 q \xi_1)^{d_2}}{(q^2;q^2)_{d_2}}
	\frac{(q \xi_2)^{d_3}}{(q^2;q^2)_{d_3}} \frac{(-a^2 \xi_2)^{d_4}}{(q^2;q^2)_{d_4}}
	(-a^2 \xi_1 \xi_2)^{d_5} (\xi_1 \xi_2)^{d_6}
\end{split}
\ee
where we have rescaled $H_{(1)^{n_1}, (n_2)}$ by introducing an overall factor of $a^{n_1+n_2}$ that gives exclusively positive powers of $a$.
Here $\xi_j$ for $j=1,2$ correspond to longitudinal generators of the skein modules $\Sk_{\mathfrak{gl}_1}(S^1\times D^2)_j$, see Figure \ref{fig:torus-link-flow-loops}. 
This is a generalized quiver partition function, where first two vertices of the quiver correspond to unknot disk-antidisk pair on the first conormal, the third and fourth vertices to the unknot disk-antidisk pair on the second conormal, and the last two vertices correspond respectively to an annulus and an anti-annulus.
This expression also encodes the $(a,q)$-degrees of each basic curve, as well as the quiver matrix that encodes linking on each conormal component
\be\label{eq:twisted-Hopf-quiver-matrix}
	C_{\widetilde{\text{Hopf}}} = \left(\begin{array}{cc|cc|cc}
		1 & 0 & 0 & 0 & 0 & 1 \\
		0 & 0 & 0 & 0 & 0 & 1 \\
		\hline
		0 & 0 & 0 & 0 & 0 & -1 \\
		0 & 0 & 0 & 1 & 0 & -1 \\
		\hline
		0 & 0 & 0 & 0 & 0 & 0 \\
		1 & 1 & -1 & -1 & 0 & 0 \\
	\end{array}\right)
\ee

Using this expression for $ H_{(1)^{n_1},(n_2)}(a,q)$ into \eqref{eq:gl1-formula-2-2p+1-knots}  leads to a generalized quiver description of curve counts on $M_K$.
Indeed observe that, introducing  quantum torus dual variables
\be
	\hat\eta_i\hat \xi_j = q^{2\delta_{ij}} \hat\xi_j\hat\eta_i\,,
\ee
corresponding to quantization of the holonomies in Figure \ref{fig:torus-link-flow-loops}, we can write
\be\label{eq:Hopf-Gamma-link}
\begin{split}
	& H_{(1)^{n_1},(n_2)}(a,q)
	\,
	q^{\Gamma_{11} n_1^2 + \Gamma_{22} n_2^2 +2\Gamma_{12} \, n_1 n_2}
	\,
	\xi_1^{n_1}\xi_2^{n_2}
	\\
	=&
	H_{(1)^{n_1},(n_2)}(a,q)
	\,
	q^{\Gamma_{11} n_1+\Gamma_{22}n_2}
	\,
	:
	(\xi_1 \hat\eta_1^{\Gamma_{11}}\hat\eta_2^{\Gamma_{12}})^{n_1}
	(\xi_2 \hat\eta_1^{\Gamma_{12}}\hat\eta_2^{\Gamma_{22}})^{n_2}
	:
\end{split}
\ee
where $:\ \dots\ :$ denotes normal ordering.
This allows to relate the linking matrix of basic curves in $\mathcal{L}_{\alpha}$ (given by \eqref{eq:twisted-Hopf-quiver-matrix})  to the linking matrix of basic curves in $M_K$ 
as follows \cite{Ekholm:2019lmb}
\be\label{eq:torus-knot-complement-quiver-matrix}
\begin{split}
	C_{M_{K}}
	&=
	C_{\widetilde{\text{Hopf}}} + \left(\begin{array}{c|c|c}
		\Gamma_{11} & \Gamma_{12} & \Gamma_{11}+\Gamma_{12} \\
		\hline
		\Gamma_{12} & \Gamma_{22} & \Gamma_{22}+\Gamma_{12} \\
		\hline
		\Gamma_{11}+\Gamma_{12} & \Gamma_{22}+\Gamma_{12} & \Gamma_{11}+\Gamma_{22}+2\Gamma_{12} \\
	\end{array}\right)
	\\
	&=
	\left(\begin{array}{cc|cc|cc}
		1 & 0 & -1 & -1 & -1 & 0 \\
		0 & 0 & -1 & -1 & -1 & 0 \\
		\hline
		-1 & -1 & -2p-1 & -2p-1 & -2p-2 & -2p-3 \\
		-1 & -1 & -2p-1 & -2p & -2p-2 & -2p-3 \\
		\hline
		-1 & -1 & -2p-2 & -2p-2 & -2p-3 & -2p-3 \\
		0 & 0 & -2p-3 & -2p-3 & -2p-3 & -2p-3 \\
	\end{array}\right)
	\,.
\end{split}
\ee

\newpage
\bibliographystyle{unsrt}
\bibliography{bibliography.bib}

\end{document}